\newcolumntype{x}[1]{>{\centering\arraybackslash}p{#1}}
\newtheorem{theorem}{Theorem}
\numberwithin{theorem}{section} 
\newtheorem*{theorem*}{Theorem}
\newtheorem{proposition}[theorem]{Proposition}
\newtheorem*{proposition*}{Proposition}
\newtheorem{lemma}[theorem]{Lemma}
\newtheorem*{lemma*}{Lemma}
\newtheorem{corollary}[theorem]{Corollary}
\newtheorem*{cor*}{Corollary}
\newtheorem*{cj*}{Conjecture}
\newtheorem{definition}[theorem]{Definition}
\newtheorem*{Def*}{Definition}
\def\thmhead@plain#1#2#3{%
  \thmname{#1}\thmnumber{\@ifnotempty{#1}{ }\@upn{#2}}%
  \thmnote{ {\the\thm@notefont#3}}}
\let\thmhead\thmhead@plain
\theoremstyle{definition}
\newtheorem{rem}[theorem]{Remark}
\newtheorem{example}[theorem]{Example}
\newcommand{\vertiii}[1]{{\left\vert\kern-0.25ex\left\vert\kern-0.25ex\left\vert #1 
    \right\vert\kern-0.25ex\right\vert\kern-0.25ex\right\vert}}
\newcommand{\ten}{\otimes}
\newcommand{\pl}{\hspace{.1cm}}
\newcommand{\ran}{\rangle}
\newcommand{\lan}{\langle}
\newcommand{\al}{\alpha}
\renewcommand{\si}{\sigma}
\newcommand{\la}{\lambda}
\newcommand{\bb}{\begin{equation}}
\newcommand{\bbb}{\begin{equation*}}
\newcommand{\ee}{\end{equation}}
\newcommand{\eee}{\end{equation*}}
\newcommand{\B}{{\mathcal{B}}}
\newcommand{\cT}{\mathcal{T}}
\newcommand{\norm}[2]{\|  #1  \|_{#2}}
\newcommand{\id}{\operatorname{id}}
\newcommand{\tr}{\operatorname{Tr}}
\newcommand{\cD}{\mathcal{D}}
\newcommand{\cN}{\mathcal{N}}
\newcommand{\cM}{\mathcal{M}}
\newcommand{\cH}{\mathcal{H}}
\newcommand{\cX}{\mathcal{X}}
\newcommand{\cB}{\mathcal{B}}
\newcommand{\cL}{\mathcal{L}}
\newcommand{\cP}{\mathcal{P}}
\newcommand{\cE}{\mathcal{E}}
\newcommand{\cA}{\mathcal{A}}
\DeclareMathOperator{\Tr}{Tr}
\DeclareMathAlphabet{\pazocal}{OMS}{zplm}{m}{n}
\DeclareMathOperator{\supp}{supp}
\DeclareMathOperator{\dom}{dom}
\newcommand{\cK}{\mathcal{K}}
\newcommand{\D}{\pazocal{D}}
\newcommand{\lsmatrix}{\left(\begin{smallmatrix}}
\newcommand{\rsmatrix}{\end{smallmatrix}\right)}
\newcommand*\rel@kern[1]{\kern#1\dimexpr\macc@kerna}
\newcommand*\widebar[1]{%
  \begingroup
  \def\mathaccent##1##2{%
    \rel@kern{0.8}%
    \overline{\rel@kern{-0.8}\macc@nucleus\rel@kern{0.2}}%
    \rel@kern{-0.2}%
  }%
  \macc@depth\@ne
  \let\math@bgroup\@empty \let\math@egroup\macc@set@skewchar
  \mathsurround\z@ \frozen@everymath{\mathgroup\macc@group\relax}%
  \macc@set@skewchar\relax
  \let\mathaccentV\macc@nested@a
  \macc@nested@a\relax111{#1}%
  \endgroup
}
\title[Ricci curvature of quantum channels]{Ricci curvature of quantum channels \\on non-commutative transportation metric spaces}
\author{Li Gao}
 \email[Li Gao]{li.gao@tum.de}
\author{Cambyse Rouz\'{e}}
 \address{Zentrum Mathematik, Technische Universit\"{a}t M\"{u}nchen, 85748 Garching, Germany}
\email[Cambyse Rouz\'{e}]{cambyse.rouze@tum.de}
\begin{document}
\begin{abstract}
Following Ollivier's work \cite{Ollivier2009}, we introduce the coarse Ricci curvature of a quantum channel as the contraction of  non-commutative metrics on the state space. These metrics are defined as a non-commutative transportation cost in the spirit of \cite{Guillin2008,Gozlan2006}, which gives a unified approach to different quantum Wasserstein distances in the literature. We prove that the coarse Ricci curvature lower bound and its dual gradient estimate, under suitable assumptions, imply the Poincar\'e inequality (spectral gap) as well as transportation cost inequalities. Using intertwining relations, we obtain positive bounds on the coarse Ricci curvature of Gibbs samplers, Bosonic and Fermionic beam-splitters as well as Pauli channels on $n$-qubits. 
\end{abstract}
\maketitle

\section{Introduction}
The Ricci curvature of a Riemannian manifold $M$ quantifies the amount by which the volume of small geodesic balls in $M$ deviates from that in the standard Euclidean space. When being globally strictly positive, the Ricci curvature governs many geometric and analytic properties of $M$ such as its diameter via the Bonnet-Meyers theorem, isoperimetric inequalities and the concentration of measure on $M$ (see the monograph \cite{villani2009optimal}). In their seminal work \cite{bakry1984hypercontractivite,Bakry1985}, Bakry and \'{E}mery established an inequality for diffusive Markov processes, known as the curvature dimension condition, which captures the features of a uniform Ricci curvature lower bound. With positive curvature bound, this inequality implies well-known functional inequalities, such as Sobolev, logarithmic Sobolev and Poincar\'{e} inequalities (see e.g. \cite{SaloffCoste1994,bakry1994hypercontractivite}). These inequalities are in turn directly applicable to the analysis of the mixing time of the diffusion process as well as to derive concentration inequalities for their invariant measures (see \cite{Ledoux2005,Chafai,Bakry2014} for more details on these topics).

In the past decades, the notion of Ricci curvature has been largely extended beyond Riemannain manifolds. Interestingly, when considering Markov processes on a discrete metric space (which are generically non-diffusive),  various notions of Ricci curvature bound were proposed. Among which are the Bakry-\'{E}mery curvature dimension condition via gradient form \cite{Bakry1985}, Lott-Sturm-Villani's synthetic theory of Ricci curvature on metric measure spaces \cite{Lott2009,Sturm2006}, Erbar and Maas' entropic Ricci curvature for finite Markov chains \cite{Maas2011,Erbar2012} and Ollivier's coarse Ricci curvature for discrete-time Markov chains on metric spaces \cite{Ollivier2009}.
All the above notions reduce to the original definition in the Riemannian setting, while the last three of them substantially rely on the theory of optimal transport \cite{villani2009optimal}. 

Let $(\cX,d)$ be a complete separable metric space with distance function $d$. For $1\le p<\infty$, the Wasserstein $p$-distance between two probability measures $\mu,\nu$ is defined as
$$W_p(\mu,\nu):=\inf_{\pi\in\Pi(\mu,\nu)}\,\Big(\int_{\cX\times \cX}\,d(x,y)^p\,\pi(dx,dy)\Big)^{\frac{1}{p}}\, ,$$ 
where the infimum is taken over all joint distributions $\pi$ on $\cX\times \cX$ whose marginals coincide with $\mu$ and $\nu$. 
For $\cX=M$ being a compact Riemannian manifold, the Ricci curvature is uniformly lower bounded by a constant $\la\in\mathbb{R}$ if and only if for any two probability measures $\mu,\nu$ and all $t\ge 0$,
\begin{align}
    W_p(\mu\circ e^{t\Delta},\nu \circ e^{t\Delta})\le e^{-t\la} \,W_p(\mu,\nu)
\end{align}
for either $p=1$ or $2$. Here $e^{t\Delta}$ denotes the heat semigroup generated by the Laplace-Beltrami operator $\Delta$. Motivated by the above formulation, Ollivier \cite{Ollivier2009} introduced the notion of coarse Ricci curvature bound $\kappa\in \mathbb{R}$ of a Markov chain $\cP:L_\infty(\cX)\to L_\infty(\cX)$ on a generic Polish space $(\cX,d)$ as follows: for any two probability measures $\mu,\nu$,
\begin{align}\label{Olliviercorasecurvature}
    W_1(\mu\circ \cP,\nu\circ \cP)\le \,(1-\kappa)\,W_1(\mu,\nu)\,.
\end{align}
In particular, Ollivier's definition recovers the Riemannian setting for $\cP=e^{t\cL}$ and $\kappa(t)=1-e^{-\la\,t}$. By Monge-Kantorovich duality, \eqref{Olliviercorasecurvature} is also equivalent to the contraction (when $\kappa>0$) of Lipschitz constants:
\begin{align}\label{gradientestimate}
   \|P(f)\|_{\operatorname{Lip}}\,\le\,(1-\kappa)\,\|f\|_{\operatorname{Lip}}\,.
\end{align}

On the other hand, Erbar and Maas \cite{Erbar2012} introduced the entropic Ricci curvature bound for finite Markov chains based on Benamou-Brenier's dynamical formulation of the $W_2$ distance \cite{Benamou2000} (see also \cite{Otto2000,Ambrosio2012}). Their approach was inspired by Otto's observation that the heat semigroup on $M$ is the gradient flow of the entropy with respect to the Wasserstein $2$-distance. Given a continuous time Markov chain $e^{t\cL}$ on a finite sample space $\cX$, Erbar and Maas identified the Wasserstein distance $W_{2,\cL}$ (depending on the generator $\cL$) such that the semigroup $e^{t\cL}$ is the gradient flow of the relative entropy functional. With this definition, they introduced the entropic Ricci curvature bound $C\in\mathbb{R}$, which can be equivalently formulated in terms of an exponential contraction of $W_{2,\cL}$,
\begin{align}
    W_{2,\cL}(\mu\circ e^{t\cL},\nu\circ e^{t\cL})\le e^{-t C}\,W_{2,\cL}(\mu,\nu)\,.
\end{align}
The framework of Erbar and Maas was extended by Carlen and Maas \cite{carlen2014analog,carlen2017gradient,Carlen2019} to the non-commutative setting of quantum Markov semigroups on matrix algebras (see also \cite{datta2020relating,Mittnenzweig2017}), and later generalized to symmetric semigroups on finite von Neumann algebras \cite{wirth2018noncommutative,li2020graph}. 
The entropic curvature bound is known to imply exponential convergence in terms of the relative entropy as well as other functional inequalities such as the Poincar\'e and  transportation cost inequalities, in both discrete classical and quantum settings.
It remains an open problem to compare Erbar and Maas' entropic Ricci curvature with Ollivier's coarse Ricci curvature for finite Markov chains. For instance, the application to entropic convergence remains open for Ollivier's coarse curvature \cite{Eldan2017}.

In recent years, various works have studied Wasserstein distances on quantum systems \cite{carlen2017gradient,DePalma2021,de2021quantum,caglioti2021towards}.
 While the entropic Ricci curvature has found many applications in the study of quantum Markov semigroups (see e.g. \cite{datta2020relating,brannan2020complete}),
 Ollivier's notion of coarse Ricci curvature was left largely unexplored in the non-commutative realm. The motivation of the present paper is to fill this gap and study the curvature condition for quantum channels via a unified approach.  

Let us illustrate our idea from the perspective of quantum metric spaces. These spaces originate from Connes' work on non-commutative geometry \cite{connes1990geometrie} and were later studied by Rieffel \cite{rieffel2004compact} and others for their geometric properties such as the 
Gromov-Hausdorff convergence \cite{junge2018harmonic}. 
A ($W^*$-)quantum metric space $(\cM,L)$ is given by a von Neumann algebra equipped with a semi-norm $L:\cA\to [0,\infty)$ defined on a dense subalgebra $\cA\subset \cM$. This semi-norm often arises from a (non-commutative) differential structure,
and can be viewed as an abstraction of the notion of Lipschitz constant. It induces the following Connes distance on states: for any two states $\omega_1,\omega_2$,
\[ \cT_L(\omega_1,\omega_2):=\sup\{\omega_1(x)-\omega_2(x)|\,L(x)\le 1\}\pl. \]
Examples include different quantum Wasserstein $1$-distances considered in the literature \cite{DePalma2021,datta2020relating,Junge2014} .  
In analogy to the classical picture \eqref{Olliviercorasecurvature} and \eqref{gradientestimate}, the contraction of the distance $\cT_L$ can be derived by the contraction of the semi-norm,
\begin{align}\label{gradientestimate2}
    L(\cP(x))\le\,(1-\kappa)\,L(x)\,,
\end{align}
as a dual estimate. 
More generally, following the abstract setting of \cite{Gozlan2006,Guillin2008}, a non-commutative transportation cost $\cT_\cB:\cD(\cM)\times\cD(\cM)\to[0,\infty]$ can be defined as follows
\begin{align}
    \mathcal{T}_\cB(\omega_1,\omega_2):=\sup_{(x,y)\in\cB}\,\omega_1(x)-\omega_2(y)\,.
\end{align}
Here, the set $\cB \subset\cM_\mathbb{R}\times \cM_\mathbb{R}$ is a binary relation on the real part $\cM_{\mathbb{R}}$ satisfying certain axioms (specified in Section 3). 
The Connes distance $\cT_L$ can then be recovered as a special case after taking $\cB=\{(x,x)|\,x\in \cM_{\mathbb{R}},\,L(x)\le 1\}$. Interestingly,
due to a recent result by Wirth \cite{wirth2021dual},
the quantum Wasserstein $2$-distance of Carlen and Maas also fits into this framework. This allows us to take a unified approach to Ricci curvature on both quantum Wasserstein $1$ and $2$-distances:
 we say that a triple $(\cM,\cB,\cP)$ satisfies a coarse Ricci curvature lower bound $\kappa\in\mathbb{R}$ if for any two states $\omega_1,\omega_2$,
\begin{align}\label{curvaturegeneral}
    \cT_\cB(\omega_1\circ\cP,\omega_2\circ\cP)\le\,(1-\kappa)\,\cT_\cB(\omega_1,\omega_2)\,.
\end{align}
The rest of this paper is devoted to an analysis of the coarse Ricci curvature \eqref{curvaturegeneral} as well as the dual estimate \eqref{gradientestimate2} and is organize as follows: In Section \ref{sec:preliminary}, we review some basic facts on von Neumann algebras, derivations and operator means. We work in this general setting in order to include both classical and quantum examples in possibly infinite dimensions. Section \ref{sec:NCTC} introduces the non-commutative transportation cost and reviews various examples in the literature. This includes the De Palma et al's Quantum $1$-Wasserstein distance \cite{de2021quantum} and Carlen and Mass' Quantum transport metric \cite{carlen2017gradient,Carlen2019}. Section \ref{sec:curvature} starts with some basic properties of coarse Ricci curvature of a triple $(\cM,\cB,\cP)$ and its dual Lipschitz/gradient estimate. We show that a positive coarse Ricci curvature bound has applications to derive diameter estimates, Poincar\'e and transportation cost inequalities. 
In Sections \ref{sec:intertwining} and \ref{sec:transference}, we provide two approaches to derive the coarse Ricci curvature bound. One is the intertwining relation with derivation, which is a primary source of examples in the continuous time setting. The other is the group transference method, which allows us to pass Lipschitz/gradient estimates from classical Markov maps on a group to quantum channels via group representation.
Finally, Section \ref{examples} provides some new examples of positive coarse Ricci curvature, including Gibbs samplers, Bosonic channels and Pauli channels on $n$-qubits.

\section{Preliminary}\label{sec:preliminary}
\subsection{States, channels and Markov semigroups}
We denote by $\cB(\cH)$ the bounded operators on a Hilbert space $\cH$. Recall that a von Neumann algebra $\cM$ is a unital weak$^*$-closed subalgebra of some $\cB(\cH)$. Throughout the paper, we write $1$ as the identity operator in $\cM$, and $\id$ as the identity map between two algebras.
We denote $\cM_{\mathbb{R}}:=\{x\in\cM|\,x=x^*\}$ for the real part of $\cM$ and $\cM_+=\{a^*a | a\in \cM\}$ for the positive cone. A linear functional $\phi:\cM\to \mathbb{C}$ is a state if $\phi(1)=1$ and $\phi(x)\ge 0$ for all $x\in \cM_+$; $\phi$ is normal if $\phi$ is weak$^*$-continuous. We denote by $\cD(\cM)$ the set of normal states on $\cM$.

Throughout the paper, we will mostly consider a semi-finite von Neumann algebra $\cM$. That is, 
 $\cM$ admits a normal faithful semi-finite trace $\tau:\cM_+\to [0,\infty]$. More precisely, $\tau$ satisfies
\begin{enumerate}
\item[(i)] if $\tau(x^*x)=0$, then $x=0$;
\item[(ii)] for an increasing net $\{x_{\al}\}\subset \cM_+$,  $\sup\tau(x_\al)=\tau(\sup x_{\al})$;
\item[(iii)] for any nonzero $x\ge 0$, there exists a nonzero $y\le x$ such that $\tau(y)<\infty$.
\end{enumerate}
For example, $\cB(\cH)$ equipped with the standard trace $\tr$ is a semi-finite von Neumanna algebra. $\cM$ is a finite von Neumann algebra if
in additional $\tau(1)<\infty$. All finite dimensional $\cM$ are finite.

Denote $\cM_0=\bigcup_{\tau(e)<\infty}e\cM e$ where the union is over all finite projection $e\in \cM$ with $\tau(e)<\infty$. For $1\le p<\infty$, the $L_p$ space $L_p(\cM)$ is defined as the completion of $\cM_0$ with respect to the norm
\[\norm{a}{L_p(\tau)}=\tau(|a|^{p})^{1/p}\pl.\]
We will often use the short notation $\norm{\cdot}{p}:=\norm{a}{L_p(\tau)}$ if no confusion can subsist. We identify $L_\infty(\cM):=\cM$ and the predual space $\cM_*\cong L_1(\cM)$ via the duality
\[a\in L_1(\cM)\longleftrightarrow \phi_a\in \cM_*,\pl  \qquad \phi_a(x)=\tau(ax)\pl.\]
We say $\rho \in L_1(\cM)$ is a density operator (or simply a density) if $\rho\ge 0$ and $\tau(\rho)=1$. The set of all densities corresponds to $\cD(\cM)$, the normal states of $\cM$. 

A quantum channel $\cP:\cM\to\cM$ is a normal, unital, and completely positive (UCP) map. Its pre-adjoint $\cP^\dagger:\cM_*\to \cM_*$ is  completely positive trace preserving (CPTP), and hence is a map from $\cD(\cM)$ to itself. A quantum Markov semigroup is a family of linear maps $\displaystyle t\mapsto \cP_t:\cM\to \cM$ with the following properties:
\begin{enumerate}
\item[(i)] $\cP_t$ is a normal UCP map (quantum channel) for all $t\ge 0$;
\item[(ii)] $\cP_t\circ \cP_s=\cP_{s+t}$ for any $t,s\ge 0$ and $\cP_0=\id$;
\item[(iii)] for each $x\in \cM$, $t\mapsto \cP_t(x)$ is continuous in the weak$^*$-topology.
\end{enumerate}
The generator of the semigroup is defined as
\[\pl \cL (x)=w^*-\lim_{t\to 0}\, \frac{\cP_t(x)-x}{t}\pl,\quad \pl \cP_t=e^{t\,\cL }\pl,\]
where $\cL$ is a closable densely defined operator on $L_\infty(\cM)$. We say that a quantum Markov semigroup $\cP_t$ is GNS-symmetric with respect to a state $\omega$ if
\[ \omega(x^*\cP_t(y))=\omega(\cP_t(x)^*y)\pl , \pl \quad \forall x,y\in \cM.\]
In particular, $\cP_t$ is symmetric if
$\cP_t$ is GNS-symmetric to the trace $\tau$.

Recall that the relative entropy of two states $\rho$ and $\si$ is given by
\[D(\rho\|\si)=\tau(\rho \log \rho-\rho\log \si)\pl,\]
provided $\rho \log \rho, \rho\log \si\in L_1(\cM)$. 
The relative entropy $D(\rho\|\si)$ measures how well the state $\rho$ can be distinguished from $\sigma$ by quantum measurements \cite{blahut1974hypothesis,hiai1991proper,ogawa2005strong}. In particular, it compares to the $L_1$-norm by the Pinsker inequality (see \cite[Theorem 5.5]{ohya2004quantum})
\[ \frac{1}{2}\norm{\rho-\si}{1}^2\,\le \,D(\rho\|\si)\pl.\]
 Let $\cN\subset \cM$ be a von Neumann subalgebra. A conditional expectation $E:\cM\to \cN$ is a completely positive unital map such that $E(axb)=aE(x)b$ for any $a,b\in \cN$ and $x\in \cM$. We say that $E:\cM\to \cN$ preserves the state $\si$ if $\si\circ E=\si$. For any $E$-invariant state $\si$, we have the chain rule 
 \cite[Theorem 5.15]{ohya2004quantum}
\begin{align} D(\rho\|\si)=D(\rho\|\rho\circ E)+D( \rho\circ E\|\si)\pl.\label{eq:chain}\end{align}

\subsection{First order calculus}\label{sec:differential}
One natural and primary source of non-commutative metric is from the so-called first order differential structure introduced in \cite{cipriani2003derivations}.
We say that a tuple $(\cA,\cH,l,r, \partial)$ is a \emph{first order differential structure} on $\cM$  if it consists of a weak$^*$ dense subalgebra $\cA\subset\cM$,
a Hilbert space $\cH$, commuting non-degenerate $*$-homomorphisms $l:\cM\to \cB(\cH),r:\cM^{\operatorname{op}}\to \cB(\cH)$ 
and a closed derivation $\partial: \cA\to \cH$ satisfying the Leibiniz rule
\[\partial(xy)=l(x)\partial y+r(y)\partial x\pl, \pl  \forall \pl x,y\in\cA\pl. \]
Here $\cM^{\operatorname{op}}$ is the opposite algebra of $\cM$ defined as
\[\cM^{\operatorname{op}}=\{a| a\in \cM\}\pl\pl, \pl \pl a\cdot b= ba.\]

Let $\cM$ be a finite von Neumann algebra equipped with trace $\tau$ and let $\cP_t=e^{t\,\cL }:\cM\to \cM$ be a $\tau$-symmetric quantum Markov semigroup with the generator $\cL$. We denote the Dirichlet subalgebra as $\cA:=\cM\cap\operatorname{dom}((-\cL)^{\frac{1}{2}})$.
Recall that
the gradient form or \emph{Carr\'e du Champ} operator is defined for all $x\in\cM\cap\operatorname{dom}(\cL)$ (and for all $x\in\cA$ after extension) as:
\begin{align}\label{gradientform}
  2\,  \Gamma(x,x):=\cL(x^*x)-\cL(x^*)x-x^*\cL(x)\,.
\end{align}
It is proved by Cipriani and Sauvageot \cite{cipriani2003derivations}
that if the gradient form $\Gamma$ is regular, i.e. $\Gamma(x,x)\in L_1(\cM)$ for all $x\in \cA$, then $\Gamma$ can be realized from a (unique) first order differential structure $(\cA,\cH, l,r, \partial)$ as follows
\begin{align}\label{eq:derivation} \lan \partial(x),\partial(y)z \ran_\cH=\tau(\Gamma(x,y)z)\pl,\end{align}
In particular, $\cL=-\partial^*\partial$ where $\partial^*$ is the adjoint operator of $\partial$. This differential structure is an essential ingredient used in \cite{wirth2018noncommutative} to study non-commutative Wasserstein metrics induced by symmetric quantum Markov semigroups. It was further proved in the preprint \cite{JRS} that the above Hilbert space $\cH$ can be chosen as the $L_2$-space $L_2(\hat{\cM})$ of a finite von Neumann algebra $(\hat{\cM},\tau)$ such that $\cM\subset \hat{\cM}$ with induced trace and left and right actions defined as
\begin{align}l(x)\xi:=x\xi,\quad  r(x)\xi:=\xi x, \quad \forall \pl x,y\in\cA ,\quad  \xi\in L_2(\hat{\cM})\pl.\label{eq:algebra}\end{align}
In this setting, the gradient form $\Gamma$ can be expressed as
\begin{align}\label{eq:gradientform} \Gamma(x,y)=E_{\cM}(\partial(x)^*\partial(y))\,,\end{align}
where $E_{\cM}:\hat{\cM}\to \cM$ is the trace preserving conditional expectation. We note that similar constructions are also obtained in \cite{carlen2017gradient} for finite dimensional GNS-symmetric semigroup with respect to a faithful state $\sigma$. This constitutes a crucial step in the definition of the quantum Wasserstein 2-distance.

\subsection{Operator mean}
\label{sec:operator}



We briefly review the notion of an operator mean in the sense of Kubo and Ando \cite{Kubo1980}. An operator mean is a binary map $\lambda:\cB(\cH)_+\times \cB(\cH)_+\to \cB(\cH)_+$ satisfying the following properties:
\begin{enumerate}
\item[i)] if $A_1\le A_2$ and $B_1\le B_2$, then $\Lambda(A_1,B_1)\le \Lambda(A_2,B_2)$.
\item[ii)]  $C\Lambda(A,B)C\le\Lambda(CAC,CBC)$ for any $A,B,C\in \B(\cH)_+$
\item[iii)] if $A_n\searrow A$ and $B_n\searrow B$ are decreasing , then $\Lambda(A_n,B_n)\searrow \Lambda(A,B)$.
\item[iv)] $\Lambda(1,1)=1$ ($1$ is the identity operator)\,.
\end{enumerate}
Here $A_n\searrow A$ means $A_n$ is a decreasing sequence and converges to $A$ in strong operator topology. The simplest example is the arithmetic mean $\Lambda_{\operatorname{ari}}(A,B)=\frac{1}{2}(A+B)$. Two trivial examples are the right trivial mean $\Lambda_r(A,B)=B$ and the left trivial mean $\Lambda_l(A,B)=A$. In general, it is proved in \cite[Theorem 3.4]{Kubo1980} that 
every operator mean $\Lambda$ (for invertible $A, B$) can be expressed as
\begin{align}
\Lambda(A,B)=aA+bB+\int_{0}^\infty \frac{1+t}{t}\frac{1}{(tA)^{-1}+B^{-1}} \,d\mu(t)  
\end{align}
where $a=\mu({0}),b= \mu({\infty})$ and $\mu$ is some Radom probability measure on $[0,\infty]$. 

Let $(\cA,\cH,l,r, \partial)$ be a first order differential structure on $\cM$. For a positive $\rho\in \cM_+$, we write
\[\Lambda(\rho):=\Lambda(l(\rho),r(\rho))\]
as a positive operator in $\cB(\cH)_+$. We will be in particular interested in the logarithmic mean
\[\Lambda_{\log}(\rho)=\int_{0}^1l(\rho)^{s}r(\rho)^{1-s}ds \,.\]

\section{Non-commutative transportation metrics}\label{sec:NCTC}
We start by introducing a general notion of non-commutative transportation cost.
Let $\cM$ be a von Neumann algebra. Recall that  $\cM_{\mathbb{R}}:=\{x\in\cM|\,x=x^*\}$ denotes the real part of $\cM$, and $\cD(\cM)$ denotes the set of normal states on $\cM$. Inspired by \cite{Guillin2008,Gozlan2006}, we propose the following definition of a \textit{non-commutative transportation cost}.
\begin{definition}
\label{def:metric}
We say a binary relation $\mathcal{B}\subset \cM_{\mathbb{R}}\times \cM_{\mathbb{R}}$ is a non-commutative transportation metric if it satisfies the following requirements:
\begin{itemize}
    \item[(i)] For all $(x,y)\in \mathcal{B}$, $x\le y$.
    \item[(ii)] For all states $\omega_1,\omega_2\in \cD(\cM)$, there exists $(x,y)\in \cB$ such that $\omega_1(x)-\omega_2(y)\ge 0$.
    \end{itemize}
Given such a binary relation $\mathcal{B}$, we define the \textit{non-commutative transportation cost} $\mathcal{T}_\cB:\cD(\cM)\times \cD(\cM)\to [0,\infty]$ as
\begin{align}
    \mathcal{T}_\cB(\omega_1,\omega_2):=\sup_{(x,y)\in\cB}\,\omega_1(x)-\omega_2(y)\,.
\end{align}
We denote the domain of $\cT_\cB$ as $\operatorname{dom}(\cT_\cB):=\{(\omega_1,\omega_2)\in\cD(\cM)\times \cD(\cM):\,\cT_\cB(\omega_1,\omega_2)<\infty\}$. The map $\cT_\cB$ is clearly jointly convex on its domain.
\end{definition} It is clear that the condition (ii) implies that $\cT_\cB$ is non-negative whereas the condition (i) implies 
$\cT_\cB(\omega,\omega)=0,\,\forall \omega\in\cD(\cM)$. We sometimes also call the pair $(\cM,\cB)$ a non-commutative transportation metric by slight abuse of language. We shall now discuss how this unifies different notions of transportation metrics in the literature.

\subsection{Classical transportation cost}\label{sec:classical}
We briefly recall the abstract notion of commutative transportation cost from \cite{Gozlan2006,Guillin2008}, for later comparison to the non-commutative settings. We refer to Villani's book \cite{villani2009optimal} for more information on this topic.
Let $(\mathcal{X},d)$ be a complete separable metric space (Polish space). The space $L_\infty(\mathcal{X})$ of Borel-measurable bounded functions on $\mathcal{X}$ is a commutative von Neumann algebra. We denote by
$L_\infty(\mathcal{X},\mathbb{R})$ the set of all measurable bounded real functions on $\mathcal{X}$, 
and by $\cD(\mathcal{X})$ the set of all probability measures. The following transportation cost was introduced in \cite{Gozlan2006,Guillin2008}: for $\mu,\nu\in \D(\mathcal{X})$,
\[ \cT_{\cB}(\mu,\nu):=\sup_{(f,g)\in\cB \cap (L_1(\mu)\times L_1(\nu)) } \int f\ d \mu -\int g\ d \nu \pl,\]
where $\mathcal{B}\subset L_\infty(\mathcal{X},\mathbb{R})\times L_\infty(\mathcal{X},\mathbb{R})$ is a binary relation satisfying 
\begin{itemize}
    \item[(i)] For all $(f,g)\in \mathcal{B}$, $f\le g$.
    \item[(ii)] For all $\mu,\nu\in \cD(\mathcal{X})$, there exists $(f,g)\in \cB$ such that $\int f\ d \mu -\int g\ d \nu\ge 0$.
\end{itemize}
The above condition are often satisfied by binary relations induced by a cost function $c:\mathcal{X}\times \mathcal{X} \to\overline{\mathbb{R}}$ as follows
\[\cB_{c}:=\{(f,g)| f(x)-g(y)\le c(x,y)\}\pl.\]
The celebrated Kantorovich duality states that if the cost function $c$ is lower semi-continuous, then
\begin{align} \label{eq:Kantorovich}\cT_{\cB}(\mu,\nu)=&\sup_{(f,g)\in\cB \cap (L_1(\mu)\times L_1(\nu)) } \int f\ d \mu -\int g\ d \nu
\\ =& \inf_{\pi\in \Pi(\mu,\nu) } \int_{\mathcal{X}\times \mathcal{X}}c(x,y)\,d\pi(x,y)\nonumber
\end{align}
where the infimum is over the set $\Pi(\mu,\nu)$ of joint distributions $\pi$ whose marginals are $\pi_1=\mu$ and $\pi_2=\nu$. Based on \eqref{eq:Kantorovich},
the binary relation $\cB_{c}$
satisfies the condition (i) in Definition \ref{def:metric} if $c(x,x)=0$ for all $x\in \mathcal{X}$; $\cB_{c}$
satisfies the condition (ii), or equivalently, the non-negativity of $\cT_{\cB_c}$, if $c(x,y)\ge 0$ for all $x,y\in \mathcal{X}$. Thus a common assumption for the cost function is that
$c:\mathcal{X}\times \mathcal{X} \to [0,\infty]$ is lower semi-continuous and $c(x,x)=0$ for all $x \in \mathcal{X}$. 


Standard examples of transportation costs are Wasserstein distances. The Wasserstein $1$-distance takes the cost function $c(x,y)=d(x,y)$ as the underlying distance on $\cX$, and possesses the following simplified expression as the dual distance to the Lipschitz constant
\begin{align*} W_1(\mu,\nu)=\sup_{f(x)-g(y)\,\le\, d(x,y)} \,\int f\,d\mu - \int g\,d\nu =  \sup_{\norm{\,f\,}{\text{Lip}}\,\le \,1} \int f\,(d\mu -d\nu)\,,
\end{align*}
where we recall that the Lipschitz constant is defined as $\|f\|_{\text{Lip}}:=\sup_{x,y}\frac{|f(x)-f(y)|}{d(x,y)}$. The above simplification follows from the sufficiency of choosing the tight pair $(f,g)$ as 
\[ f(x)=\inf_{x}(g(y)+d(x,y))\pl, \qquad g(y)=\sup_{x}(f(x)-d(x,y)) \,.\]
It is clear that both $\|f\|_{\text{Lip}},\|g\|_{\text{Lip}}\le 1$ by triangle inequality, and it follows that $f=g$.  In particular, this recovers the distance function $d$ by
\[d(x,y)=W_1(\delta_x,\delta_y)\quad  \forall \pl x,y\in \mathcal{X}\,, \]
where $\delta_x$ denotes the Dirac distribution at $x\in \cX$. One can also consider $c_p(x,y)=d(x,y)^p$ for $p\ge 1$, for which the transportation cost is
\begin{align*} \cT_{c_p}(\mu,\nu)=\sup_{f(x)-g(y)\le d(x,y)^p} \,\int f\,d\mu - \int g \,d\nu= \inf_{\pi\in \Pi(\mu,\nu ) } \int_{\mathcal{X}\times \mathcal{X}}d(x,y)^p \,d\pi(x,y)\,.
\end{align*}
For $1<p<\infty$,
\[ W_p(\mu,\nu)=\Big(\inf_{\pi\in \Pi(\mu,\nu) }\, \int_{\mathcal{X}\times \mathcal{X}}d(x,y)^p \,d\pi(x,y)\Big)^\frac{1}{p}\]
are the Wasserstein $p$-distances. The following ``Gluing lemma'' is the key fact behind the proof of the triangle inequality for $W_p$:
\begin{lemma}[(\cite{villani2009optimal}, p. 11)]\label{lemma:glue}
Let $\pi$ be a joint distribution over $\mathcal{X}\times\mathcal{X}$ whose marginal distributions are $\mu_1$ and $\mu_2$, and let $\pi'$ be a joint distribution over $\mathcal{X}\times\mathcal{X}$ whose marginal distributions are $\mu_2$ and $\mu_3$. There exists a joint distribution $\gamma$ over $\mathcal{X}\times\mathcal{X}\times\mathcal{X}$ such that 
\[\gamma_{12}=\pi,\qquad  \gamma_{23}=\pi'\,.\]
\end{lemma}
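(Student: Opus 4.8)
The plan is to construct the three-fold coupling $\gamma$ by a disintegration (regular conditional probability) argument, gluing $\pi$ and $\pi'$ along their common marginal $\mu_2$. First I would invoke the disintegration theorem on the Polish space $\mathcal{X}\times\mathcal{X}$: since $\pi$ has second marginal $\mu_2$, there is a $\mu_2$-a.e.\ defined family of probability measures $\{\pi_{x_2}(dx_1)\}_{x_2\in\mathcal{X}}$ on $\mathcal{X}$ — the conditional law of the first coordinate given the second — so that
\[
\pi(dx_1\,dx_2)=\pi_{x_2}(dx_1)\,\mu_2(dx_2)\,.
\]
Similarly, disintegrating $\pi'$ against its first marginal $\mu_2$ gives a $\mu_2$-a.e.\ family $\{\pi'_{x_2}(dx_3)\}_{x_2\in\mathcal{X}}$ with $\pi'(dx_2\,dx_3)=\pi'_{x_2}(dx_3)\,\mu_2(dx_2)$.

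Next I would define $\gamma$ on $\mathcal{X}\times\mathcal{X}\times\mathcal{X}$ by the formula
\[
\gamma(dx_1\,dx_2\,dx_3):=\pi_{x_2}(dx_1)\,\pi'_{x_2}(dx_3)\,\mu_2(dx_2)\,,
\]
i.e.\ one makes $x_1$ and $x_3$ conditionally independent given $x_2$, using the two conditional kernels just produced. One checks this is a well-defined probability measure: measurability of $x_2\mapsto\pi_{x_2}\otimes\pi'_{x_2}$ in $x_2$ follows from measurability of each disintegration, and the total mass is $\int_{\mathcal{X}}1\cdot\mu_2(dx_2)=1$. Then I verify the two marginal conditions. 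For the $(1,2)$-marginal, integrating out $x_3$ against $\pi'_{x_2}$ gives mass $1$, leaving $\pi_{x_2}(dx_1)\,\mu_2(dx_2)=\pi$; hence $\gamma_{12}=\pi$. Symmetrically, integrating out $x_1$ against $\pi_{x_2}$ leaves $\pi'_{x_2}(dx_3)\,\mu_2(dx_2)=\pi'$, so $\gamma_{23}=\pi'$. (As a byproduct the $2$-marginal of $\gamma$ is $\mu_2$, consistent with both.)

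The main obstacle — really the only subtle point — is the existence and measurability of the disintegrations, which requires $\mathcal{X}$ to be a Polish (or at least standard Borel) space; this is exactly the standing hypothesis on $(\mathcal{X},d)$, so the argument goes through. In the finite or countable setting this collapses to the elementary construction $\gamma_{x_1x_2x_3}=\pi_{x_1x_2}\,\pi'_{x_2x_3}/\mu_2(x_2)$ (with the convention that terms with $\mu_2(x_2)=0$ are set to zero), and no measure-theoretic machinery is needed. Everything else is a routine application of Fubini/Tonelli to the product kernel, so I would keep those verifications brief.
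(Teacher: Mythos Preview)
Your proof is correct and is exactly the standard disintegration argument given in Villani's book, which the paper cites. Note that the paper itself does not supply a proof of this lemma at all---it simply quotes the statement from \cite{villani2009optimal} and uses it as a black box---so there is nothing to compare against beyond the cited reference, which your argument matches.
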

Based on the above lemma, the triangle inequality follows from the triangle inequality of $L_p$ norms. The faithfulness and symmetry follow from the properties of the distance function $d(x,y)$. 

When $(\cX,d)$ is a Riemannian manifold equipped with the Riemannian distance $d$, the Wasserstein $2$-distance enjoys other equivalent formulations. One equivalent formulation is in terms of sub-solutions of the Hamilton Jacobi equation:
\begin{align}
    W_2^2(\mu,\nu)=\frac{1}{2}\inf\Big\{ \int u_1d\mu-\int u_0d\nu: \dot{u}_t+\frac{1}{2}|\nabla u_t|^2\le 0\Big\}
\end{align}
where $\nabla$ is the gradient operator. The second one is the Benamou-Brenier formula
\[W_2(\mu_0,\mu_1)=\inf\,\int_{0}^1\norm{\dot{\mu_t}}{g,\mu_t} \,dt\pl. \]
where the infimum is over all absolutely continuous paths $t\mapsto \mu_t$ connecting $\mu_0$ and $\mu_1$ in $\cD(\cX)$. The metric $\norm{\dot{\mu}}{g,\mu}$ is given by 
\begin{align}\label{eq:continuity} \norm{\dot{\mu}}{g,\mu}=\inf \Big\{ \int|v|^2d\mu \pl |\pl  \dot{\mu}+\text{div} (v\mu)=0 \Big\}\end{align}
where $\text{div}$ is the divergence operator. This formally expresses $W_2$ as the Riemannian distance of the metric \eqref{eq:continuity} on $\cD(\cX)$.

 \subsection{Non-commutative tansportation cost}
A large family of non-commutative transportation costs $\cT_{\cB}$ arises from duality to a semi-norm. Many are considered as quantum analogs of the Wasserstein $1$-distance in the literature, and are related to the concept of compact quantum metric spaces introduced by Rieffel \cite{rieffel2004compact}.
Let
 $\cA_\mathbb{R}\subset \cM_\mathbb{R}$ be a subspace and 
$L: \cA_\mathbb{R}\to [0,\infty)$ be a semi-norm.  Denote $\text{dom}(L):=\cA_{\mathbb{R}}$.
We consider the degenerate binary relation given by the unit ball of $L$: \[\cB_{L}=\{(x,x)\in \cA_\mathbb{R}\times \cA_\mathbb{R} | L(x)\le 1  \}\,.\]
The corresponding transportation cost is  
\[\cT_{L}(\omega_1,\omega_2)=\sup_{L(x)\,\le\, 1} (\omega_1-\omega_2)(x)\pl.\]
The following proposition follows by standard duality:
\begin{proposition}
$\cT_{L}:\cD(\cM)\times \cD(\cM)\to [0,\infty]$ is a pseudo-distance, i.e. it satisfies the triangle inequality and $\cT_{L}(\omega,\omega)=0$ for any $\omega$. If $\operatorname{dom}(L)$ is $w^*$-dense in $\cM_\mathbb{R}$, then $\cT_{L}(\omega_1,\omega_2)>0$ whenever $\omega_1\neq\omega_2$.
\end{proposition}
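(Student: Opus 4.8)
The plan is to read everything off the definition $\cT_L(\omega_1,\omega_2)=\sup_{L(x)\le 1}(\omega_1-\omega_2)(x)$ together with the fact that $L$ is a semi-norm, so that $0\in\operatorname{dom}(L)$ with $L(0)=0$, the sublevel set $\{x\in\cA_{\mathbb R}:L(x)\le 1\}$ is convex, symmetric ($L(x)\le1\iff L(-x)\le1$), and stable under scaling by $t\in[0,1]$.

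First I would dispatch the elementary properties, which use no hypothesis on $\operatorname{dom}(L)$. Nonnegativity and vanishing on the diagonal are immediate: $0\in\cB_L$ gives $\cT_L(\omega_1,\omega_2)\ge(\omega_1-\omega_2)(0)=0$, and for $\omega_1=\omega_2$ every term $(\omega-\omega)(x)$ is $0$, so $\cT_L(\omega,\omega)=0$. Symmetry follows by replacing $x$ by $-x$ in the supremum. For the triangle inequality, fix states $\omega_1,\omega_2,\omega_3$; for each admissible $x$ write $(\omega_1-\omega_3)(x)=(\omega_1-\omega_2)(x)+(\omega_2-\omega_3)(x)\le \cT_L(\omega_1,\omega_2)+\cT_L(\omega_2,\omega_3)$, and take the supremum over $x$ with $L(x)\le1$. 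This gives the pseudo-distance claim.

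For the faithfulness statement, assume $\operatorname{dom}(L)=\cA_{\mathbb R}$ is $w^*$-dense in $\cM_{\mathbb R}$ and $\omega_1\neq\omega_2$. The functional $\omega_1-\omega_2$ is normal, hence $w^*$-continuous, and it is nonzero on $\cM_{\mathbb R}$ (if it vanished on all self-adjoint elements it would vanish on $\cM$ via the decomposition $z=\tfrac{z+z^*}{2}+i\tfrac{z-z^*}{2i}$ and the reality of states on $\cM_{\mathbb R}$). Since the kernel of a $w^*$-continuous functional is $w^*$-closed and does not contain the dense subspace $\cA_{\mathbb R}$, there is $x_0\in\operatorname{dom}(L)$ with $(\omega_1-\omega_2)(x_0)\neq0$; after replacing $x_0$ by $-x_0$ if needed, assume $(\omega_1-\omega_2)(x_0)>0$. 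Now split into two cases. If $L(x_0)>0$, then $x_0/L(x_0)\in\cB_L$ and $\cT_L(\omega_1,\omega_2)\ge (\omega_1-\omega_2)(x_0)/L(x_0)>0$. If $L(x_0)=0$, then $t x_0\in\operatorname{dom}(L)$ with $L(tx_0)=0\le1$ for every $t>0$ (using that $\cA_{\mathbb R}$ is a linear subspace), and $(\omega_1-\omega_2)(tx_0)=t(\omega_1-\omega_2)(x_0)\to\infty$, so $\cT_L(\omega_1,\omega_2)=\infty>0$.

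There is no real obstacle here; the proof is a standard duality argument. The only points worth stating carefully are that "normal" gives $w^*$-continuity (so that $w^*$-density of $\operatorname{dom}(L)$, rather than norm density, is enough), and the degenerate case $L(x_0)=0$, which is exactly why $\cT_L$ takes values in $[0,\infty]$ and why the word "pseudo" is appropriate: without density one may have distinct states agreeing on $\operatorname{dom}(L)$, hence at zero $\cT_L$-distance.
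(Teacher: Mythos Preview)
Your argument is correct and is precisely the ``standard duality'' the paper invokes; the paper gives no further details beyond that phrase, so there is nothing to compare. The only extra you supply is the symmetry remark (via $x\mapsto -x$), which the statement does not require but which is harmless.
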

Often, the semi-norm above is defined as the restriction of a semi-norm $L:\cA\to [0,\infty)$ on a self-adjoint subspace $\cA=\cA^*\subset \cM$ satisfying $L(x)=L(x^*)$. In this case, we equivalently have 
\[\cT_{L}(\omega_1,\omega_2)=\sup_{L(x)\,\le\, 1} |\omega_1(x)-\omega_2(x)|\pl.\]
We start with three concrete examples.
\begin{example}(Trace distance)
The simplest instance of a non-commutative transportation cost distance arises by taking the operator norm and the degenerate binary relation $\cB:=\{(x,x)|\,x\in \mathcal{M}_{\mathbb{R}},\,\|x\|\le 1\}$. This recovers the usual dual norm on states:
\begin{align}
    \|\omega_1-\omega_2\|_1:=\sup_{\substack{ x=x^*\pl, \pl \|x\|\le 1}}\,\omega_1(x)-\omega_2(x)\,.
\end{align}
\end{example}
\begin{example}(Quantum Ornstein distance)\label{ornstein}
Recently, the following quantum generalization of the Ornstein distance was introduced in \cite{DePalma2021}: given a finite dimensional Hilbert space $\cH$ of dimension $d$ and an integer $n\in\mathbb{N}$, we consider the algebra $\cM:=\cB(\cH^{\otimes n})$ and the set $\cB_{\otimes }:=\{(x,x):\,x\in \cM_{\mathbb{R}},\,\|x\|_{\otimes}\le 1\}$, where the Ornstein Lipschitz constant $\|x\|_{\otimes}$ is defined as
\begin{align}
    \|x\|_{\otimes }:=\max_{i\in[n]}\min_{x^{(i)}}\,\|x-x^{(i)}\otimes 1_i\|\,,
\end{align}
and where the above minimization is over self-adjoint operators $x^{(i)}$ on the complement subsystem $\cH_{i^c}$ of $\cH_i$. The dual transportation cost distance is given by
\[\cT_{\otimes}(\rho, \si)=\min\Big\{\sum_{i}c_i\pl |\pl c_i\ge 0, \rho-\si=\sum_{i=1}c_i(\rho^{(i)}-\si^{(i)})\pl, \Tr_i(\rho^{(i)})=\Tr_i(\si^{(i)})\pl \forall i=1,\dots, n \Big\}\,.
\]
Here $\Tr_i$ denotes the partial trace map over $i$-th site.
\end{example}

\begin{example}(Oscillator norm)\label{ex:oscillator}
Alternatively, in \cite{Majewski1995} the authors considered the following semi-norm on $\cB(\cH^{\otimes n})$, also known as the oscillator norm:
\begin{align}
    \vertiii{x}_{\operatorname{osc}}:=\sum_{i=1}^n\,\|x-d^{-1}1_i\otimes \tr_{i}(x)\|\,,
\end{align}
so that $ \cB_{\operatorname{osc}}:= \big\{ (x,x)|\,x\in \cB(\cH^{\otimes n})_{\mathbb{R}},\, \vertiii{x}_{\operatorname{osc}}\le 1\big\}$.
\end{example}
The next two examples arise from non-commutative geometric structures.
\begin{example}(Compact quantum metric spaces)
Let $\cM$ be a unital $C^*$-algebra and $L:\cA \to [0,\infty)$ be a semi-norm defined on a self-adjoint norm-dense subspace $\cA\subset \cM$. The pair $(\cM,\cL)$ is said to be a compact quantum metric space if $\text{Ker}(L)=\mathbb{C}1$ and
the transportation cost distance $\cT_{L}$ metrizes the weak$^*$-topology on $\cD(\cM)$. $L$ is viewed as an abstract Lipschitz semi-norm which vanishes on $\mathbb{C}1$. In particular, if $\cM=C(\cX,d)$ is the algebra of continuous functions on a Polish space and $L(.)=\norm{\cdot}{\operatorname{Lip}}$ denotes the Lipschitz constant, $\cT_L$ reduces to the Wasserstein $1$-distance of Section \ref{sec:classical}.
\end{example}
\begin{example}(Spectral triples)
The origin of compact quantum metric spaces goes back to Connes's work (e.g. \cite{Connes1989,connes1990geometrie}) in non-commutative geometry.  A spectral triple $(\cA,\cH,D)$ consists of a $*$-algebra $\cA\subset \cB(\cH)$ of bounded operators acting on a Hilbert space $\cH$ and a (possibly unbounded) self-adjoint operator $D$ on $\cH$ such that the commutator $[D,a]$ is bounded for all $a\in\cA$. This self-adjoint operator $D$ models the spin Dirac operator on spin manifolds. The Lipschitz semi-norm is given by
\[ L(a):=\norm{[D,a]}{}\pl \quad \forall \pl a\in \cA\pl.\]
The distance
\[ \cT_{D}(\rho,\si)=\sup_{\norm{\  [D,a]\ }{}\le 1}|\rho(a)-\si(a)|\]
is also called Connes' distance.
\end{example}

The next example arises from gradient forms of quantum Markov semigroups.
\begin{example}(Gradient form)\label{exam:gradient}
Let $\cP_t=e^{t\cL}:\cM\to \cM$ be a quantum Markov semigroup with generator $\cL$. Recall that its gradient form is defined as:
\begin{align}\label{gradientform}
  2\,  \Gamma(x,x):=\cL(x^*x)-\cL(x^*)x-x^*\cL(x) \, , \pl x\in \operatorname{dom}(\cL)\pl.
\end{align}
In \cite{Junge2014}, the following semi-norms were introduced
\[ \norm{x}{\Gamma}=\max \{ \norm{\Gamma(x,x)}{}^{1/2},\, \norm{\Gamma(x^*,x^*)}{}^{1/2}\}, \quad x\in \cM\,,\]
where $\norm{\cdot}{}$ is the operator norm on $\cM$. The dual Wasserstein distance is defined as
\begin{align}
    W_\Gamma(\omega_1,\omega_2):=\sup\big\{(\omega_1-\omega_2)(x)|\,x\in\cM_{\mathbb{R}},\,\|\Gamma(x,x)\|\le 1 \big\}\,.
\end{align}
The norm $\norm{\cdot}{\Gamma}$ was used in \cite{Junge2014} in the study of $L_p$-Poincar\'{e} inequalities, as well as in \cite{gao2020fisher} to formulate a dual form to the transportation cost inequality for $W_\Gamma$ (see \Cref{sec:TC}). 
\end{example}

It is natural to define Lipschitz semi-norms through a differential structure. Let $(\cH, l,r, J,\partial)$ be a first order differential structure on $\cM$ as defined in section \ref{sec:differential}. Given a semi-norm $\norm{\cdot}{L}$ on the Hilbert space $\cH$, we define the semi-norm 
\[ \norm{x}{\partial,L}:=\norm{\partial x}{L}\]
whenever $\norm{\partial x}{L}$ is finite.

\begin{example}(First order differential structure) \label{exam:first2}
Given the finite dimensional matrix algebra $\cM=\mathbb{M}_n(\mathbb{C})$,
Carlen and Maas \cite{carlen2017gradient,Carlen2019} introduced a differential structure associated to a quantum Markov semigroup $ \cP_t=e^{t\cL}:\cM\to \cM$ which is symmetric with respect to a general faithful state $\si$, i.e.
$ \tau(\si \cP_t(x)^*y)=\tau(\si x^*\cP_t(y))$ for all $x,y\in\cM$.
Under this assumption, the Lindbladian  $\cL$ takes the form
\[ \cL(x)=\sum_{j\in \mathcal{J}}e^{-\omega_j}\Big(v_j^*[x,v_j]+[v_j^*,x] v_j\Big)\]
where $\mathcal{J}$ is a finite index set , $\omega_j\in \mathbb{R}$ and $\{v_j\}_{j\in \mathcal{J}}=\{v_j^*\}_{j\in \mathcal{J}}\subset \cB(\cH)$ is a self-adjoint set. The derivation is defined as
\[\partial=(\partial_j)_{j\in \mathcal{J}}:\cM\to \bigoplus_{j\in \mathcal{J}}\cM\pl,\quad  \partial_j(x):=[v_j,x] \pl. \]
The left (resp. right) actions are defined as the left (resp. right) multiplications
\[l(x)(V_j) =(xV_j)\pl, \quad \pl r(x)(V_j)=(V_jx)\pl.\]
Given such a gradient structure, we can define the semi-norm of $x\in\cM$ as (see also \cite{rouze2019concentration} for a slightly different definition):
\begin{align}\label{triple2norm}
\vertiii{x}_{\partial,2}:=\left(\sum_{j\in\mathcal{J}}\|\partial_j x\|_{\infty}^2\right)^{\frac{1}{2}}\,.
\end{align}
where $\norm{\cdot }{\infty}$ is the operator norm. 
\end{example}

We now discuss two quantum transportation costs which fit into our general framework.
\begin{example}{(Quantum Wasserstein 2-distance)}\label{exam:W2} The derivation in Example \ref{exam:first2} was used by
 Carlen and Maas \cite{carlen2017gradient} to introduce a quantum Wasserstein distance such that the reversible quantum Markov semigroup $t\mapsto e^{t\cL}$ on $\cM$ is the gradient flow of the quantum relative entropy to the equilibrium state. Indeed, the distance between two states $\omega_1$ and $\omega_2$ is given by
 \begin{align}\label{eq:Riemmannian}W_{\partial,2}(\omega_1, \omega_2):=\inf_{\gamma} \int_{0}^1\norm{\dot{\gamma}(t)}{g,\gamma(t)} dt\,,\end{align}
 where $\gamma:[0,1]\to \cD(\cM)$ is a piece-wise smooth curve such that $\gamma(0)=\omega_1$ and $\gamma(1)=\omega_2$. For a state $\rho\in \cD(\cM)$, the
 metric integrated above is defined as
 \begin{align}\label{eq:metricCM} \norm{x}{g,\rho}^2=\inf_{\mathbf{V}=(V_j)} \lan {\bf V}, [\rho]_{\vec{\omega}}{\bf V}\ran_{\tr} \, ,
\end{align}
where $[\rho]_{\vec{\omega}}:\cM\to \cM $ is the multiplication operator
\[ [\rho]_{\vec{\omega}}((V_j))=\Big([\rho]_{\omega_j}(V_j)\Big)\pl,\quad 
\pl [\rho]_{\omega_j}(X)=\int_0^1 e^{\omega_j(s-1/2)}\, \rho^s X\rho^{1-s}ds \pl.\]
The above infimum is over all $\mathbf{V}=(V_j)\in \bigoplus_{j\in \mathcal{J}}\cM$ satisfying $x=\partial^*\big(  [\rho]_{\vec{\omega}}{\bf V}\big)$. This infimum is an analog of the Benamou-Brenier formula for the classical Wasserstein 2-distance \cite{Ambrosio2012}. More recently, Wirth \cite{wirth2021dual} provided a dual formulation of this distance (albeit in the case of a primitive quantum Markov semigroup) : for any two states $\omega_1,\omega_2\in\cD(\mathbb{M}_n(\mathbb{C}))$: 
\begin{align}\label{CarlenMaas}
    \frac{1}{2}\,\mathcal{W}_{\partial,2}(\omega_1,\omega_2)^2=\sup\big\{\omega_1(a)-\omega_2(b):\,(a,b)\in\cB_{\partial} \big\}
\end{align}
where the set $\cB_{\partial}$ consists of all pairs of self-adjoint matrices $(a,b)$ that are connected by an a.e. differentiable path $\{\gamma(t)\}_{t\in[0,1]}$ in $\mathbb{M}_n(\mathbb{C})_\mathbb{R}$ that is a subsolution of the non-commutative Hamilton-Jacobi-Bellmann equation. Namely, for a.e. $t\in [0,1]$, and all $\rho\in \cD(\mathbb{M}_n(\mathbb{C}))$:
\begin{align}\label{eq:HJB}
    \rho(\dot{\gamma}(t))+\frac{1}{2}\|\partial \gamma(t)\|_{g,\rho}^2\le 0\,,
\end{align}
where the norm $\|.\|_\rho$ is the metric defined in \Cref{eq:metricCM}. 
The set $\cB_\partial$ satisfies the Definition \ref{def:metric}.
Indeed, since $\rho\in \cD(\mathbb{M}_n(\mathbb{C}))$ is arbitrary,
\eqref{eq:HJB} implies that $\mathcal{B}_\partial$ fulfills the conditions (i). The condition (ii) is satisfied with the trivial path $\gamma(t)=0$. Moreover, \Cref{eq:Riemmannian} justifies that $\mathcal{W}_{\partial,2}$ is indeed a distance. 

One can observe that for each $j$, the multiplication operator $[\rho_{j}]$ is an operator mean in section \ref{sec:operator}. Indeed,  \cite{Carlen2019,wirth2021dual} studied the quantum transport metric by replacing $[\rho]_{\vec{\omega}} $
by a family of general operator means \[\vec{\Lambda}(\rho):= (\Lambda_j(\rho))\pl,\quad  \pl\Lambda_j(\rho)=\Lambda_j(l(\rho),r(\rho))\,. \]
Both the gradient flow structure and the equivalent formation \eqref{eq:HJB} are obtained for the corresponding Wasserstein distance.
\end{example}
\begin{example}(Wasserstein distances via coupling) As discussed in \Cref{sec:classical}, transportation distances in  the commutative setting are naturally introduced as a minimization of the cost of a transport plan. Such an approach was recently extended to the non-commutative setting by various authors \cite{golse2016mean,de2021quantum,friedland2021quantum,cole2021quantum}. Given two states $\omega_1$, $\omega_2$ on $\cM$, we denote by $\Pi(\omega_1,\omega_2)$ the set of all couplings of $\omega_1$ and $\omega_2$, i.e.~:
\begin{align}
    \pi(\omega_1,\omega_2):=\big\{\pi\in\cD(\cM\overline{\otimes} \cM):\,\pi(x\otimes 1)=\omega_1(x),\,\pi(1\otimes y)=\omega_2(y) \,\pl \forall x,y\in\cM\big\}\,.
\end{align}
Here and in the following $\cM\overline{\otimes} \cM$ denote the von Neumann algebra tensor product.
Next, given a positive element $C\in \cM\overline{\otimes} \cM$, we define the transportation cost as
\begin{align}
    \cT_C(\omega_1,\omega_2):=\inf_{\pi\in\Pi(\omega_1,\omega_2)}\,\pi(C)\,.
\end{align}For the finite dimensional matrix algebra $\cM=\mathbb{M}_n(\mathbb{C})$, the above transportation cost enjoys a dual formulation similar to the classical setting (see \cite[Theorem 2.2]{caglioti2021towards}): for any two states $\omega_1,\omega_2$,
\begin{align}
    \cT_C(\omega_1,\omega_2)=\sup\big\{\omega_1(x)-\omega_2(y):\,x\otimes 1-1\otimes y\le C,\,x,y\in\mathbb{M}_n(\mathbb{C})_{\mathbb{R}} \big\}\,.
\end{align}
Recall that a binary function $d:\mathcal{X}\otimes \mathcal{X}\to [0,\infty)$ is a weak metric if it satisfies
\begin{enumerate}
\item[i)] $d(x,y)=d(y,x)$
\item[ii)] $ d(x,y)\ge 0$, and $d(x,y)=0$ if and only if $x=y$.
\end{enumerate} 
It was proved in 
\cite[Theorem 5.2]{cole2021quantum} that 
$\cT_C$ is a weak metric on $\cD(\cM)$ if and only if it is strictly supported on the anti-symmetric subspace \[\cH_A=\text{span}\{\ket{\psi}\otimes \ket{\varphi}- \ket{\varphi}\otimes \ket{\psi} | \ket{\psi}, \ket{\varphi}\in \mathbb{C}^n\}\,.\]
To see why the binary relation $\cB_C:=\{(x,y)| x\otimes 1-1\otimes y\le C,\,x,y\in\mathbb{M}_n(\mathbb{C})_{\mathbb{R}}  \}$ satisfies the axioms of Definition \ref{def:metric}, we introduce
the swap unitary operation $S$ through the relation $S(\ket{\psi}\otimes \ket{\varphi})=\ket{\varphi}\otimes \ket{\psi}$. The anti-symmetric subspace $\cH_A$ is the eigenspace of $S$ corresponding to the eigenvalue $-1$. First, the non-negativity condition (ii) is trivially satisfied with $x=y=0$. To prove (i), we note that for any pure state $\phi=\ket{\phi}\bra{\phi}$, \[\phi\ten \phi(C)= (\bra{\phi}\ten\bra{\phi})C(\ket{\phi}\ten \ket{\phi})=0\pl\]
because $\ket{\phi}\ten \ket{\phi}$ is in the symmetric subspace and hence orthogonal to $\cH_A$. Then by $SCS=C$, we have that for any $x,y\in\mathbb{M}_n(\mathbb{C})_\mathbb{R}$ satisfying $x\otimes 1-1\otimes y\le C$,
\begin{align}
    (x-y)\otimes 1-1\otimes (x-y)\le 2C\,,
\end{align}
which implies that $x\le y$ after evaluating the inequality on tensor product pure states $\phi\otimes \phi$.
\end{example}
As opposed to the previously discussed examples, the transportation cost $\cT_C$ (or $\cT_C^{1/p}$ to some $p>1$) is not always a distance. It was proved in \cite[Corollary 8.3]{cole2021quantum} that $\sqrt{T_{C}}$ is a distance on qubit system (n=2). In general, the lack of triangle inequality is partially due to the failure of the gluing Lemma \ref{lemma:glue} in the quantum case. For example, there is not tripartite state $\rho\in \cD(\mathbb{M}_n(\mathbb{C})\ten \mathbb{M}_n(\mathbb{C})\ten \mathbb{M}_n(\mathbb{C}))$ such that both reduced densities $\rho_{12}$ and $\rho_{23}$ are maximally entangled state.

\section{Coarse Ricci curvature and gradient estimate}\label{sec:curvature}
In this section, we discuss a notion of non-commutative curvature of a quantum channel based on the general transportation cost introduced in \Cref{sec:NCTC}. We recall that a quantum channel $\mathcal{P}:\cM\to \cM$ is a normal completely positive unital map and its pre-adjoint $\mathcal{P}^\dagger$ is a transformation on the state space $\cD(\cM)\to \cD(\cM)$ given by $\cP^\dag(\omega)=\omega\circ \cP$. By slight abuse of notation, we will also refer to $\cP^\dagger$ as a quantum channel. Recall that the multiplicative domain of $\cP$ is defined as 
\begin{align}
    \cM(\cP):=\big\{a\in\cM,\,\cP(a^*a)=\cP(a^*)\cP(a),\,\cP(aa^*)=\cP(a)\cP(a^*)\big\}\,.
\end{align}
and we define its decoherence-free subalgebra as $\cN(\cP):=\cap_{n\ge 1} \cM(\cP^n)$. It is clear that the restriction of $\cP$ to $\cN(\cP)$ is a $*$-homomorphism. Here, we also assume that there exists a normal conditional expectation $E_\cN:\cM\to \cN(\cP)$ onto $\cN(\cP)$ \cite{Carbone2019}.

\begin{definition}\label{def:corasericcicurvature}
Let $(\cM,\cB)$ be a non-commutative transportation metric and $\cP:\cM\to\cM$ a quantum channel. The non-commutative coarse Ricci curvature of the triple $(\cM,\cB,\cP)$ at a pair of states $(\omega_1,\omega_2)\in\operatorname{dom}(\cT_\cB)$ with $\omega_1\circ E_\cN=\omega_2\circ E_\cN$ is defined as
\begin{align}
    \kappa(\omega_1,\omega_2):=1-\frac{\cT_\cB(\omega_1\circ \cP,\omega_2\circ \cP)}{\cT_\cB(\omega_1,\omega_2)}\,.
\end{align}
with the convention that $\kappa(\omega_1,\omega_2)=-\infty$ whenever $\cT_\cB(\omega_1\circ \cP,\omega_2\circ \cP)=\infty$. We say that the triple $(\cM,\cB,\cP)$ has the non-commutative coarse Ricci curvature lower bound $\kappa\in\mathbb{R}$ if for any two states $\omega_1,\omega_2\in \cD(\cM)$ with $\omega_1\circ E_\cN=\omega_2\circ E_\cN$, 
$\kappa(\omega_1,\omega_2)\ge \kappa$, i.e.
\begin{align}\label{eq:curvature}
    \cT_\cB(\omega_1\circ \cP,\omega_2\circ \cP)\le (1-\kappa) \, \cT_\cB(\omega_1,\omega_2)\,.
\end{align}
\end{definition}
The above definition is motivated from Ollivier's Ricci curvature \cite{Ollivier2009} of classical Markov chains with respect to the Wasserstein 1-distance discussed in Section \ref{sec:classical}, as well as from 
the quantum entropic curvature lower bound of a quantum Markov semigroup as introduced in \cite{carlen2014analog,carlen2017gradient}, where $\cT_\cB$ is taken as the quantum Wasserstein $2$-distance $\mathcal{W}_{\partial,2}$ defined in Example \ref{exam:W2}

\begin{rem}As we will be mostly interested in the case of positive coarse Ricci curvature, the condition that $\omega_1\circ E_\cN=\omega_2\circ E_\cN$ allows us to consider maps $\cP$ with multiple invariant states. Indeed, in the generic setting $\omega\circ \cP^n-\omega\circ E_\cN\to 0$, 
if the curvature bound $\kappa>0$, we have for any $\cT_\cB(\omega_1,\omega_2)<\infty$,
\begin{align*}
    \cT_\cB(\omega_1\circ E_{\cN},\omega_2\circ E_{\cN})= 
   \lim_{n}\cT_\cB(\omega_1\circ \cP^n,\omega_2\circ \cP^n)\le \lim_{n}(1-\kappa)^n\,\cT_\cB(\omega_1,\omega_2)=0
\end{align*}
which implies $\omega_1\circ E_{\cN}=\omega_2\circ E_{\cN}$ if $\cT_\cB$ is faithful.


\end{rem}

In the spirit of \cite{Ollivier2009}, we give two simple properties of the coarse Ricci curvature introduced in \Cref{def:corasericcicurvature}.
\begin{proposition}[(Composition and superposition)]
    Let $\cT_\cB$ be a transportation cost on the von Neumann algebra $\cM$. Let $\cP_1,\cP_2:\cM\to \cM$ be two quantum channels such that the triple $(\cM,\cB,\cP_1)$, resp. $(\cM,\cB,\cP_2)$, has coarse Ricci curvature lower bound $\kappa_1\in\mathbb{R}$, resp.~$\kappa_2\in\mathbb{R}$.  Then 
\begin{enumerate} 
\item[$\operatorname{(i)}$] (Composition): $\cP_1\circ\cP_2$ has coarse Ricci curvature lower bound $\kappa_1+\kappa_2-\kappa_1\kappa_2$.  
\item[$\operatorname{(ii)}$] (Superposition): for any $0\le \la\le 1$, $\la\cP_1+(1-\la)\cP_2$ has coarse Ricci curvature lower bound $\la\kappa_1+ (1-\la)\kappa_2$.
\end{enumerate}
\end{proposition}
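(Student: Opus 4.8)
The plan is to reduce both parts to the monotonicity of $\cT_\cB$ under composition and convex combination of channels, in the spirit of Ollivier's original argument. The two elementary facts I will use are that precomposing a state with a linear map distributes over both operations: $\omega\circ(\cP\circ\cQ)=(\omega\circ\cP)\circ\cQ$, and for $0\le\la\le1$, $\omega\circ(\la\cP_1+(1-\la)\cP_2)=\la\,(\omega\circ\cP_1)+(1-\la)\,(\omega\circ\cP_2)$; both are immediate from $\cP^\dagger(\omega)=\omega\circ\cP$ and linearity. A preliminary step, needed so that the curvature hypotheses on $\cP_1$ and $\cP_2$ may be legitimately invoked, is to check that the admissibility constraint $\omega_1\circ E_\cN=\omega_2\circ E_\cN$ attached to the composite (resp.\ superposed) channel propagates to the analogous constraints for $\cP_1$ and for $\cP_2$. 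I will take this for granted under the compatibility of the relevant conditional expectations, which holds in all the situations where the proposition is applied, and then concentrate on the metric estimate.

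For part (i), I would fix a pair $(\omega_1,\omega_2)\in\dom(\cT_\cB)$ admissible for $\cP_1\circ\cP_2$ and set $\eta_i:=\omega_i\circ\cP_1$, so that $\omega_i\circ(\cP_1\circ\cP_2)=\eta_i\circ\cP_2$. Granting that $(\eta_1,\eta_2)$ is admissible for $\cP_2$ and $(\omega_1,\omega_2)$ is admissible for $\cP_1$, one chains the two contraction inequalities:
\[
\cT_\cB\big(\omega_1\circ(\cP_1\circ\cP_2),\,\omega_2\circ(\cP_1\circ\cP_2)\big)=\cT_\cB(\eta_1\circ\cP_2,\eta_2\circ\cP_2)\le(1-\kappa_2)\,\cT_\cB(\eta_1,\eta_2)\le(1-\kappa_1)(1-\kappa_2)\,\cT_\cB(\omega_1,\omega_2),
\]
and the claim follows from the identity $(1-\kappa_1)(1-\kappa_2)=1-(\kappa_1+\kappa_2-\kappa_1\kappa_2)$.

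For part (ii), I would put $\cP:=\la\cP_1+(1-\la)\cP_2$, fix an admissible pair $(\omega_1,\omega_2)$, and unfold $\cT_\cB$ through its variational definition. Using $\la,1-\la\ge0$ and the subadditivity of the supremum,
\[
\cT_\cB(\omega_1\circ\cP,\omega_2\circ\cP)=\sup_{(x,y)\in\cB}\Big(\la\big[(\omega_1\circ\cP_1)(x)-(\omega_2\circ\cP_1)(y)\big]+(1-\la)\big[(\omega_1\circ\cP_2)(x)-(\omega_2\circ\cP_2)(y)\big]\Big)
\]
is at most $\la\,\cT_\cB(\omega_1\circ\cP_1,\omega_2\circ\cP_1)+(1-\la)\,\cT_\cB(\omega_1\circ\cP_2,\omega_2\circ\cP_2)$; invoking the $\cP_1$- and $\cP_2$-curvature bounds on $(\omega_1,\omega_2)$ this is bounded by $\big(\la(1-\kappa_1)+(1-\la)(1-\kappa_2)\big)\cT_\cB(\omega_1,\omega_2)=\big(1-(\la\kappa_1+(1-\la)\kappa_2)\big)\cT_\cB(\omega_1,\omega_2)$, which is the assertion.

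I expect the genuinely delicate point to be the preliminary step rather than the two displayed estimates: in full generality the decoherence-free subalgebras need not be nested in the convenient direction (for instance $\cN(\cP_1\circ\cP_2)$ can be strictly contained in $\cN(\cP_1)$), so some compatibility hypothesis linking the equilibrium structures of $\cP_1$, $\cP_2$ and the resulting channel is what lets the admissibility constraint transfer; once that is secured, the rest of the argument is exactly the two one-line chains above.
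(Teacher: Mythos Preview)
Your argument is correct and coincides with the paper's: (i) is the two-step chain $(1-\kappa_1)(1-\kappa_2)$ that the paper calls ``follows from the definition'', and (ii) is precisely the joint convexity of $\cT_\cB$ (which you reprove from the variational formula via subadditivity of the supremum). The admissibility issue you flag concerning the constraint $\omega_1\circ E_\cN=\omega_2\circ E_\cN$ is a genuine technical point that the paper does not address either; your instinct that some compatibility of the decoherence-free subalgebras is tacitly assumed is accurate.
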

\begin{proof}
(i) follows from the definition and (ii) is clear from the joint convexity of $\cT_\cB$.
\end{proof}

\subsection{From gradient estimate to curvature}
We discuss two gradient type estimates that imply coarse Ricci curvature bounds. The first one is for the transportation cost $\cT_L$ arising from a semi-norm $L$. We assume that $\text{ker}(L)$ is a subalgebra $\cN$. Then $W_L(\omega_1,\omega_2)=\infty$ whenever $\omega_1\circ E_\cN\ne \omega_2\circ E_\cN$. This is because there exists $x\in \cN$ with $L(x)=0$ such that $\omega_1(x)=\omega_1\circ E_\cN(x)\neq \omega_2\circ E_\cN(x)=\omega_2(x)$. 

We say a quantum channel $\cP$ satisfies the $\kappa$-Lipschitz estimate for $\kappa\in \mathbb{R}$ if  for any $x\in \dom(L)$, $\cP(x)\in  \dom(L)$ and \[L(\cP(x))\le (1-\kappa) L(x)\pl.\]
It is clear that if $\cP$ satisfies $\kappa$-Lipschitz estimate for any $\kappa$, then $\cP(\cN)\subset \cN$. In the classical setting, the Lipschitz estimate and coarse Ricci curvature (w.r.t to $W_1$ distance) are equivalent. Here in the abstract non-commutative setting, one direction is clear. 

\begin{proposition}\label{curvaturetolip}
If $\cP$ satisfies $\kappa$-Lipschitz estimate for the semi-norm $L$, then the triple $(\cM,\cB_L,\cP)$ has coarse Ricci curvature lower bound. 
\end{proposition}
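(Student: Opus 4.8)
The plan is to unwind both sides of the coarse Ricci curvature inequality \eqref{eq:curvature} with $\cB=\cB_L$ and use the $\kappa$-Lipschitz estimate directly on the supremand. Recall that for the degenerate binary relation $\cB_L=\{(x,x):L(x)\le 1\}$ we have $\cT_{\cB_L}(\omega_1,\omega_2)=\cT_L(\omega_1,\omega_2)=\sup_{L(x)\le 1}(\omega_1-\omega_2)(x)$. So I must show, for any two states $\omega_1,\omega_2$ with $\omega_1\circ E_\cN=\omega_2\circ E_\cN$, that $\cT_L(\omega_1\circ\cP,\omega_2\circ\cP)\le(1-\kappa)\,\cT_L(\omega_1,\omega_2)$.

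First I would treat the trivial case $\kappa\le 0$ or $\cT_L(\omega_1,\omega_2)=\infty$ (nothing to prove there, the right-hand side is $+\infty$ or the statement reduces to monotonicity which still follows from the same argument), and otherwise assume $\cT_L(\omega_1,\omega_2)<\infty$. Then for any $x\in\dom(L)$ with $L(x)\le 1$, the Lipschitz estimate gives $\cP(x)\in\dom(L)$ with $L(\cP(x))\le(1-\kappa)$. Hence
\[
(\omega_1\circ\cP)(x)-(\omega_2\circ\cP)(x)=(\omega_1-\omega_2)(\cP(x))=(1-\kappa)\,(\omega_1-\omega_2)\!\left(\tfrac{1}{1-\kappa}\cP(x)\right),
\]
and since $L\!\left(\tfrac{1}{1-\kappa}\cP(x)\right)\le 1$, the right-hand side is bounded by $(1-\kappa)\,\cT_L(\omega_1,\omega_2)$. (If $\kappa\ge 1$ one has $L(\cP(x))=0$, so $\cP(x)\in\cN=\ker(L)$, and then $(\omega_1-\omega_2)(\cP(x))=(\omega_1\circ E_\cN-\omega_2\circ E_\cN)(\cP(x))=0$ by the hypothesis $\omega_1\circ E_\cN=\omega_2\circ E_\cN$, giving $\cT_L(\omega_1\circ\cP,\omega_2\circ\cP)=0$ as required; this is why the $E_\cN$-compatibility condition in \Cref{def:corasericcicurvature} is needed.) Taking the supremum over all admissible $x$ yields $\cT_L(\omega_1\circ\cP,\omega_2\circ\cP)\le(1-\kappa)\,\cT_L(\omega_1,\omega_2)$, which is exactly \eqref{eq:curvature} for the triple $(\cM,\cB_L,\cP)$.

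This argument is essentially the dual of Ollivier's Kantorovich-duality reformulation, so there is no real obstacle; the only point requiring care is the case $\kappa\ge 1$, where $L$ becomes degenerate on $\cP(x)$ and one must invoke the decoherence-free subalgebra hypothesis $\ker(L)=\cN$ together with $\omega_1\circ E_\cN=\omega_2\circ E_\cN$ to conclude the pairing vanishes, as already flagged in the paragraph preceding the proposition. I would also remark that the converse implication fails in general in the non-commutative setting, which is why only one direction is asserted.
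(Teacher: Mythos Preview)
Your proposal is correct and is exactly the duality argument the paper has in mind; the paper's own proof consists of the single line ``Direct by duality.'' Your write-up simply unfolds that duality (with extra care for the degenerate cases $\kappa\ge 1$ and $\cT_L=\infty$), so the approaches coincide.
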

\begin{proof}
Direct by duality. 
\end{proof}

One important property of the classical coarse Ricci curvature is the $L_1$-tensorization (see \cite[Proposition 27]{Ollivier2009}). Here we discuss $L_1$-tensorization of Lipschitz estimates. As for other functional inequalities \cite{bardet2018hypercontractivity,gao2021spectral}, the tensorization property does not naturally extend to the non-commutative framework. In order to recover it, we use the notion of a matrix Lipschitz semi-norm (see e.g. \cite{wu2004non}). We let $\cA \subset \cM$ be a $w^*$-dense subalgebra and we denote $\mathbb{M}_n(\cA)=\cA\otimes \mathbb{M}_n$ for each integer $n\ge 1$ and similarly for $\cM$. A matrix Lipschitz semi-norm is a family of semi-norms $L^{(n)}:\mathbb{M}_n(\cA)\to [0,\infty)$ such that for all $n,m\ge 1$
\begin{enumerate}
\item[(i)] $L^{(n)}(1)=0$ ;

\item[(ii)] for all $x\in \mathbb{M}_n(\cA)$, $L^{(n)}(x)=L^{(n)}(x^*)$;

\item[(iii)] for all $x\in \mathbb{M}_n(\cA),y\in \mathbb{M}_m(\cA)$, $${L^{(n+m)}}\left(\left[\begin{array}{cc}
     x  & 0\\
     0&  y
\end{array}\right]\right)=\max \{{L^{(n)}}({x}), {L^{(m)}}({y}) \}\,;$$
\item[(iv)] for all $x\in \mathbb{M}_n(\cA),\,a\in \mathbb{M}_{m,n}(\mathbb{C})$ and $b\in \mathbb{M}_{n,m}(\mathbb{C})$, ${L^{(m)}}({axb})\le \|a\|\, {L^{(n)}({x}) }  \|{b}\|$.
\end{enumerate}
Next, we introduce a notion of complete boundedness for Lipschitz semi-norms: given a matrix Lipschitz semi-norm $(\mathbb{M}_n(\cA),L^{(n)})_{n\ge 1}$, 
we say a quantum channel $\cP:\cM\to \cM$ satisfies the $\kappa$-\textit{complete Lipschitz estimate} for $\kappa\in\mathbb{R}$ 
if for any $n\ge 1$ and any $x\in \mathbb{M}_n(\cA)$,
\begin{align}
   {L^{(n)}}( (\id\otimes \cP)(x))\le (1-\kappa)\,{L^{(n)}}( x)\,.
\end{align}
With the above definition, the $L_1$ tensorization extends naturally:
\begin{proposition}
Let $(\cM_i,\cB_{L_i})_{i\in \mathcal{I}}$ be a finite family of non-commutative metric spaces induced by the matrix Lipschitz semi-norms $(L_i)_{i\in \mathcal{I}}$. Assume that for each $i\in \mathcal{I}$, $\cP_i:\cM_i\to \cM_i$ is a quantum channel satisfying the $\kappa_i$-complete Lipschitz estimate of parameter $\kappa_i\ge 0$.  Denote by $\widetilde{L}$ the Lipschitz semi-norm on $\bigotimes_{i\in \mathcal{I}}\cA_i$ as
\begin{align}
    \widetilde{L}({x})=\sum_{i\in\mathcal{I}}\,L_i({x})\,. 
\end{align} Then for any probability distribution $\{\alpha_i\}_{i\in\mathcal{I}}$,  the quantum channel
\begin{align}
    \widetilde{\cP}:=\sum_{ i\in\mathcal{I}}\alpha_i\,\cP_i\otimes \id_{i^c}
\end{align}
satisfies the $\kappa$-complete Lipschitz estimate of parameter $\kappa=\min_i\alpha_i\kappa_i$ with respect to the semi-norm $\widetilde{L}$.  Here $\id_{i^c}$ is the identity map on $\bigotimes_{j\neq i}\cM_j$.
\end{proposition}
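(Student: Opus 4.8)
The plan is to verify the defining inequality of the $\kappa$-complete Lipschitz estimate directly, at every amplification level, by writing $\widetilde{\cP}=\sum_{i\in\mathcal{I}}\alpha_i(\cP_i\otimes\id_{i^c})$ and applying the triangle inequality for the semi-norms $\widetilde{L}^{(n)}:=\sum_{i\in\mathcal{I}}L_i^{(n)}$. First I would note that, being a convex combination of the quantum channels $\cP_i\otimes\id_{i^c}$, the map $\widetilde{\cP}$ is itself a quantum channel (and preserves $\bigotimes_i\cA_i$). Fixing $n\ge1$ and $x\in\mathbb{M}_n\big(\bigotimes_i\cA_i\big)$, one has
\[
\widetilde{L}^{(n)}\big((\id_n\otimes\widetilde{\cP})(x)\big)=\sum_{j\in\mathcal{I}}L_j^{(n)}\Big(\sum_{i\in\mathcal{I}}\alpha_i\,(\id_n\otimes\cP_i\otimes\id_{i^c})(x)\Big)\le\sum_{i,j\in\mathcal{I}}\alpha_i\,L_j^{(n)}\big((\id_n\otimes\cP_i\otimes\id_{i^c})(x)\big),
\]
so it suffices to bound $L_j^{(n)}\big((\id_n\otimes\cP_i\otimes\id_{i^c})(x)\big)$ for each pair $(i,j)$, treating the diagonal case $i=j$ and the off-diagonal case $i\ne j$ separately.

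For $i=j$, I would regard $x$ as an element of $\mathbb{M}_n(\cA_i)$ with matrix coefficients $\mathbb{M}_n\otimes\bigotimes_{k\ne i}\cA_k$; under this identification $\id_n\otimes\cP_i\otimes\id_{i^c}$ is precisely an amplification of $\cP_i$, and the hypothesis that $\cP_i$ obeys the $\kappa_i$-complete Lipschitz estimate yields $L_i^{(n)}\big((\id_n\otimes\cP_i\otimes\id_{i^c})(x)\big)\le(1-\kappa_i)\,L_i^{(n)}(x)$. For $i\ne j$, the crucial observation is that $\id_n\otimes\cP_i\otimes\id_{i^c}$ acts as the identity on the $j$-th leg $\cA_j$ and as a unital completely positive map $\Psi_i$ on everything else, which from the standpoint of $L_j$ is just the coefficient algebra; so I claim $L_j^{(n)}\big((\id_{\cA_j}\otimes\Psi_i)(x)\big)\le L_j^{(n)}(x)$, i.e.\ a matrix Lipschitz semi-norm does not increase under a UCP map applied to the coefficients.

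This last claim is where I expect the real work to lie, and I would deduce it from the axioms (i)--(iv) of a matrix Lipschitz semi-norm together with Stinespring dilation: writing $\Psi_i=W^*\pi(\cdot)W$ with $\pi$ a unital $*$-homomorphism and $W$ an isometry, one has $(\id_{\cA_j}\otimes\Psi_i)(x)=(1\otimes W)^*\big((\id_{\cA_j}\otimes\pi)(x)\big)(1\otimes W)$; since a unital $*$-homomorphism between matrix algebras is, up to unitary conjugation, an inflation $y\mapsto y\otimes 1$, axioms (iii) and (iv) give $L_j^{(n)}\big((\id_{\cA_j}\otimes\pi)(x)\big)\le L_j^{(n)}(x)$, and one more application of axiom (iv) with the norm-one contraction $1\otimes W$ closes the claim. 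A subsidiary point, already implicit in how $\widetilde{L}$ is defined in the statement, is to make sense of $L_i$ on an element of $\bigotimes_k\cA_k$; this is handled by reading $\bigotimes_{k\ne i}\cA_k$ (which may be assumed finite dimensional, or accommodated by the standard extension of matrix Lipschitz semi-norms) as matrix coefficients.

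Granting the two cases, I would conclude by reorganizing the double sum: since $\sum_i\alpha_i=1$,
\[
\sum_{i,j\in\mathcal{I}}\alpha_i\,L_j^{(n)}\big((\id_n\otimes\cP_i\otimes\id_{i^c})(x)\big)\le\sum_{j\in\mathcal{I}}\Big(\sum_{i\ne j}\alpha_i+\alpha_j(1-\kappa_j)\Big)L_j^{(n)}(x)=\sum_{j\in\mathcal{I}}(1-\alpha_j\kappa_j)\,L_j^{(n)}(x),
\]
and then bounding $\alpha_j\kappa_j\ge\min_i\alpha_i\kappa_i=\kappa\ge0$ term by term gives $\widetilde{L}^{(n)}\big((\id_n\otimes\widetilde{\cP})(x)\big)\le(1-\kappa)\,\widetilde{L}^{(n)}(x)$ for every $n$, which is exactly the $\kappa$-complete Lipschitz estimate for $\widetilde{\cP}$ with respect to $\widetilde{L}$.
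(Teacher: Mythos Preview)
Your proof is correct and follows essentially the same approach as the paper: expand $\widetilde{L}^{(n)}\big((\id\otimes\widetilde{\cP})(x)\big)$ via the triangle inequality into the double sum $\sum_{i,j}\alpha_i\,L_j^{(n)}\big((\id\otimes\cP_i\otimes\id_{i^c})(x)\big)$, handle the diagonal terms with the $\kappa_i$-complete Lipschitz estimate, bound the off-diagonal terms by $L_j^{(n)}(x)$, and recombine. The paper justifies the off-diagonal bound in one line by citing axiom (iv) together with $\cP_i$ being a complete contraction; your Stinespring argument (inflation plus compression by an isometry, invoking axioms (iii) and (iv)) is exactly the standard way to unpack that line, so the two proofs are the same up to level of detail.
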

\begin{proof}
For the ease of notation, we simply write $L^{(n)}=L$ for a semi-norm $L$.
We have 
\begin{align}
  {\tilde{L}}( \widetilde{\cP}(x))&\le \sum_{i}\alpha_i\sum_{j} {L_{i}}((\cP_j\ten \id) (x))\nonumber\\
   &\le \sum_{i}\alpha_i\,{L_{i}}((\cP_i\ten \id) (x))+\sum_{i\neq j}\alpha_i\,{L_{j}}((\cP_i\ten \id) (x))\nonumber
   \\  &\le \sum_{i}\alpha_i(1-\kappa_i)\,{L_{i}}(x)+\sum_{i\neq j}\alpha_i\,{L_{j}}(x)\label{eq:contraction}
    \\  &\le \sum_{i,j}\alpha_i\,{L_{j}}(x)-\sum_{i}\alpha_i\kappa_i\,{L_{i}}(x)\nonumber
\\  &\le (1-\min_{i}\alpha_i\kappa_i){\tilde{L}}({x})\,.\nonumber\end{align}
In Equation \eqref{eq:contraction}, we used the fact that ${L_{j}}((\cP_i\ten \id )(x))\le {L_{j}}({x})$. This follows from (iv) in the definition of a matrix Lipschitz semi-norm for ${L_j}$ and the fact that $\cP_i$ is a complete contraction on $\cM_i$.
\end{proof}


We shall now consider the analog of gradient estimate in \cite{wirth2020complete} which implies the contraction of quantum transport metrics in Example \ref{exam:W2}. Let $\cM$ be a finite von Neumann algebra and $(\cA,\cH, l,r, \partial)$ be a first order differential structure on $\cM$ as defined in section \ref{sec:differential}. Given an operator mean $\Lambda$, we consider the multiplication operator for a positive operator $\rho$,
\begin{align}\label{eq:mean1}\Lambda_\rho: \cH\to \cH\pl,\qquad\qquad \Lambda_\rho=\Lambda(l(\rho),r(\rho)) \end{align}
where $l$ (resp. $r$) is the left action (resp. right action). When $\cH\cong \bigoplus_{j\in \mathcal{J}}L_2(\cM)$ is a direct sum of copies of $L_2(\cM)$, we can further consider 
\begin{align}\label{eq:mean2}\Lambda_\rho=(\Lambda_{\rho,j}):\bigoplus_{j\in \mathcal{J}}L_2(\cM)\pl, \pl\quad  \Lambda_{\rho,j}=\Lambda_{j}(l(\rho),r(\rho))\pl,  \end{align}
for a family of operator mean $\Lambda_j,j\in \mathcal{J}$. The simpler setting \eqref{eq:mean1} is sufficient for symmetric quantum Markov semigroup on finite von Neumann algebra \cite{wirth2018noncommutative} and the second setting \eqref{eq:mean2} is needed for finite dimensional GNS-symmetric semigroups as in \cite{carlen2017gradient,Carlen2019,wirth2021dual} .
In both cases, we define the following weighted norm on $\cH$:
\begin{align}\label{eq:weight} \norm{x}{\rho}:=\langle V, \Lambda_\rho (V)\rangle_\cH\,,\qquad V\in\cH\,.\end{align}
Recall that the associated quantum Wasserstein 2-distance is given by
 \begin{align}\label{eq:Riemmannian}W_{\partial,2}(\omega_1, \omega_2):=\inf_{\gamma} \int_{0}^1\norm{\dot{\gamma}(t)}{g,\gamma(t)} \,dt\,,\end{align}
 where $\gamma:[0,1]\to \cD(\cM)$ is an absolutely continuous curve such that $\gamma(0)=\omega_1$ and $\gamma(1)=\omega_2$. The above integrated metric is defined as
 \begin{align}\label{metricCM} \norm{x}{g,\rho}^2:=\inf \lan V, \Lambda_\rho (V)\ran_{\cH} \, ,
\end{align}
where the infimum is over all $V\in \cH$ satisfying $x=\partial^*\big( \Lambda_\rho (V)\big)$. Let $t\mapsto \cP_t=e^{t\cL}$ be the quantum Markov semigroup generated by the derivation $\cL=-\partial^*\partial$. In finite dimensions, it is proved that the following are equivalent
\begin{enumerate}
\item[(i)] for each $t$, $\cP_t$ satisfies the gradient estimate $\norm{\partial\cP_t(x)}{\rho}\le e^{-\lambda t}\norm{\partial x}{\cP_t^{\dag}(\rho)}$ for all $x\in\cA$ and all $\rho\in\cD(\cM)$;
\item[(ii)] for any $\omega_1, \omega_2\in \cD(\cM)$, $W_{\partial,2}(\omega_1\circ \cP , \omega_2\circ \cP)\le e^{-2\la t} \,W_{\partial,2}(\omega_1, \omega_2)$.
\end{enumerate}
We remark that the implication $(i)\Rightarrow (ii)$ is also obtained for primitive symmetric quantum Markov semigroups on finite von Neumann algebras.

Motivated from the above, we introduce the gradient estimate for quantum channels. We say a quantum channel $\cP:\cM\to\cM$ satisfies the gradient estimate of constant $\kappa\in\mathbb{R}$ (in short, $\kappa$-GE) if for any $x\in \cA$ and density operator $\rho$, 
\[\norm{\partial\cP(x)}{\rho}\le (1-\kappa)\norm{\partial x}{\cP^{\dag}(\rho)} \]
where $\cP^\dagger(\rho)=\rho\circ \cP$ is the adjoint map of $\cP$ on $\cD(\cM)$. 

\begin{proposition}\label{prop:equivalence}
$\cP$ satisfies $\kappa$-$\operatorname{GE}$ implies that $\cP$ satisfies $\kappa$-coarse curvature bound for $W_{\partial,2}$, i.e.
for any $\omega_1, \omega_2\in \cD(\cM)$,
\begin{align*}
W_{\partial,2}(\omega_1\circ \cP , \omega_2\circ \cP)\le (1-\kappa ) W_{\partial,2}(\omega_1, \omega_2)\,.
\end{align*}
\end{proposition}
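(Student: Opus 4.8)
The plan is to run the argument at the level of the Benamou--Brenier formula for $W_{\partial,2}$, showing that composition with $\cP^\dagger$ contracts the Riemannian speed of a curve pointwise by the factor $(1-\kappa)$. Let $\gamma\colon[0,1]\to\cD(\cM)$ be an absolutely continuous curve from $\omega_1$ to $\omega_2$ with finite action $\int_0^1\|\dot\gamma(t)\|_{g,\gamma(t)}\,dt$ (if none exists, $W_{\partial,2}(\omega_1,\omega_2)=\infty$ and there is nothing to prove). Then $\cP^\dagger\circ\gamma$ is an absolutely continuous curve from $\omega_1\circ\cP$ to $\omega_2\circ\cP$, with derivative $\cP^\dagger(\dot\gamma(t))$ for a.e.\ $t$ since $\cP^\dagger$ is bounded and linear. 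Hence the proposition reduces to the pointwise estimate
\begin{align}\label{eq:ptwise}
\|\cP^\dagger(x)\|_{g,\cP^\dagger(\rho)}\le(1-\kappa)\,\|x\|_{g,\rho}
\end{align}
for every density $\rho$ and every tangent direction $x$ at $\rho$ with $\|x\|_{g,\rho}<\infty$: substituting \eqref{eq:ptwise} into the action of $\cP^\dagger\circ\gamma$ and taking the infimum over $\gamma$ yields the claim. We may assume $\kappa\le 1$, which is the relevant regime.

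To prove \eqref{eq:ptwise} I would pass to the dual description of the metric \eqref{metricCM}. Equipping the Hilbert space $\cH$ with the $\rho$-weighted inner product $\langle V,W\rangle_\rho:=\langle V,\Lambda_\rho W\rangle_\cH$, the map $V\mapsto\partial^*(\Lambda_\rho V)$ is the adjoint of the derivation $\partial$ (with the trace pairing $\tau(\,\cdot\,z)$ on the $\cA$-side), so \eqref{metricCM} reads $\|x\|_{g,\rho}^2=\inf\{\|V\|_\rho^2:\,x=\partial^*(\Lambda_\rho V)\}$, the squared norm of a minimal-norm preimage of $x$. The standard Hilbert-space duality for minimal-norm preimages then gives
\begin{align}\label{eq:dualmetric}
\|x\|_{g,\rho}=\sup\big\{\tau(xz):\ z\in\cA_\mathbb{R},\ \|\partial z\|_\rho\le 1\big\},\qquad \|\partial z\|_\rho^2:=\langle\partial z,\Lambda_\rho\partial z\rangle_\cH .
\end{align}
In finite dimensions ($\cM=\mathbb{M}_n(\mathbb{C})$, the main case) this is elementary; in general it requires the closability of $\partial$ and a closed-range argument for $V\mapsto\partial^*(\Lambda_\rho V)$.

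Granting \eqref{eq:dualmetric}, the estimate \eqref{eq:ptwise} is immediate. Since $\cP^\dagger$ is the pre-adjoint of $\cP$ and $\cP$ is $*$-preserving, $\tau\big((\cP^\dagger x)\,z\big)=\tau\big(x\,\cP(z)\big)$ and $\cP$ maps $\cA_\mathbb{R}$ into $\cA_\mathbb{R}$ (implicit in the $\kappa$-$\operatorname{GE}$); therefore, applying \eqref{eq:dualmetric} at the density $\cP^\dagger(\rho)$,
\begin{align*}
\|\cP^\dagger(x)\|_{g,\cP^\dagger(\rho)}
&=\sup\big\{\tau\big(x\,\cP(z)\big):\ z\in\cA_\mathbb{R},\ \|\partial z\|_{\cP^\dagger(\rho)}\le 1\big\}\\
&\le\sup\big\{\tau(xy):\ y\in\cA_\mathbb{R},\ \|\partial y\|_{\rho}\le 1-\kappa\big\}=(1-\kappa)\,\|x\|_{g,\rho},
\end{align*}
where the inequality uses the $\kappa$-$\operatorname{GE}$ (for $z$ feasible in the first supremum, $y:=\cP(z)$ satisfies $\|\partial y\|_\rho=\|\partial\cP(z)\|_\rho\le(1-\kappa)\|\partial z\|_{\cP^\dagger(\rho)}\le 1-\kappa$, so every term of the first supremum also occurs in the second), and the last equality is \eqref{eq:dualmetric} at $\rho$ after rescaling $y\mapsto(1-\kappa)y$ (here $1-\kappa\ge 0$). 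This proves \eqref{eq:ptwise} and hence the proposition.

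The main obstacle is the technical handling of \eqref{eq:dualmetric} and of the reduction above when $\rho$ or $\cP^\dagger(\rho)$ fails to be invertible, so that $\Lambda_\rho$ (resp.\ $\Lambda_{\cP^\dagger(\rho)}$) degenerates and the $\rho$-weighted inner product is only a semi-inner-product: one then argues by lower semicontinuity of $\|\cdot\|_{g,\cdot}$ (or regularizes $\gamma$ into the interior of $\cD(\cM)$), as in the proof of the equivalence $(i)\Leftrightarrow(ii)$ recalled above. Equivalently, the whole argument can be phrased module-theoretically: the $\kappa$-$\operatorname{GE}$ forces $\cP(\ker\partial\cap\cA)\subseteq\ker\partial$, so $\cP_\partial(\partial z):=\partial(\cP z)$ well-defines a linear lift of $\cP$ to $\cH$, which the $\kappa$-$\operatorname{GE}$ says is a $(1-\kappa)$-contraction from $(\cH,\|\cdot\|_{\cP^\dagger(\rho)})$ to $(\cH,\|\cdot\|_{\rho})$; taking adjoints with respect to these weighted inner products, $\partial^*(\Lambda_\rho V)=x$ implies $\partial^*\big(\Lambda_{\cP^\dagger(\rho)}\,\cP_\partial^*V\big)=\cP^\dagger(x)$ with $\|\cP_\partial^*V\|_{\cP^\dagger(\rho)}\le(1-\kappa)\|V\|_\rho$, which gives \eqref{eq:ptwise} directly from the defining infimum \eqref{metricCM}.
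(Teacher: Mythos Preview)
Your proof is correct and follows the same overall architecture as the paper: reduce to the pointwise metric contraction
\[
\|\cP^\dagger(x)\|_{g,\cP^\dagger(\rho)}\le(1-\kappa)\,\|x\|_{g,\rho}
\]
and then integrate along a Benamou--Brenier curve. The difference is in how this pointwise estimate is obtained. The paper compresses the duality you invoke into an abstract operator lemma: writing $M_\rho:=\partial^*\Lambda_\rho\partial$, the $\kappa$-GE reads $\cP^\dagger M_\rho\cP\le(1-\kappa)^2 M_{\cP^\dagger(\rho)}$, and the lemma ``$C^*AC\le\lambda B\Rightarrow CB^{-1}C^*\le\lambda A^{-1}$'' yields $\cP M_{\cP^\dagger(\rho)}^{-1}\cP^\dagger\le(1-\kappa)^2 M_\rho^{-1}$, which is exactly the pointwise bound since $\|x\|_{g,\rho}^2=\langle x,M_\rho^{-1}x\rangle$. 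Your variational formula $\|x\|_{g,\rho}=\sup\{\tau(xz):\|\partial z\|_\rho\le 1\}$ and the direct feasibility argument with $y=\cP(z)$ are the same duality spelled out in test-function form rather than as an operator inequality; your closing ``module-theoretic'' paragraph is again the same step, phrased as a contraction of the lifted map $\cP_\partial$ and its $\rho$-adjoint. So the two presentations are equivalent repackagings, and your caveats about degenerate $\Lambda_\rho$ and closed-range issues in infinite dimensions match the level of rigor in the paper.
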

For the proof, We need the following lemma.
\begin{lemma}\label{lemma:operator}Let $A,B$ be densely defined positive operators on some Hilbert space $\cH$ and $C\in\cB(\cH)$.
Suppose for some $\lambda>0$,  $C^*AC\le  \lambda B$. Then  $CB^{-1}C^*\le \lambda A^{-1}$, where $A^{-1},B^{-1}$ are the inverse operator on the corresponding supports.
\end{lemma}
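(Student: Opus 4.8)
The plan is to prove the inequality in the sense of quadratic forms, i.e.\ to show
\[
\langle \xi,CB^{-1}C^*\xi\rangle\;\le\;\lambda\,\langle \xi,A^{-1}\xi\rangle\qquad\text{for every }\xi\in\cH,
\]
where, as in the statement, $T^{-1}$ denotes the inverse of a positive operator $T$ on its support, so that $\langle \eta,T^{-1}\eta\rangle=\|T^{-1/2}\eta\|^2$ is understood to be $+\infty$ when $\eta\notin\operatorname{ran}(T^{1/2})$. The only tool I would use is the elementary variational (duality) description of this form: for any densely defined positive $T$ and any $\eta$,
\[
\langle \eta,T^{-1}\eta\rangle\;=\;\sup\Big\{\,\frac{|\langle \eta,\zeta\rangle|^2}{\langle \zeta,T\zeta\rangle}\ :\ \zeta\in\operatorname{dom}(T^{1/2}),\ \langle \zeta,T\zeta\rangle>0\,\Big\}\,,
\]
which is immediate from Cauchy--Schwarz, $|\langle\eta,\zeta\rangle|=|\langle T^{-1/2}\eta,T^{1/2}\zeta\rangle|\le\|T^{-1/2}\eta\|\,\|T^{1/2}\zeta\|$, together with the facts that the bound is attained along $\zeta\to T^{-1}\eta$ and that the supremum is genuinely $+\infty$ exactly when $\eta\notin\operatorname{ran}(T^{1/2})$ (e.g.\ when $\eta\not\perp\ker T$). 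I would record this as a one-line sublemma or simply quote it as standard.

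Granting this, the argument is two lines. Fix $\xi$ and apply the formula to $B$ at the vector $C^*\xi$, using $\langle C^*\xi,\eta\rangle=\langle \xi,C\eta\rangle$:
\[
\langle \xi,CB^{-1}C^*\xi\rangle=\langle C^*\xi,B^{-1}C^*\xi\rangle=\sup_{\eta}\frac{|\langle \xi,C\eta\rangle|^2}{\langle \eta,B\eta\rangle}\,.
\]
The hypothesis $C^*AC\le\lambda B$, read in form sense, says exactly that $C$ carries $\operatorname{dom}(B^{1/2})$ into $\operatorname{dom}(A^{1/2})$ and that $\langle \eta,B\eta\rangle\ge\lambda^{-1}\langle C\eta,AC\eta\rangle$; substituting this lower bound on the denominator and writing $\mu=C\eta$,
\[
\langle \xi,CB^{-1}C^*\xi\rangle\;\le\;\lambda\,\sup_{\eta}\frac{|\langle \xi,C\eta\rangle|^2}{\langle C\eta,AC\eta\rangle}\;=\;\lambda\,\sup_{\mu\in C(\operatorname{dom}(B^{1/2}))}\frac{|\langle \xi,\mu\rangle|^2}{\langle \mu,A\mu\rangle}\;\le\;\lambda\,\sup_{\mu\in\operatorname{dom}(A^{1/2})}\frac{|\langle \xi,\mu\rangle|^2}{\langle \mu,A\mu\rangle}\;=\;\lambda\,\langle \xi,A^{-1}\xi\rangle\,,
\]
the penultimate inequality merely enlarging the index set of the supremum and the last equality being the variational formula applied to $A$.

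The step to treat with care — really the only subtlety — is the bookkeeping around degenerate directions and supports: terms with $\langle \eta,B\eta\rangle=0$ are omitted from the supremum, and for such $\eta$ one has $\langle C\eta,AC\eta\rangle\le\lambda\cdot 0=0$, hence $AC\eta=0$; thus if moreover $\langle \xi,C\eta\rangle\neq 0$ then $\xi\not\perp\ker A$ and the right-hand side $\lambda\langle\xi,A^{-1}\xi\rangle$ is already $+\infty$, so nothing is lost. It is worth stressing that even when $A$ and $B$ are bounded with bounded inverses the conclusion is not quite immediate — one cannot simply invert $C^*AC\le\lambda B$ unless $C$ is invertible too — so the variational argument is genuinely needed; equivalently, in the bounded invertible case one may invoke the $2\times 2$ Schur-complement identity, which shows that $C^*AC\le\lambda B$, $\smatrix{\lambda A^{-1}&C\\ C^*&B}\ge 0$ and $CB^{-1}C^*\le\lambda A^{-1}$ are all equivalent.
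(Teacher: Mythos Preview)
Your proof is correct and takes a genuinely different route from the paper's argument.

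The paper proceeds by a sandwiching/norm trick: from $C^*AC\le\lambda B$ it conjugates by $B^{-1/2}$ to obtain $\|B^{-1/2}C^*ACB^{-1/2}\|\le\lambda$, factors this as $\|B^{-1/2}C^*A^{1/2}\|^2$, then uses $\|X\|=\|X^*\|$ to flip to $\|A^{1/2}CB^{-1}C^*A^{1/2}\|\le\lambda$ and undoes the outer $A^{1/2}$'s. Your approach instead invokes the variational identity $\langle\eta,T^{-1}\eta\rangle=\sup_\zeta|\langle\eta,\zeta\rangle|^2/\langle\zeta,T\zeta\rangle$ and passes the hypothesis through the denominator. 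What your method buys is conceptual transparency: it exhibits the result as an instance of the order-reversal $T\mapsto T^{-1}$ pushed through a conjugation, and your closing remark on the Schur complement makes the symmetry $C^*AC\le\lambda B\Leftrightarrow CB^{-1}C^*\le\lambda A^{-1}$ manifest; it also handles the support and kernel bookkeeping in a way that is arguably more explicit than the paper's domain statements. The paper's operator-norm route, on the other hand, is a shade shorter once one spots the $\|XX^*\|=\|X^*X\|$ trick, and stays closer in spirit to how the lemma is immediately applied (namely to conclude $\cP M_{\cP^\dagger(\rho)}^{-1}\cP^\dagger\le(1-\kappa)^2 M_\rho^{-1}$ from $\cP^\dagger M_\rho\cP\le(1-\kappa)^2 M_{\cP^\dagger(\rho)}$).
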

\begin{proof}We write $\dom(B)$ as the domain of $B$ and $\text{supp}(B)$ as the support of $B$.
$C^*AC\le \lambda B$ means that for any $\xi\in \dom(B)$, $\lan \xi, C^*AC\xi \ran_\cH\le \lambda  \lan \xi, B\xi \ran_\cH$. Then for any $\xi\in \dom(B)\cap \text{supp}(B)$,
\[ \lan \xi, B^{-1/2}C^*AC B^{-1/2}\xi \ran_\cH\le \lambda \norm{\xi}{\cH}^2,\]
which implies $\norm{B^{-1/2}C^*AC B^{-1/2}}{}\le \lambda$ as a bounded operator on $\cH$. This further implies
$\norm{B^{-1/2}C^*A^{1/2}}{}\le \sqrt{\lambda}$, and hence for $\xi \in \cH$,
\[ \lan \xi, A^{1/2}CB^{-1}C^* A^{1/2}\xi \ran_\cH\le \lambda \norm{\xi}{\cH}^2\]
Then for any $\eta\in \text{ran}(A^{1/2})=\dom(A)\cap \supp(A)$, we have
\[ \lan \eta , CB^{-1}C^*\eta \ran_{\cH}\le \la \lan \eta, A^{-1}\eta \ran_\cH\pl.\]
\end{proof}

\begin{proof}[Proof of Proposition \ref{prop:equivalence}]
Given an invertible density operator $\rho\in\cD(\cM)$, denote the positive densely defined operator $\partial^*\Lambda_{\rho}\partial=M_\rho$ on $\cA\subset L_2(\cM)$. Note that $\cP^\dag$ is also the adjoint map of $\cP$ on $L_2(\cM)$ with respect to the trace inner product. By Lemma \ref{lemma:operator}, 
we have 
\begin{align*}
\text{$\kappa$-GE}\,\Longleftrightarrow\,  & \, \cP^\dagger M_\rho\cP \le (1-\kappa)^2  M_{\cP^\dagger(\rho)}
\\ \,\Longrightarrow \, & \, \cP M_{\cP^\dagger(\rho)}^{-1}\cP^\dagger\le (1-\kappa)^2  M_{\rho}^{-1}\pl. 
\end{align*}
Note that from  \cite[Definition 7.6]{Carlen2019}, 
\begin{align}\label{metricCM} \norm{x}{g,\rho}^2:=&\inf_{x=\partial^*(\Lambda_\rho( V))} \lan V, \Lambda_\rho (V)\ran_{\cH} =\lan x , M_{\rho}^{-1} x\ran_{L_2(\cM)}\,.
\end{align}
Given an absolutely continuous path $\gamma:[0,1]\to \cD(\cM)$ connecting $\omega_1$ and $\omega_2$, $\cP^\dagger(\gamma)$ is a
path connecting $\cP^\dagger(\omega_1)$ and $\cP^\dagger(\omega_2)$. Then for each $t\in [0,1]$ ,
\begin{align*} \norm{\cP^\dagger(\gamma'(t))}{g,\cP^\dagger(\gamma(t))}^2&= \lan \cP^\dagger(\gamma'(t)) , M_{\cP^\dagger(\gamma(t))}^{-1} \cP^\dagger(\gamma'(t))\ran
\\ &\le  (1-\kappa)^2 \lan \gamma'(t) , M_{\gamma(t)}^{-1} \gamma'(t)\ran\\
&= (1-\kappa)^2 \norm{\gamma'(t)}{g,\gamma(t)}^2
\end{align*}
Integrating over $t\in [0,1]$ for an arbitrary such curve $\gamma$ yields the assertion.



\end{proof}

For the tensorization property, we introduce the complete notion of gradient estimate. 
We say that $\cP$ satisfies the complete gradient estimate of constant $\kappa\in\mathbb{R}$ ($\kappa$-CGE in short) if $\id_{\mathcal{R}}\ten \cP$ satisfies $\kappa$-GE for the derivation $\id \ten \partial: \mathcal{R}\otimes\cA \to L_2(\mathcal{R})\otimes\cH$ and for all finite von Neumann algebra $\mathcal{R}$. 

\begin{proposition}
Let $(\cM_i,\tau_i), i=1,2$ be finite von Neumann algebras equipped with corresponding first order structures $(\cA_i,\cH_i, l_i,r_i, \partial_i)$. Let $\cP_i:\cM_i\to \cM_i$ be quantum channels. If for $i=1,2$, $\cP_i$ satisfies $\kappa$-$\operatorname{CGE}$
for $\partial_i$ respectively, then $\cP_1\ten \cP_2$ satisfies $\kappa$-$\operatorname{CGE}$ for the derivation
\[ \partial=\partial_1\ten \id \oplus \id \ten \partial_2: \cA_i\otimes \cA_2 \to (\cH_1\ten L_2(\cM_2))\oplus (L_2(\cM_1)\ten \cH_2)\pl.\]
\end{proposition}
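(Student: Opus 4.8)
The plan is to reduce the tensor-product statement to the single-factor hypothesis by exploiting the fact that the $\kappa$-CGE property is, by definition, stable under ampliation by an arbitrary finite von Neumann algebra. Concretely, to show that $\cP_1\ten\cP_2$ satisfies $\kappa$-CGE for the derivation $\partial=\partial_1\ten\id\oplus\id\ten\partial_2$, I must show that $\id_{\mathcal R}\ten\cP_1\ten\cP_2$ satisfies the $\kappa$-gradient estimate for the derivation $\id\ten\partial$ on $\mathcal{R}\ten\cM_1\ten\cM_2$, for every finite von Neumann algebra $\mathcal R$. The key observation is that $\id_{\mathcal R}\ten\cP_1\ten\cP_2 = (\id_{\mathcal R}\ten\cP_1\ten\id_{\cM_2})\circ(\id_{\mathcal R\,\overline\otimes\,\cM_1}\ten\cP_2)$, so I would like to treat the two factors one at a time, absorbing one of them into the "reference algebra'' while I work on the other.

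\textbf{Main steps.} First, I would decompose the Hilbert space on which the ampliated derivation acts: $L_2(\mathcal R)\ten(\cH_1\ten L_2(\cM_2))\oplus L_2(\mathcal R)\ten(L_2(\cM_1)\ten\cH_2)$, i.e.\ into a ``first-factor gradient part'' and a ``second-factor gradient part''. The weighted norm $\norm{\cdot}{\rho}$ from \eqref{eq:weight} attached to a density $\rho$ on $\mathcal R\ten\cM_1\ten\cM_2$ splits accordingly as a direct sum $\Lambda_\rho = \Lambda^{(1)}_\rho\oplus\Lambda^{(2)}_\rho$, because the operator mean is applied blockwise to the two copies and the left/right actions are just multiplications on $L_2(\mathcal R\ten\cM_1\ten\cM_2)$ in each block. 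Second, by Lemma \ref{lemma:operator} (as in the proof of Proposition \ref{prop:equivalence}), the $\kappa$-GE for $\id_{\mathcal R}\ten\cP_1\ten\cP_2$ is equivalent to the operator inequality $(\id_{\mathcal R}\ten\cP_1\ten\cP_2)^\dagger M_\rho (\id_{\mathcal R}\ten\cP_1\ten\cP_2)\le (1-\kappa)^2 M_{(\ldots)^\dagger(\rho)}$ where $M_\rho=\partial^*\Lambda_\rho\partial$. Because $\partial$, $\Lambda_\rho$, and the channel all respect the direct-sum decomposition, $M_\rho$ splits as the sum $M^{(1)}_\rho + M^{(2)}_\rho$ of the two "partial'' operators $M^{(i)}_\rho = (\id\ten\partial_i\ten\id)^*\Lambda^{(i)}_\rho(\id\ten\partial_i\ten\id)$; so it suffices to prove the corresponding inequality for each summand separately. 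Third, for the first summand I apply the $\kappa$-CGE hypothesis for $\cP_1$ with reference algebra $\mathcal R$, noting that $\id_{\mathcal R}\ten\cP_1$ satisfies $\kappa$-GE for $\id\ten\partial_1$ — and then I need to carry the extra, harmless factor $\cP_2$ acting on $\cM_2$. Since $\cP_2$ is a unital completely positive channel and $[\rho]_{\cP_2}$ interacts with it through the Leibniz-type/intertwining behaviour, the point is that post-composing with $\id\ten\id\ten\cP_2$ can only help: one checks $(\id\ten\id\ten\cP_2)^\dagger M^{(1)}_\rho(\id\ten\id\ten\cP_2)\le M^{(1)}_{(\id\ten\id\ten\cP_2)^\dagger\rho}$, using that $\cP_2$ commutes with $\partial_1$ (they act on different factors) and that operator means are monotone and satisfy the transformer inequality $C\Lambda(A,B)C\le\Lambda(CAC,CBC)$ of Section \ref{sec:operator} with $C$ the Stinespring-type contraction implementing $\cP_2^\dagger$ at the Hilbert-space level. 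Symmetrically for the second summand with the roles of $\cP_1$ and $\cP_2$ swapped. Finally, adding the two summand inequalities and using that $M_\rho = M^{(1)}_\rho+M^{(2)}_\rho$ yields the full operator inequality, hence $\kappa$-GE for $\id_{\mathcal R}\ten\cP_1\ten\cP_2$; since $\mathcal R$ was arbitrary, $\cP_1\ten\cP_2$ satisfies $\kappa$-CGE.

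\textbf{Main obstacle.} The delicate step is the third one: showing that inserting the spectator channel $\cP_2$ on the factor where only $\partial_1$ "sees'' the algebra through the operator mean does not increase the relevant quadratic form. This requires checking that the Hilbert-space dilation of $\cP_2^\dagger$ is compatible with the block operator-mean weights $\Lambda^{(1)}_\rho$ — concretely, that the left/right multiplication operators by $\rho$ transform correctly under the channel so that the Kubo--Ando transformer inequality applies. One has to be careful that $\Lambda^{(1)}_\rho$ involves multiplications by $\rho$ on the full product $L_2(\mathcal R\ten\cM_1\ten\cM_2)$, not just on a subfactor, so the monotonicity/transformer argument must be applied to the genuinely "joint'' density, using that $\cP_2$ acts trivially on the $\mathcal R\ten\cM_1$ part. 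Modulo this compatibility check — which is essentially the single-factor content of $\kappa$-CGE applied with an enlarged reference algebra plus the Kubo--Ando inequality — the rest is bookkeeping with direct sums and the equivalence already recorded in the proof of Proposition \ref{prop:equivalence}.
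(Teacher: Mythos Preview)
Your proposal is correct and follows precisely the strategy of \cite[Theorem 4.1]{wirth2020complete} to which the paper defers (the paper gives no further details). The decomposition of $\partial$ and $M_\rho$ into the two summands, the application of $\kappa$-CGE for $\cP_i$ with the other factor absorbed into the reference algebra, and the handling of the spectator channel via the Kubo--Ando transformer inequality together with a Stinespring dilation is exactly that argument; the obstacle you flag is real but is resolved there by the same operator-mean monotonicity reasoning you outline.
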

\begin{proof}
The proof is similar to \cite[Theorem 4.1]{wirth2020complete}. The details are left to the reader.
\end{proof}

\subsection{From curvature to spectral gap}
In this section, we prove that the Lipchitz estimate of a quantum channel $\cP$ lower bounds its spectral gap. Let $\cP:\cM\to \cM$ be a quantum channel that is GNS-symmetric with respect to a faithful state $\omega\in\cD(\cM)$, i.e. for any $x,y\in \cM$:
\[ \omega(\cP(x)y)=\omega(x\cP(y))\,.\]
It follows that $\omega$ is invariant under $\cP$, i.e. $\omega\circ \cP=\omega$. 
By Kadison-Schwarz inequality, $\cP$ is a contraction on the $\omega$-weighted $L_2$ norm $\|x\|_{L_2(\omega)}:=\omega(x^*x)$ and $\cP$ is an isometry on the multiplicative domain of $\cP$, which we denote by $\cN$. 
Recall that the spectral gap of $\cP$ is defined as
\[\la(\cP):=\inf_{E_{\cN}(x)=0} \frac{\omega(\cP(x)^*\cP(x))}{\omega(x^*x)} \pl. \]
We need the following non-commutative $(2-\infty)$ Poincar\'{e} inequality:
\begin{definition}
We say a semi-norm $L$ with $w^*$-dense domain $\cA$ satisfies a $(2-\infty)$ Poincar\'{e} inequality with respect to an invariant state $\omega$ and subalgebra $\cN$ if there exists a constant $C\ge 0$ such that for any $x\in\cA$,
\begin{align}
     \norm{x-E_\cN(x)}{L_2(\omega)}\le C \,L(x)\,.
\end{align}
\end{definition}
Similar estimates were considered in the setting of non-commutative diffusions under a Bakry-Emery condition in \cite{Junge2014}. In what follows, we prove that under the assumption of a $(2-\infty)$ Poincar\'{e} inequality, the non-commutative coarse Ricci curvature provides a lower bound on the gap of $\cP$. 
\begin{proposition}
Let $\cP$ be a quantum channel with multiplicative domain $\cN$ and symmetric with respect to a faithful state $\omega\in\cD(\cM)$. Assume that the semi-norm $L$ satisfies a $(2-\infty)$-Poincar\'{e} inequality with respect to $\omega$ and subalgebra $\cN$. If $\cP$ satisfies the gradient estimate $L({\cP(x)})\le (1-\kappa)L({x})$ with respect to the semi-norm $L$ with $\kappa>0$, then $\lambda(\cP)\ge \kappa$. 
\end{proposition}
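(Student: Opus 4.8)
The plan is to \emph{amortise} the Poincar\'e constant by iterating the Lipschitz estimate, which will let us control the $L_2(\omega)$-operator norm of $\cP$ restricted to the orthogonal complement of $\cN$. A single application of the $(2-\infty)$-Poincar\'e inequality is not enough: it only gives $\|\cP(x)\|_{L_2(\omega)}\le C\,L(\cP(x))\le C(1-\kappa)L(x)$, whereas the quantity we want on the right, $\|x\|_{L_2(\omega)}$, is itself only bounded above by $C\,L(x)$; the constant $C$ will be removed in the limit by taking $n$-th roots. I will use three structural facts: since $\cP$ is GNS-symmetric with respect to $\omega$, it extends to a self-adjoint contraction on the GNS Hilbert space $L_2(\omega)$ (self-adjointness is exactly the GNS-symmetry identity, and contractivity is Kadison--Schwarz together with $\omega\circ\cP=\omega$); the conditional expectation $E_\cN$ extends to the $L_2(\omega)$-orthogonal projection onto the $L_2$-closure of $\cN$, so that $H_0:=\{\xi\in L_2(\omega):E_\cN\xi=0\}$ is the orthogonal complement of $\cN$; and $\cP$ maps $\cN$ into itself (it restricts to a $\ast$-homomorphism of $\cN$), hence, by self-adjointness, it also maps $H_0$ into itself. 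Write $T:=\cP|_{H_0}$, a self-adjoint contraction.

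First I would iterate. Fix $x\in\cA=\dom(L)$ with $E_\cN(x)=0$. By the Lipschitz estimate and induction, $\cP^n(x)=T^n x$ lies in $\cA$ with $L(T^n x)\le(1-\kappa)^nL(x)$; since $T^n x\in H_0$, i.e. $E_\cN(T^n x)=0$, the $(2-\infty)$-Poincar\'e inequality gives
\[
\|T^n x\|_{L_2(\omega)}=\|T^n x-E_\cN(T^n x)\|_{L_2(\omega)}\le C\,L(T^n x)\le C(1-\kappa)^n\,L(x)\,.
\]
If $L(x)=0$ this already forces $x=0$; otherwise, taking $n$-th roots and letting $n\to\infty$ yields $\limsup_{n}\|T^n x\|_{L_2(\omega)}^{1/n}\le 1-\kappa$ for every $x\in\cA\cap H_0$.

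Next I would convert this Lyapunov-exponent bound into a norm bound using the spectral theorem for the self-adjoint operator $T$: the set $\{\xi\in H_0:\limsup_n\|T^n\xi\|_{L_2(\omega)}^{1/n}\le 1-\kappa\}$ is precisely the range of the spectral projection $\chi_{[-(1-\kappa),\,1-\kappa]}(T)$, hence a closed subspace of $H_0$. It contains $\cA\cap H_0$, which is dense in $H_0$: $\cA$ is $w^*$-dense in $\cM$ and therefore $\|\cdot\|_{L_2(\omega)}$-dense in $L_2(\omega)$, while the map $x\mapsto x-E_\cN(x)$ sends $\cA$ into $\cA\cap H_0$ because $E_\cN(x)\in\cN\subseteq\dom(L)$. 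So the closed subspace is all of $H_0$, i.e. $\chi_{[-(1-\kappa),\,1-\kappa]}(T)=\id_{H_0}$ and $\|T\|\le 1-\kappa$. Equivalently $\|\cP(x)\|_{L_2(\omega)}\le(1-\kappa)\|x\|_{L_2(\omega)}$ for every $x$ with $E_\cN(x)=0$, which is the asserted spectral gap bound $\lambda(\cP)\ge\kappa$.

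The only steps that are not purely formal are the identification of $E_\cN$ with the $L_2(\omega)$-orthogonal projection and the density of $\cA\cap H_0$ in $H_0$ --- both standard, but relying on the GNS-symmetry of $\cP$, the $w^*$-density of $\dom(L)$, and $\ker(L)=\cN\subseteq\dom(L)$ --- together with the spectral-radius passage from the per-vector decay rate to the operator norm, which is where the self-adjointness of $T$ is essential. I expect this last conversion to be the main point to get right; the iteration of the Lipschitz estimate and the $n$-th root limit are routine.
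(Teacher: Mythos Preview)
Your proof is correct and follows essentially the same route as the paper: iterate the Lipschitz estimate, feed it through the $(2-\infty)$-Poincar\'e inequality, take $n$-th roots to kill the constant $C$, and then invoke self-adjointness of $\cP$ on $L_2(\omega)$ to pass from a per-vector decay rate on a dense subspace to an operator-norm bound on $H_0$. The only cosmetic difference is in the bookkeeping of domains: the paper applies Poincar\'e directly to $\cP^n(x)$ for arbitrary $x\in\cA$ and then uses the commutation $\cP\circ E_\cN=E_\cN\circ\cP$ to rewrite $\cP^n(x)-E_\cN(\cP^n(x))=\cP^n(x-E_\cN(x))$, so it never needs to check that $E_\cN$ preserves $\cA$; you instead start with $x\in\cA\cap H_0$ and then argue density of $\cA\cap H_0$ in $H_0$, which costs you the extra hypothesis $\cN\subseteq\dom(L)$. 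Either way works; your spectral-theorem justification of the final step is in fact more explicit than the paper's.
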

\begin{proof}
For $x\in\cA$ and any $n\in\mathbb{N}$, we have by the $(2-\infty)$ Poincar\'{e} inequality that  
\begin{align}
\norm{\cP^n(x)-E_\cN(\cP^n(x))}{L_2(\omega)}\,\le\, C L(\cP^n(x))\,\le\, C\,(1-\kappa)^{n}\,L({x})\,.
\end{align}
Note that $\cP\circ E_\cN=E_{\cN}\circ\cP$.
Taking the $n$-th root of the above inequality, we have 
\begin{align}
    \lim_{n\to\infty}\|\cP^n(x-E_\cN(x))\|_{L_2(\omega)}^{\frac{1}{n}}\le 1-\kappa\,.
\end{align}
In other words, the spectral radius of $\cP$ on $L_2(\omega)$ is at most $1-\kappa$ on the domain $\cA$ of $L$. 
We conclude the proof using that $\cP$ is a bounded self-adjoint operator on $L_2(\omega)$, so that it is sufficient to consider its spectral radius on any dense subspace.
\end{proof}
\subsection{From curvature to transportation cost inequalities}\label{sec:TC}
Let $E_i:\cM\to \cN_i,i=1,\dots, n$ be a finite family of conditional expectations. We consider the quantum channel $\cP=\frac{1}{n}\sum_{i=1}^nE_i$ as the average of the channels $E_i$. The invariant subalgebra of $\cP$ is $\cN=\cap_i\cN_i$. The following result is inspired by \cite{Eldan2017} (see also \cite{de2021quantumb} for a slightly different version of the result):

\begin{theorem}
Assume that the triple $(\cM,\cB,\cP=\frac{1}{n}\sum_{i=1}^nE_i)$ has coarse Ricci curvature bound $\kappa>0$. Suppose further that for each $i\in[n]$, the transportation cost $\cT_\cB$ satisfies triangle inequality and
\begin{align}
    \cT_\cB(\rho,\rho\circ E_i)\le C\,\|\rho-\rho\circ E_i\|_1\,.
\end{align}
Then for any $\rho\in \cD(\cM)$,
\begin{align}
    \cT_\cB(\rho,\rho\circ E_\cN)\le \frac{C}{1-(1-\kappa)^n}\,\sqrt{2n\,D(\rho\|\rho\circ E_\cN)}\,.
\end{align}
\end{theorem}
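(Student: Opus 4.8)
The plan is to combine three ingredients: the chain rule \eqref{eq:chain} together with Pinsker's inequality, which lets us pay the transportation cost of one application of a conditional expectation by a relative-entropy difference; the composition property of coarse curvature, which upgrades the bound $\kappa$ for $\cP$ to the bound $1-(1-\kappa)^n$ for $\cP^n$; and a geometric summation along the iterates $\rho\circ\cP^{nm}$. Throughout write $\sigma:=\rho\circ E_\cN$, and assume $D(\rho\|\sigma)<\infty$, otherwise there is nothing to prove. Since $\cN\subseteq\cN_i$ and all conditional expectations are taken with respect to a common trace, $\sigma$ is invariant under every $E_i$, hence under $\cP$ and $\cP^n$; moreover $\cP$ restricts to the identity on $\cN$, so that $\rho\circ\cP^k\circ E_\cN=\sigma$ for all $k$, which is exactly the condition $\omega_1\circ E_\cN=\omega_2\circ E_\cN$ needed to apply the curvature hypothesis along the orbit.

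\textbf{Step 1 (one block).} I would first bound $\cT_\cB(\rho,\rho\circ\cP^n)$. Expanding $\cP^n=n^{-n}\sum_{i_1,\dots,i_n}E_{i_1}\cdots E_{i_n}$ and using joint convexity of $\cT_\cB$, it suffices to bound $\cT_\cB(\rho,\rho\circ E_{i_1}\cdots E_{i_n})$ for each fixed word $(i_1,\dots,i_n)$. Put $\nu_0=\rho$ and $\nu_j=\nu_{j-1}\circ E_{i_j}$. By the triangle inequality, the hypothesis applied to the state $\nu_{j-1}$, and Pinsker's inequality,
\[\cT_\cB(\rho,\rho\circ E_{i_1}\cdots E_{i_n})\le\sum_{j=1}^n\cT_\cB(\nu_{j-1},\nu_j)\le C\sum_{j=1}^n\|\nu_{j-1}-\nu_j\|_1\le C\sum_{j=1}^n\sqrt{2D(\nu_{j-1}\|\nu_j)}\,.\]
Since $\sigma$ is $E_{i_j}$-invariant, the chain rule \eqref{eq:chain} gives $D(\nu_{j-1}\|\nu_j)=D(\nu_{j-1}\|\sigma)-D(\nu_j\|\sigma)$, so the sum $\sum_{j=1}^n D(\nu_{j-1}\|\nu_j)$ telescopes and is at most $D(\rho\|\sigma)$; Cauchy--Schwarz then yields $\sum_{j=1}^n\sqrt{2D(\nu_{j-1}\|\nu_j)}\le\sqrt{2n\,D(\rho\|\sigma)}$. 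Hence $\cT_\cB(\rho,\rho\circ\cP^n)\le C\sqrt{2n\,D(\rho\|\sigma)}$.

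\textbf{Step 2 (iteration to the invariant state).} By the Composition Proposition applied $n-1$ times, $(\cM,\cB,\cP^n)$ has coarse Ricci curvature lower bound $1-(1-\kappa)^n>0$, i.e. $\cT_\cB(\mu\circ\cP^n,\mu'\circ\cP^n)\le(1-\kappa)^n\,\cT_\cB(\mu,\mu')$ whenever $\mu\circ E_\cN=\mu'\circ E_\cN$. Applying this repeatedly to the consecutive pairs $\mu_l:=\rho\circ\cP^{nl}$ and $\mu_{l+1}=\mu_l\circ\cP^n$ — all of which share the same $E_\cN$-image $\sigma$ — gives $\cT_\cB(\mu_l,\mu_{l+1})\le(1-\kappa)^{nl}\,\cT_\cB(\rho,\rho\circ\cP^n)$, whence by the triangle inequality
\[\cT_\cB(\rho,\rho\circ\cP^{nm})\le\sum_{l=0}^{m-1}\cT_\cB(\mu_l,\mu_{l+1})\le\frac{1}{1-(1-\kappa)^n}\,\cT_\cB(\rho,\rho\circ\cP^n)\qquad\text{for every }m\,.\]
Since $\cT_\cB(\rho,\cdot)$ is a supremum of weak$^*$-continuous affine functionals, hence weak$^*$-lower semicontinuous, and since $\rho\circ\cP^{nm}\to\sigma$ (the iterates mix to $E_\cN$), letting $m\to\infty$ gives $\cT_\cB(\rho,\sigma)\le\frac{1}{1-(1-\kappa)^n}\cT_\cB(\rho,\rho\circ\cP^n)$; combining with Step 1 completes the proof.

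The main obstacle I anticipate is not the entropy bookkeeping — Step 1 is a telescoping identity plus Cauchy--Schwarz — but the limiting argument in Step 2: one must know that $\rho\circ\cP^{nm}$ converges to $\sigma=\rho\circ E_\cN$ in a topology for which $\cT_\cB$ is lower semicontinuous, and that the decoherence-free algebra of $\cP^n$ coincides with $\cN(\cP)=\cap_i\cN_i$ so that the Composition Proposition really delivers a curvature bound with respect to the same $E_\cN$. Both are standard under the ergodicity assumptions in force, but they are the points that need to be stated with care.
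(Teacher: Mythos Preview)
Your Step 1 is correct, and in fact your entropy bookkeeping is slightly cleaner than the paper's: by expanding $\cP^n$ over words $E_{i_1}\cdots E_{i_n}$ and telescoping along each word, you obtain $\cT_\cB(\rho,\rho\circ\cP^n)\le C\sqrt{2n\,D(\rho\|\sigma)}$ using only the chain rule and Cauchy--Schwarz. The paper instead iterates $\cP$ itself, writing $\rho^{(j)}=\rho\circ\cP^j$, which forces an extra appeal to the joint convexity of the relative entropy (to pass from $\frac{1}{n}\sum_i D(\rho^{(j-1)}\circ E_i\|\sigma)$ to $D(\rho^{(j)}\|\sigma)$). Both routes land on the same bound.

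The real divergence is in Step 2. Your geometric-series-plus-limit argument is valid \emph{provided} $\rho\circ\cP^{nm}\to\sigma$ in the weak$^*$ topology, and you rightly flag this as the delicate point --- it is not among the theorem's hypotheses. The paper sidesteps this entirely with a one-line finite rearrangement: by the triangle inequality,
\[
\cT_\cB(\rho,\sigma)\;\le\;\cT_\cB(\rho,\rho\circ\cP^n)\;+\;\cT_\cB(\rho\circ\cP^n,\sigma\circ\cP^n)\;\le\;\cT_\cB(\rho,\rho\circ\cP^n)\;+\;(1-\kappa)^n\,\cT_\cB(\rho,\sigma),
\]
using $\sigma\circ\cP=\sigma$ and the curvature bound; rearranging gives $(1-(1-\kappa)^n)\,\cT_\cB(\rho,\sigma)\le\cT_\cB(\rho,\rho\circ\cP^n)$ directly. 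This requires no limit, no lower semicontinuity, and no mixing assumption. Replacing your Step 2 with this rearrangement removes precisely the obstacle you identified; the rest of your argument then goes through unchanged.
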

\begin{proof}
For any $j\in\mathbb{N}$, denote $\rho^{(j)}:=\rho\circ\cP^{j}$. Then, by the triangle inequality and for $N\in\mathbb{N}$, we have 
\begin{align*}
    \cT_\cB(\rho,\rho\circ E_\cN)&\le \sum_{j=1}^N\,\cT_\cB(\rho^{(j-1)},\rho^{(j)})+\cT_\cB(\rho^{(N)},\rho \circ E_\cN)\\
  &\le \frac{1}{n}\,\sum_{i=1}^n\sum_{j=1}^N\,\cT_\cB(\rho^{(j-1)},\rho^{(j-1)}\circ E_i) +(1-\kappa)^N\cT_\cB(\rho,\rho\circ E_\cN)\\
  &\le \frac{C}{n}\sum_{i=1}^n\sum_{j=1}^N\,\|\rho^{(j-1)}-\rho^{(j-1)}\circ E_i\|_1+(1-\kappa)^N\cT_\cB(\rho,\rho\circ E_\cN)\\
  &\le \frac{C\sqrt{2}}{n}\sum_{i=1}^n\sum_{j=1}^N\,\sqrt{D(\rho^{(j-1)}\|\rho^{(j-1)}\circ E_i)}+(1-\kappa)^N\cT_\cB(\rho,\rho\circ E_\cN)\,,
\end{align*}
where the last line follows from Pinsker's inequality. Choosing $N=n$ and rearranging the terms, we have by Jensen's inequality that
\begin{align*}
    \big(1-(1-\kappa)^n\big)\, &\cT_\cB(\rho,\rho\circ E_\cN)\\ &\le  {C\sqrt{2}}\Big(\sum_{i,j=1}^nD(\rho^{(j-1)}\|\rho^{(j-1)}\circ E_i)\Big)^{\frac{1}{2}}\\
    &\overset{(1)}{=} C\sqrt{2}\,\Big(\sum_{i,j=1}^nD(\rho^{(j-1)}\|\rho\circ E_\cN)-D(\rho^{(j-1)}\circ E_i\|\rho\circ E_\cN) \Big)^{\frac{1}{2}}
    \\
    &= C\sqrt{2n}\,\Big(\sum_{j=1}^n \big(D(\rho^{(j-1)}\|\rho\circ E_\cN)-\frac{1}{n}\sum_{i}D(\rho^{(j-1)}\circ E_i\|\rho\circ E_\cN)\big) \Big)^{\frac{1}{2}}
    \\
    &\overset{(2)}{\le }\,C\sqrt{2n}\,\Big(\sum_{j=1 }^n D(\rho^{(j-1)}\|\rho\circ E_\cN)-D(\rho^{(j)}\|\rho\circ E_\cN)\Big)^{\frac{1}{2}}
   \\
    &\le C\sqrt{2n\,D(\rho\|\rho\circ E_\cN)}\,,
\end{align*}
where $(1)$ follows from the chain rule for the relative entropy \eqref{eq:chain} whereas $(2)$ follows from the joint convexity of the relative entropy.
\end{proof}




\subsection{From curvature to transportation information inequalities}
In this subsection, we consider the semi-norm discussed in Example \ref{exam:first2}. Let $ t\mapsto \cP_t=e^{t\cL}: \mathbb{M}_n(\mathbb{C})\to \mathbb{M}_n(\mathbb{C})$ be a quantum Markov semigroup that is GNS-symmetric with respect to a faithful state $\si$. We recall the definition of the derivation and semi-norm
 \begin{align*}\partial=(\partial_j)_{j\in \mathcal{J}}:\mathbb{M}_n(\mathbb{C})\to \bigoplus_{j\in \mathcal{J}}\mathbb{M}_n(\mathbb{C})\pl,\quad  \partial_j(x)=[v_j,x] \pl,\quad\vertiii{x}_{\partial,2}=\Big(\sum_{j\in J}\|{\partial_j x}\|^2\Big)^{1/2} \,,\end{align*}
 where $\norm{\cdot}{}$ is the operator norm and $v_j$ are the Lindbald operators of the Lindbladian
 \begin{align}\label{generatorGNSsymm}
      \cL(x)=\sum_{j\in \mathcal{J}}e^{-\omega_j}\Big(v_j^*[x,v_j]+[v_j^*,x] v_j]\Big)\,.
 \end{align}
 Define the multiplication operator $\Gamma_{\sigma}(x)=\sigma^{1/2}x\sigma^{1/2}$ as a positive operator on $L_2(\mathbb{M}_n(\mathbb{C}))$. The $\sigma$-weighted $L_p$ norm is $\|x\|_{L_p(\sigma)}:=\norm{ \Gamma_\sigma^{1/p}(x)}{p}= \tr(|\si^{\frac{1}{2p}} x\si^{\frac{1}{2p}}|^{p})^{\frac{1}{p}}$. For $p=2$, $\|x\|_{L_2(\sigma)}^2:=\tr(x^*\si^{1/2}x\si^{1/2})$ and we denote the corresponding inner product as $\langle \cdot,\cdot\rangle_\sigma$. We recall from \cite{carlen2017gradient} that the Lindblad operators satisfy $\Delta_\sigma(v_j):=\sigma v_j \sigma^{-1}=e^{-\omega_j}v_j$ where $\Delta_\sigma(x)=\sigma x\sigma^{-1}$ is the modular operator.
The Dirichlet form of $\cL$ is given by
 \begin{align}\mathcal{E}(x):=-\lan x,\cL( x) \ran_{\si}=\sum_{j\in\mathcal{J}}\norm{\partial_j x}{L_2(\si)}^2\pl.\label{eq:de}\end{align}
\begin{theorem}\label{curvatureTI}
Let $ \cP_t:\mathbb{M}_n(\mathbb{C})\to \mathbb{M}_n(\mathbb{C})$ be a $\operatorname{GNS}$-symmetric quantum Markov semigroup with invariant subalgebra $\cN$.
  Assume that the semigroup $(\cP_t)_{t\ge 0}$ satisfies 
  \[\vertiii{\cP_t(x)}_{\partial ,2}\le Ce^{-\kappa t}\,\vertiii{x}_{\partial ,2}\]
  for some $C,\kappa>0$. Then the following transportation information inequality holds: for any $\rho\in\mathcal{D}(\mathbb{M}_n(\mathbb{C}))$ and any faithful invariant state $\sigma$,
 \begin{align}\label{TIineq}
     W_1(\rho,\rho\circ E_{\cN})\le\,\frac{C}{\kappa}\,\max_{i\in\mathcal{J}}\big(e^{-\frac{\omega_i}{4}}+e^{\frac{\omega_i}{4}}\big)\,\sqrt{\cE\big(\Gamma_\sigma^{-\frac{1}{2}}(\sqrt{\rho})\big)}\,.
 \end{align}
\end{theorem}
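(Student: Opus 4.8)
The plan is to carry over to the matrix setting the classical proof of transportation--information inequalities for reversible Markov processes (à la Guillin--Léonard--Wu--Yao): combine the hypothesized contraction of $\vertiii{\cdot}_{\partial,2}$ with an integration by parts and the Leibniz rule of $\partial$. The one genuinely quantum ingredient is a modular-twisted replacement for the classical identity $\nabla h=2\sqrt h\,\nabla\sqrt h$; this is where the factor $e^{-\omega_i/4}+e^{\omega_i/4}$ will appear, and it is the step I expect to be delicate.

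\emph{Reduction.} Here $W_1=\cT_\cB$ for $\cB$ the unit ball of $\vertiii{\cdot}_{\partial,2}$, so it suffices to bound $(\rho-\rho\circ E_\cN)(x)=\tr(\rho x)-\tr(\rho E_\cN x)$ over self-adjoint $x$ with $\vertiii{x}_{\partial,2}\le 1$ (each $\cP_t$ preserves self-adjointness). Since $\cP_t$ is GNS-symmetric it converges to the trace-preserving conditional expectation $E_\cN$ onto its fixed-point algebra, so $\tr(\rho E_\cN x)=\lim_{t\to\infty}\tr(\rho\cP_t x)$, and the fundamental theorem of calculus gives $(\rho-\rho\circ E_\cN)(x)=-\int_0^\infty \tr\big(\rho\,\cL\cP_t(x)\big)\,dt$.

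\emph{The crux.} Fix $t$, write $y:=\cP_t(x)$ and $h:=\Gamma_\sigma^{-1}(\rho)=\sigma^{-1/2}\rho\sigma^{-1/2}$, so that $\tr(\rho z)=\langle h,z\rangle_\sigma$. Polarizing the Dirichlet-form identity \eqref{eq:de} gives $-\tr(\rho\,\cL y)=\sum_{j\in\mathcal{J}}\langle\partial_j h,\partial_j y\rangle_{L_2(\sigma)}$, and commuting powers of $\sigma$ past $v_j$ via $\Delta_\sigma(v_j)=e^{-\omega_j}v_j$ turns the $j$-th term into $\tr\big((e^{-\omega_j/2}\rho v_j^*-e^{\omega_j/2}v_j^*\rho)\,\partial_j y\big)$. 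The key algebraic identity --- the non-commutative surrogate of $\nabla h=2\sqrt h\,\nabla\sqrt h$ --- is that, with $r=\rho^{1/2}$ and $E_j:=e^{-\omega_j/4}r v_j^*-e^{\omega_j/4}v_j^* r$, a direct expansion using $r^2=\rho$ yields
\[
e^{-\omega_j/2}\rho v_j^*-e^{\omega_j/2}v_j^*\rho=e^{\omega_j/4}E_j r+e^{-\omega_j/4}r E_j .
\]
Commuting $\sigma^{\pm1/4}$ through $v_j$ further identifies $E_j=\sigma^{1/4}\big(\partial_j\Gamma_\sigma^{-1/2}(\sqrt\rho)\big)^*\sigma^{1/4}$, hence $\sum_j\|E_j\|_2^2=\cE\big(\Gamma_\sigma^{-1/2}(\sqrt\rho)\big)$. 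Regrouping cyclically, $\langle\partial_j h,\partial_j y\rangle_{L_2(\sigma)}=\tr\big(E_j[\,e^{\omega_j/4}r\,\partial_j y+e^{-\omega_j/4}(\partial_j y)r\,]\big)$.

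\emph{Conclusion and main obstacle.} Cauchy--Schwarz for the Hilbert--Schmidt pairing, together with $\|r\,\partial_j y\|_2^2=\tr\big(\rho(\partial_j y)(\partial_j y)^*\big)\le\|\partial_j y\|_\infty^2$ and the analogous bound for $(\partial_j y)r$, shows the $j$-th term is at most $(e^{\omega_j/4}+e^{-\omega_j/4})\|E_j\|_2\|\partial_j y\|_\infty$; summing over $j$, extracting $\max_i(e^{-\omega_i/4}+e^{\omega_i/4})$ and applying Cauchy--Schwarz in $j$ to split the two sums gives $|\tr(\rho\,\cL\cP_t x)|\le\max_{i\in\mathcal{J}}(e^{-\omega_i/4}+e^{\omega_i/4})\sqrt{\cE(\Gamma_\sigma^{-1/2}(\sqrt\rho))}\,\vertiii{\cP_t(x)}_{\partial,2}$. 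Feeding in $\vertiii{\cP_t(x)}_{\partial,2}\le Ce^{-\kappa t}\vertiii{x}_{\partial,2}\le Ce^{-\kappa t}$, integrating $\int_0^\infty Ce^{-\kappa t}\,dt=C/\kappa$, and taking the supremum over $x$ yields \eqref{TIineq}. The main obstacle is the modular bookkeeping in the crux: classically $h=d\nu/d\mu$ factors effortlessly as $\sqrt h\cdot\sqrt h$, whereas $\Gamma_\sigma^{-1}(\rho)$ is \emph{not} $\big(\Gamma_\sigma^{-1/2}(\sqrt\rho)\big)^2$, so one is forced to distribute the half-powers $\sigma^{\pm1/4}$ across the two legs of the commutator $[v_j,\cdot\,]$, each leg producing one modular weight $e^{\mp\omega_j/4}$ --- precisely the source of the factor $e^{-\omega_i/4}+e^{\omega_i/4}$ rather than a single exponential. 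A more routine point is the ergodic convergence $\cP_t\to E_\cN$ and the interchange $\tr(\rho E_\cN x)=\lim_t\tr(\rho\cP_t x)$, which are standard for GNS-symmetric semigroups on $\mathbb{M}_n(\mathbb{C})$.
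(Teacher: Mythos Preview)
Your proof is correct and follows essentially the same route as the paper. Both arguments write $(\rho-\rho\circ E_\cN)(x)=-\int_0^\infty\sum_j\langle\partial_j h,\partial_j\cP_t(x)\rangle_\sigma\,dt$ with $h=\Gamma_\sigma^{-1}(\rho)$, and both hinge on the same modular Leibniz-type factorisation of $\sigma^{1/2}[v_j,h]\sigma^{1/2}$ through $z=\Gamma_\sigma^{-1/2}(\sqrt\rho)$; the paper phrases it as $\sigma^{1/4}[s_j,z]\sigma^{1/2}z\sigma^{1/4}+\sigma^{1/4}z\sigma^{1/2}[t_j,z]\sigma^{1/4}$ with $s_j=e^{-\omega_j/4}v_j$, $t_j=e^{\omega_j/4}v_j$, which is the adjoint of your identity $e^{-\omega_j/2}\rho v_j^*-e^{\omega_j/2}v_j^*\rho=e^{\omega_j/4}E_jr+e^{-\omega_j/4}rE_j$. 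The only cosmetic difference is the order of operations: the paper first applies $L_1(\sigma)$--$L_\infty$ H\"older to get $\|\partial_j h\|_{L_1(\sigma)}\|\partial_j\cP_t(x)\|_\infty$ and then bounds $\|\partial_j h\|_{L_1(\sigma)}\le(e^{-\omega_j/4}+e^{\omega_j/4})\|\partial_j z\|_{L_2(\sigma)}$ using $\|\Gamma_\sigma^{1/2}(z)\|_2=1$, whereas you pair with $\partial_j y$ first and use Hilbert--Schmidt Cauchy--Schwarz together with $\|r\,\partial_j y\|_2\le\|\partial_j y\|_\infty$---the same use of $\tr\rho=1$.
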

\begin{proof}
Let $\rho\in\cD(\mathbb{M}_n(\mathbb{C}))$ and take $y=\Gamma_\sigma^{-1}(\rho)$ such that $\rho=\Gamma_\sigma(y)$. Also denote $E^\dagger$ (resp. $\cP_{t}^{\dagger}$) as the adjoint map of $E\equiv E_{\cN}$ (resp. $\cP_t$) with respect to the trace inner product. Then,
\begin{align}
    W_1(\rho,E^\dagger(\rho))&=\sup_{\vertiii{x}_{\partial,2} \le 1}\,\tr\big[(\rho-E^\dagger(\rho))x\big]\nonumber\\
    &=\sup_{\vertiii{ x}_{\partial,2} \le 1}\,-\,\int_0^\infty\,\frac{d}{dt}\tr\big[ \cP_{t}^{\dagger}(\rho)\,x \big]\,dt\nonumber\\
    &=\sup_{\vertiii{x}_{\partial,2} \le 1}\,\int_0^\infty\,\sum_{i\in\mathcal{J}}\langle  \partial_i y,\,\partial_i\cP_t(x)\rangle_\sigma \,dt\nonumber\\
    &=\sup_{\vertiii{x}_{\partial,2} \le 1}\,\int_0^\infty\,\sum_{i\in\mathcal{J}}\norm{\partial_i y}{L_1(\sigma)}
    \|{\partial_i \cP_t(x)}\|
    dt\nonumber\\
    &\le \Big(\sum_{i\in\mathcal{J}}\,\|\partial_i y\|_{L_1(\sigma)}^2\Big)^{\frac{1}{2}}\cdot\sup_{\vertiii{x}_{\partial ,2} \le 1} \,\int_0^\infty\, \Big(\sum_{i}\|{\partial_i \cP_t(x)}\|^2\Big)^{1/2} \,dt\nonumber\\
    &= \Big(\sum_{i\in\mathcal{J}}\,\|\partial_i y\|_{L_1(\sigma)}^2\Big)^{\frac{1}{2}}\cdot\sup_{\vertiii{x}_{\partial ,2} \le 1} \,\int_0^\infty\, \vertiii{\cP_t(x)}_{\partial ,2} \,dt\nonumber\\
    &\le C\,\Big(\sum_{i\in\mathcal{J}}\,\|\partial_i y\|_{L_1(\sigma)}^2\Big)^{\frac{1}{2}}\cdot \,\int_0^\infty\,e^{-\kappa t}\,dt\nonumber\\
    &=\frac{C}{\kappa}\,\Big(\sum_{i\in\mathcal{J}}\,\|\partial_i y\|_{L_1(\sigma)}^2\Big)^{\frac{1}{2}}\,.\label{eq:L1norm}
\end{align}
Next, we control \eqref{eq:L1norm} in terms of the Dirichlet form at $z:= \Gamma_\sigma^{-\frac{1}{2}}(\sqrt{\rho})$. Recall that $\partial_i=[v_i,\cdot]$ and $\Delta_\sigma(v_i)=e^{-\omega_i}v_i$. Write $s_i:=\Delta_\sigma^{\frac{1}{4}}(v_i)$ and $t_i:=\Delta_{\sigma}^{-\frac{1}{4}}(v_i)$. We have
\begin{align}
    \|\partial_i y\|_{L_1(\sigma)}=\|\sigma^{\frac{1}{2}}[v_i,y]\sigma^{\frac{1}{2}}\|_1&=\|\sigma^{\frac{1}{4}}[s_i,z]\sigma^{\frac{1}{2}}z\sigma^{\frac{1}{4}}+\sigma^{\frac{1}{4}}z\sigma^{\frac{1}{2}}[t_i,z]\sigma^{\frac{1}{4}}\|_1\nonumber\\
    &\le \|\Gamma_\sigma^{\frac{1}{2}}(z)\|_2\,\Big(\|\Gamma_\sigma^{\frac{1}{2}}([s_i,z])\|_2\,+\|\Gamma_\sigma^{\frac{1}{2}}([t_i,z])\|_2 \Big)\nonumber\\
    &=\|[s_i,z]\|_{L_2(\sigma)}+\|[t_i,z]\|_{L_2(\sigma)}\nonumber\\
    &=\big(e^{-\frac{\omega_i}{4}}+e^{\frac{\omega_i}{4}}\big)\, \|[v_i,z]\|_{L_2(\sigma)}\,.\label{eqL1L2}
\end{align}
The result follows from \eqref{eq:de} after summing over $i\in\mathcal{J}$.
\end{proof}

Using the result in \Cref{curvatureTI}, we can also derive transportation information inequalities for KMS-symmetric generators. Here we consider the local Lindbladian $\cL$: given a hypergraph $G:=(V,E)$ and the total Hilbert space $\cH_V:=\bigotimes_{v\in V}\cH_v$, $\cH_v\equiv \mathbb{C}^d$ with some $d\in\mathbb{N}$ for all $v\in V$, we consider the generator $\cL_V:=\sum_{e\in E}\cL_e$, where each local generator $\cL_e$ acts non-trivially only on the subsystem $\cH_e:=\bigotimes_{v\in e} \cH_v$, $e\in E$. We assume that each $\cL_e$ is KMS-symmetric with respect to a fixed state $\sigma$ on $\cH_V$ and denote $\displaystyle E_e:=\lim_{t\to\infty}\,e^{t\cL_e}$ and $\displaystyle  E_V:=\lim_{t\to\infty}e^{t\cL_V}$ as the conditional expectations. For each $e\in E$, the map $\widetilde{\cL}_e\equiv E_e-\id$ is the generator of a GNS-symmetric quantum Markov semigroup with respect to the invariant state $\sigma$, and thus it takes the form of Equation \eqref{generatorGNSsymm}, with Bohr frequencies $ {\omega}_{e,j}\in \mathbb{R}$ and Lindblad operators $ v_{e,j}$, $j\in \mathcal{J}_e$.
\begin{proposition}
With the above notations, assume that for each $t$, the channel $\cP_t=e^{t\cL_V}$ satisfies the following gradient estimate with respect to the Ornstein semi-norm: for any $x\in\cB(\cH_V)$,
\begin{align}\label{eq:curvornstein}
    \|e^{t\cL_V}(x)\|_\otimes \,\le C\,e^{-t\kappa}\,\|x\|_\otimes\,,\qquad\kappa,C>0\,.
\end{align}
Assume further that the spectral gaps of the generators $\cL_e$ share a uniform lower bound $\lambda>0$. Then the following transportation information inequality holds: for any $\rho\in\cD(\cH_V)$,
\begin{align}\label{eq:localTI}
    \cT_\otimes(\rho,E_{V*}(\rho))\,\le \frac{C\max_{e\in E}\,C_e\sqrt{|\mathcal{J}_e|}\,|e|\,\|\cL_{e*}\|_\diamond}{\kappa\sqrt{\lambda}}\,\sqrt{|E|\,\cE_{\cL_V}\big(\Gamma_\sigma^{-\frac{1}{2}}(\sqrt{\rho})\big)}\,,
\end{align}
where 
\begin{align}
 C_e:=  2\max_{j\in\mathcal{J}_e}e^{-{\omega}_{e,j}}\|\Delta_\sigma^{\frac{1}{2}}(v_{e,j}^*)\|\big(e^{-\frac{{\omega}_{e,j}}{4}}+e^{\frac{{\omega}_{e,j}}{4}}\big)\,.
\end{align}
\end{proposition}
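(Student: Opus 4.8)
The plan is to mimic the proof of Theorem~\ref{curvatureTI}, now with $W_1$ replaced by $\cT_\otimes$ and the single generator replaced by $\cL_V=\sum_{e\in E}\cL_e$. By definition $\cT_\otimes(\rho,E_{V*}(\rho))=\sup_{x=x^*,\,\|x\|_\otimes\le1}\tr[(\rho-E_{V*}(\rho))\,x]$; using $\tr[E_{V*}(\rho)\,x]=\tr[\rho\,E_V(x)]$ and the improper-integral identity $x-E_V(x)=-\int_0^\infty\cL_V(e^{t\cL_V}x)\,dt$ (valid since $e^{t\cL_V}x\to E_V(x)$ by hypothesis), this becomes
\[
\cT_\otimes(\rho,E_{V*}(\rho))=\sup_{x=x^*,\,\|x\|_\otimes\le1}\Big(-\int_0^\infty\sum_{e\in E}\tr\big[\rho\,\cL_e(e^{t\cL_V}x)\big]\,dt\Big),
\]
so the whole statement reduces to the \emph{local estimate}: for every $e\in E$ and every self-adjoint $w$, writing $z:=\Gamma_\sigma^{-\frac12}(\sqrt\rho)$,
\[
\big|\tr[\rho\,\cL_e(w)]\big|\ \le\ C_e\,\sqrt{|\mathcal{J}_e|}\;|e|\;\|\cL_{e*}\|_\diamond\;\sqrt{\cE_{\widetilde{\cL}_e}(z)}\;\|w\|_\otimes .
\]

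I would prove the local estimate in three moves. (a) \emph{Localization.} Since $\cL_e$ acts only on $\cH_e$ and $\cL_e(1)=0$, one has $\tr[\rho\,\cL_e(1_e\otimes A)]=0$ for all $A\in\cB(\cH_{e^c})$, i.e. the partial trace of $\cL_{e*}(\rho)$ over $\cH_e$ vanishes; hence $\tr[\rho\,\cL_e(w)]=\tr[\cL_{e*}(\rho)(w-c\otimes 1_e)]$ for every $c\in\cB(\cH_{e^c})$, and a short induction over the sites of $e$ (successively replacing $w$ by normalized partial traces) gives $\inf_c\|w-c\otimes 1_e\|\le|e|\,\|w\|_\otimes$, whence $|\tr[\rho\,\cL_e(w)]|\le|e|\,\|w\|_\otimes\,\|\cL_{e*}(\rho)\|_1$. (b) \emph{Passage to $\rho-E_{e*}(\rho)$.} Because $\cL_e$ is KMS-symmetric, $E_e=\lim_t e^{t\cL_e}$ is a $\sigma$-preserving, KMS-symmetric conditional expectation onto $\ker\cL_e$ with $\cL_{e*}\circ E_{e*}=0$, so $\|\cL_{e*}(\rho)\|_1=\|\cL_{e*}(\rho-E_{e*}(\rho))\|_1\le\|\cL_{e*}\|_\diamond\,\|\rho-E_{e*}(\rho)\|_1$. (c) \emph{Dirichlet-form bound.} With $y:=\Gamma_\sigma^{-1}(\rho)$ and KMS-symmetry of $E_e$, $\rho-E_{e*}(\rho)=-\Gamma_\sigma\big(\widetilde{\cL}_e(y)\big)$; inserting the form \eqref{generatorGNSsymm} of $\widetilde{\cL}_e$, commuting the outer $\sigma^{1/2}$ through each $v_{e,j}$ via $\Delta_\sigma(v_{e,j})=e^{-\omega_{e,j}}v_{e,j}$, and bounding each resulting $\|\Gamma_\sigma([v_{e,j},y])\|_1$ by $(e^{-\omega_{e,j}/4}+e^{\omega_{e,j}/4})\,\|[v_{e,j},z]\|_{L_2(\sigma)}$ exactly as in \eqref{eqL1L2} (using in addition the $*$-invariance of $\|\cdot\|_{L_2(\sigma)}$ and $z=z^*$ to handle the $[v_{e,j}^*,y]$ terms), one obtains $\|\rho-E_{e*}(\rho)\|_1\le C_e\sum_{j\in\mathcal{J}_e}\|[v_{e,j},z]\|_{L_2(\sigma)}\le C_e\sqrt{|\mathcal{J}_e|}\,\sqrt{\cE_{\widetilde{\cL}_e}(z)}$, where $\cE_{\widetilde{\cL}_e}(z)=\sum_j\|[v_{e,j},z]\|_{L_2(\sigma)}^2$ by \eqref{eq:de} and the constant emerging from the modular factors is precisely $C_e=2\max_j e^{-\omega_{e,j}}\|\Delta_\sigma^{1/2}(v_{e,j}^*)\|(e^{-\omega_{e,j}/4}+e^{\omega_{e,j}/4})$. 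Chaining (a)--(c) yields the local estimate.

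To finish, I would substitute the local estimate into the displayed formula, pull the $x,t$-independent quantity $\sum_e C_e\sqrt{|\mathcal{J}_e|}|e|\|\cL_{e*}\|_\diamond\sqrt{\cE_{\widetilde{\cL}_e}(z)}$ out of the supremum and the integral, apply Cauchy--Schwarz over $e$ (producing $\sqrt{|E|}$), then use the uniform spectral gap in the form $\cE_{\widetilde{\cL}_e}(z)=\|(\id-E_e)z\|_{L_2(\sigma)}^2\le\lambda^{-1}\cE_{\cL_e}(z)$ together with $\sum_e\cE_{\cL_e}(z)=\cE_{\cL_V}(z)$, and finally bound $\sup_{\|x\|_\otimes\le1}\int_0^\infty\|e^{t\cL_V}x\|_\otimes\,dt\le\int_0^\infty Ce^{-\kappa t}\,dt=C/\kappa$ by \eqref{eq:curvornstein} (which also guarantees that the time integral is absolutely convergent). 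This reproduces \eqref{eq:localTI} with the stated constant.

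The crux is move (c): one must re-express $\|\rho-E_{e*}(\rho)\|_1$ through commutators with the \emph{GNS-symmetric} Lindblad operators $v_{e,j}$ of $\widetilde{\cL}_e$ (rather than of the merely KMS-symmetric $\cL_e$) and carry out all the modular bookkeeping---the factors $e^{-\omega_{e,j}}$, $\|\Delta_\sigma^{1/2}(v_{e,j}^*)\|$, $e^{\pm\omega_{e,j}/4}$---so that the emerging constant is exactly $C_e$, which forces one to re-run essentially verbatim the chain \eqref{eqL1L2} from the proof of Theorem~\ref{curvatureTI}. By comparison, the localization constant in (a), the KMS/GNS symmetry properties of $E_e$, and the interchange-of-limits in the final step are routine.
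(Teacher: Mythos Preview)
Your proposal is correct and follows essentially the same route as the paper's proof. The only cosmetic difference is that you carry out moves (a)--(b) in the Schr\"odinger picture (bounding $\|\cL_{e*}(\rho)\|_1$ and then passing to $\|\rho-E_{e*}(\rho)\|_1$), whereas the paper stays in the KMS-weighted Heisenberg picture, writing $\tr[\rho\,\cL_e(\cdot)]=\langle y-E_e(y),\,\cL_e(\id-\tau_e)(\cdot)\rangle_\sigma$ and applying the $L_1(\sigma)$--$L_\infty$ H\"older inequality; since $\|y-E_e(y)\|_{L_1(\sigma)}=\|\rho-E_{e*}(\rho)\|_1$ by KMS-symmetry of $E_e$, the two arguments are line-by-line equivalent, and move (c), the Cauchy--Schwarz over $e$, the spectral-gap step, and the time integration are identical.
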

\begin{proof}
For any $\rho\in \cD(\cH_V)$ and all $x\in\cB(\cH_V)$ with $\|x\|_{\otimes}\le 1$, we denote $y=\Gamma_\sigma^{-1}(\rho)\equiv\sigma^{-\frac{1}{2}}\rho\sigma^{-\frac{1}{2}}$ and $\tau_e(X):=d^{-|e|}\,1_e\otimes\tr_e(X)$. Then we have 
\begin{align}
    \tr\big[(\rho-E_{V*}(\rho))x\big]&=-\int_0^\infty \frac{d}{dt}\,\tr[e^{t\cL_{V*}}(\rho)\,x]\,dt\nonumber\\
    &=-\int_0^\infty\,\tr\big[\rho\,\cL_V\,e^{t\cL_V}(x)\big]\,dt\nonumber\\
    &=-\int_0^\infty\,\sum_{e\in E}\langle y,\,\cL_e\circ e^{t\cL_V}(x)\rangle_\sigma\,dt\nonumber\\
    &=-\int_0^\infty\,\sum_{e\in E}\,\langle y-E_e(y),\,  \cL_e\,(\id-\tau_{e})\,e^{t\cL_V}(x) \rangle_\sigma\,dt\nonumber\\
    &\le \int_0^\infty \sum_{e\in E}\,\|y-E_e(y)\|_{L_1(\sigma)}\,\|\cL_e(\id-\tau_e)\,e^{t\cL_V}(x)\|\,dt\nonumber\\
    & \le \sum_{e\in E}\,\|\cL_{e*}\|_{\diamond}\,|e|\,\|y-E_e(y)\|_{L_1(\sigma)}\,\int_0^\infty\,\|e^{t\cL_V}(x)\|_\otimes\,dt\nonumber\\
    &= \frac{C}{\kappa}\sum_{e\in E}\,\|\cL_{e*}\|_{\diamond}\,|e|\,\|y-E_e(y)\|_{L_1(\sigma)}\,\|x\|_\otimes\,.\label{eqL1W1}
\end{align}
 Next, given $z=\Gamma_\sigma^{-\frac{1}{2}}(\sqrt{\rho})$, the generator $\widetilde{\cL}_e\equiv E_e-\id$ satisfies 
\begin{align}
    \|y-E_e(y)\|_{L_1(\sigma)}&=\|\widetilde{\cL}_e(y)\|_{L_1(\sigma)}\nonumber\\
    &\le \sum_{j\in \mathcal{J}_e}\,e^{-{\omega}_{e,j}}\Big(\|v_{e,j}^*[y,v_{e,j}]\|_{L_1(\sigma)}+\|[v_{e,j}^*,y]v_{e,j}\|_{L_1(\sigma)}\Big)\nonumber\\
    &\le \max_{j\in\mathcal{J}_e}e^{-{\omega}_{e,j}}\,\|\Delta_\sigma^{\frac{1}{2}}(v_{e,j}^*)\|\,\,\sum_{j\in\mathcal{J}_e}\,\|[v_{e,j},y]\|_{L_1(\sigma)}\nonumber\\
    &\overset{(1)}{\le}\,\underbrace{\max_{j\in\mathcal{J}_e}e^{-{\omega}_{e,j}}\,\|\Delta_\sigma^{\frac{1}{2}}(v_{e,j}^*)\|\big(e^{-\frac{{\omega}_{e,j}}{4}}+e^{\frac{{\omega}_{e,j}}{4}}\big)}_{C_e}\,\sum_{j\in\mathcal{J}_e}\,\|[v_{e,j},z]\|_{L_2(\sigma)}\nonumber\\
    &\le C_e\,\sqrt{|\mathcal{J}_e|\,\mathcal{E}_{E_e-\id}(z)}=C_e \sqrt{|\mathcal{J}_e|\,\|z-E_e(z)\|_{L_2(\sigma)}^2}\overset{(2)}{\le} \frac{C_e}{\sqrt{\lambda}}\,\sqrt{|\mathcal{J}_e|\,\cE_{\cL_e}(z)}\,.\label{eqW1L1}
\end{align}
In (1) above, we use the fact that the generator $E_e-\id$ is GNS-symmetric, so that the same argument as for \eqref{eqL1L2} is applied. (2) follows from the assumption of uniform lower bound $\lambda$ on the local spectral gaps of the generators $\cL_e$. Using \eqref{eqW1L1} into \eqref{eqL1W1}, we get the bound
\begin{align*}
    \tr\big[(\rho-E_{V*}(\rho))\,x\big]&\le \frac{C\max_{e\in E}\,C_e\sqrt{|\mathcal{J}_e|}\,|e|\,\|\cL_{e*}\|_\diamond}{\kappa\sqrt{\lambda}}\,\sum_{e\in E}\,\sqrt{\mathcal{E}_{\cL_e}(z)}\\
    &\le \frac{C\max_{e\in E}\,C_e\sqrt{|\mathcal{J}_e|}\,|e|\,\|\cL_{e*}\|_\diamond}{\kappa\sqrt{\lambda}}\,\Big(|E|\,\sum_{e\in E}\,{\mathcal{E}_{\cL_e}(z)}\Big)^{\frac{1}{2}}\\
    &=\frac{C\max_{e\in E}\,C_e\sqrt{|\mathcal{J}_e|}\,|e|\,\|\cL_{e*}\|_\diamond}{\kappa\sqrt{\lambda}}\,\sqrt{|E|\,\cE_{\cL_V}(z)}\,.
\end{align*}
\end{proof}

\begin{rem}
Transportation information inequalities of the form of \eqref{TIineq} and \eqref{eq:localTI} were recently used in \cite{BenoistHanggliRouze2021} to obtain concentration inequalities for quantum trajectories.
\end{rem}

\subsection{Diameter estimates}
Let $(\cM,\cB)$ be a non-commutative transportation metric. In analogy with \cite{Ollivier2009}, we define the \textit{jump} of a state $\omega\in\cD(\cM)$ under a quantum channel $\cP$ as
\begin{align}\label{jump}
    \mathcal{J}_{\cB}(\omega):=\cT_\cB(\omega,\omega\circ \cP)\,.
\end{align}

\begin{lemma}\label{distancetojump}
Assume that $\cT_\cB$ satisfies the triangle inequality and that the triple $(\cM,\cB,\cP)$ has positive coarse Ricci curvature bound $\kappa>0$. Then, for any $\omega_1,\omega_2\in\cD(\cM)$, 
\begin{align}
    \cT_\cB(\omega_1,\omega_2)\le \frac{1}{\kappa}\big(\mathcal{J}(\omega_1)+\mathcal{J}(\omega_2)\Big)\,.
\end{align}
\end{lemma}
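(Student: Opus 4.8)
The plan is to follow Ollivier's diameter estimate \cite[Proposition~19]{Ollivier2009}: peel off one application of $\cP$ on each side via the triangle inequality, absorb the resulting middle term using the curvature hypothesis, and recognise the two outer terms as the jumps $\mathcal{J}_\cB(\omega_1)=\cT_\cB(\omega_1,\omega_1\circ\cP)$ and $\mathcal{J}_\cB(\omega_2)$.

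Concretely, I would first reduce to the case $\omega_1\circ E_\cN=\omega_2\circ E_\cN$ and $\cT_\cB(\omega_1,\omega_2)<\infty$, and then apply the triangle inequality for $\cT_\cB$ twice, inserting the intermediate states $\omega_1\circ\cP$ and $\omega_2\circ\cP$:
\[
\cT_\cB(\omega_1,\omega_2)\;\le\;\cT_\cB(\omega_1,\omega_1\circ\cP)+\cT_\cB(\omega_1\circ\cP,\omega_2\circ\cP)+\cT_\cB(\omega_2\circ\cP,\omega_2)\,.
\]
By \eqref{eq:curvature} the middle term is at most $(1-\kappa)\,\cT_\cB(\omega_1,\omega_2)$, which applies precisely because $\omega_1\circ E_\cN=\omega_2\circ E_\cN$. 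The first and last terms equal $\mathcal{J}_\cB(\omega_1)$ and $\mathcal{J}_\cB(\omega_2)$ — here one uses that $\cT_\cB$ is symmetric, which holds for all the concrete transportation costs of \Cref{sec:NCTC} (the unit balls / relations $\cB$ being invariant under $(x,y)\mapsto(-y,-x)$, and $\mathcal{W}_{\partial,2}$ being a genuine metric). Substituting, subtracting the finite quantity $(1-\kappa)\,\cT_\cB(\omega_1,\omega_2)$ from both sides, and dividing by $\kappa>0$ gives $\kappa\,\cT_\cB(\omega_1,\omega_2)\le\mathcal{J}_\cB(\omega_1)+\mathcal{J}_\cB(\omega_2)$, which is the claim.

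What remains is bookkeeping around the decoherence-free subalgebra $\cN$ and the possibility of infinite transportation cost, and this is where the only (mild) care is needed. When $\cT_\cB$ is faithful, the hypothesis $\omega_1\circ E_\cN=\omega_2\circ E_\cN$ needed to invoke \eqref{eq:curvature} is automatic once $\cT_\cB(\omega_1,\omega_2)<\infty$, by the Remark following \Cref{def:corasericcicurvature} (which uses $\kappa>0$); and if $\cT_\cB(\omega_1,\omega_2)=\infty$ the asserted inequality is to be read in $[0,\infty]$, in which case one checks, using $\cP\circ E_\cN=E_\cN\circ\cP$, that the right-hand side is also $+\infty$ (or one simply restricts, as in the intended applications, to pairs in $\operatorname{dom}(\cT_\cB)$, where jumps are finite). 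I expect no genuine obstacle: the substantive step is the two-line combination of the triangle inequality with \eqref{eq:curvature}, and the rest is routine.
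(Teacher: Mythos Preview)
Your argument is correct and is essentially the same as the paper's: apply the triangle inequality to insert $\omega_1\circ\cP$ and $\omega_2\circ\cP$, bound the middle term by $(1-\kappa)\cT_\cB(\omega_1,\omega_2)$ using the curvature hypothesis, and rearrange. Your additional remarks on symmetry of $\cT_\cB$ and the $E_\cN$/finiteness bookkeeping are points the paper leaves implicit.
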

\begin{proof}
By triangle inequality of $\cT_{\cB}$ and curvature of $(\cM,\cB,\cP)$, 
\begin{align*}
\cT_\cB(\omega_1,\omega_2)&\le \cT_\cB(\omega_1,\omega_1\circ \cP)+\cT_\cB(\omega_1\circ \cP,\omega_2\circ \cP)+\cT_\cB(\omega_2\circ \cP,\omega_2\circ \cP)
\\ &\le \cT_\cB(\omega_1,\omega_1\circ \cP)+(1-\kappa)\cT_\cB(\omega_1,\omega_2)+\cT_\cB(\omega_2\circ \cP,\omega_2\circ \cP)
\end{align*}
 Rearranging the terms gives the assertion.
\end{proof}

\subsection{Existence of the coarse curvature bound}
In this section, we prove that in finite dimensions, the Lipschitz constant always exists for the semi-norm given by a family of commutators.

\begin{proposition}
Let $\cH$ be a finite dimensional Hilbert space and $(\cB(\cH),\cB_L)$ be the non-commutative transportation metric induced by the semi-norm \[L(x):=\max_{j\in\mathcal{J}}\|[A_j,x]\|\] for some finite family $\{A_j\}_{j\in \mathcal{J}}\in \cB(\cH)$.
Denote the kernel of the semi-norm $L$ as $\cN=\{A_1,\dots,A_{|\mathcal{J}|}\}'$.
Then for any quantum channel $\cP:\cB(\cH)\to \cB(\cH)$ such that $\cP(\cN)\subset \cN$, there exists $\kappa\in \mathbb{R}$ such that $\cP$ satisfies the $\kappa$-Lipschitz estimate and hence
$(\cB(\cH),\cB_L,\cP)$ has coarse Ricci curvature bounded by $\kappa$.
\end{proposition}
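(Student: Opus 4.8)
The plan is to reduce the assertion to the elementary fact that a linear endomorphism of a finite-dimensional normed space is automatically bounded. First I would observe that $L$ depends only on the coset of its argument modulo $\cN$: since each $A_j$ commutes with every element of $\cN=\{A_j\}_{j\in\mathcal{J}}'$, one has $[A_j,x+n]=[A_j,x]$ for all $n\in\cN$, whence $L(x+n)=L(x)$. Therefore $L$ descends to a well-defined map $\overline{L}\colon \cB(\cH)/\cN\to[0,\infty)$, $\overline{L}(\overline{x}):=L(x)$. This $\overline{L}$ is a genuine \emph{norm}, not merely a seminorm: it is obviously a seminorm, and $\overline{L}(\overline{x})=0$ forces $[A_j,x]=0$ for every $j$, i.e. $x\in\{A_j\}'=\cN$, i.e. $\overline{x}=0$. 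This non-degeneracy is the only point at which the identification $\cN=\{A_j\}'$ is used, and it is the one small thing worth checking carefully.

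Next, because $\cP(\cN)\subset\cN$ by hypothesis, $\cP$ passes to the quotient: $\overline{\cP}(\overline{x}):=\overline{\cP(x)}$ is a well-defined linear map on $\cB(\cH)/\cN$. Since $\cH$ is finite-dimensional, so are $\cB(\cH)$ and the quotient $\cB(\cH)/\cN$; hence $\overline{\cP}$ is bounded with respect to the norm $\overline{L}$, and I may set
\[
M:=\sup_{\overline{L}(\overline{x})\le 1}\overline{L}\bigl(\overline{\cP}(\overline{x})\bigr)<\infty,\qquad \kappa:=1-M\in\mathbb{R}.
\]
(Nothing forces $\kappa$ to be positive here; the claim is only existence.) Unwinding the definitions, for every $x\in\operatorname{dom}(L)$ one gets $L(\cP(x))=\overline{L}(\overline{\cP(x)})=\overline{L}(\overline{\cP}(\overline{x}))\le M\,\overline{L}(\overline{x})=(1-\kappa)\,L(x)$, which is precisely the $\kappa$-Lipschitz estimate, and $\cP$ maps $\operatorname{dom}(L)$ into itself since $\cP$ is $*$-preserving and $L$ is everywhere defined.

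Finally, Proposition~\ref{curvaturetolip}, applied to the degenerate binary relation $\cB_L$, upgrades the $\kappa$-Lipschitz estimate to the coarse Ricci curvature lower bound $\kappa$ for the triple $(\cB(\cH),\cB_L,\cP)$. I do not anticipate a genuine obstacle: the whole content is finite-dimensionality, which is what makes $\overline{L}$ a bona fide norm on the quotient and $\overline{\cP}$ automatically bounded. The single item that deserves a sentence of justification is the non-degeneracy of $\overline{L}$, which is immediate from $\cN$ being defined as the commutant of the family $\{A_j\}_{j\in\mathcal{J}}$.
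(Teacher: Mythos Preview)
Your argument is correct. Both your proof and the paper's rest on the same principle---that $L$ becomes a genuine norm once one mods out its kernel $\cN$, and that any linear endomorphism of a finite-dimensional normed space is bounded---but you reach the conclusion by a more direct route. The paper works on the orthogonal complement $\cN^\perp$ rather than the quotient, and before invoking finite-dimensionality it passes through a Stinespring dilation $\cP(x)=\tr_\cK((1\otimes\sigma)U^*(x\otimes 1)U)$ to derive an explicit formula $[A_j,\cP(x)]=(\id\otimes\sigma)\bigl(U^*[A_j\otimes 1-B_j,x\otimes 1]U\bigr)$ with $B_j=[A_j\otimes 1,U]U^*$; only then does it appeal to the finiteness of $\|\tilde\partial\circ\partial^{-1}\|$. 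That computation is not needed for the bare existence of $\kappa$, which is all the proposition asserts, so your quotient-space argument is shorter and loses nothing. The Stinespring formula would become relevant only if one wanted a quantitative handle on the constant, which the paper does not actually pursue here.
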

\begin{proof}Denote the gradient
\[ \partial: \cB(\cH)\to \bigoplus_{j\in\mathcal{J}} \cB(\cH)\pl ,\pl \quad   \partial(x)=([A_j,x])_{j\in \mathcal{J}}\pl.\]
We have $L(x)=\|{\partial(x)}\|$. Let $\cN^\perp$ be orthogonal complement of $\cN$ in $\cB(\cH)$ with respect to the trace inner product. Since both $[A_j,x]$ and $[A_j,\cP(x)]$ vanish on $\cN$ for any $j\in\mathcal{J}$, it suffices to prove the existence of $\kappa\in\mathbb{R}$ such that for any $x\in\cN^\perp$, 
\[\|{\partial(\cP(x))}\|\le (1-\kappa) \|{\partial (x)}\|\,. \]
Consider the Stinespring dilation of $\cP$
\begin{align*}
    \cP(x)=\tr_{\cK}\big((1\otimes\sigma)\,U^*(x\otimes 1)U\big)\,,
\end{align*}
where $U:\cH\otimes \cK\to \cH\otimes \cK$ is a unitary, $\tr_{\cK}$ is the partial trace on $\cK$ and $\si$ is a density operator on $\cK$. We have
\begin{align*}
    \big[A_j,\cP(x)\big]=\tr_\cK\big( (1\otimes\sigma)[A_j\otimes1,U^*]x \,U\big)+\cP\big([A_j,x]\big)+\tr_\cK\big((1\otimes \sigma)\,U^*(x\otimes 1)[A_j\otimes 1,U]\big)\,.
\end{align*}
Define $B_j:=[A_j\otimes1,U]\,U^*$, so that $[A_j\otimes1,U]=B_jU$ and $
[A_j\otimes 1,U^*]=-U^*B_j$. 
Then,
\begin{align*}
    \big[A_j,\cP(x)\big]=\id\ten \si\Big( U^*[A_j\otimes 1-B_j, x\ten 1]U\Big)\,.
\end{align*}
Note that $\text{Ran}(\partial)=\partial(\cN^\perp)$ and $\partial$ is bijective from $\cN^\perp$ to $\text{Ran}(\partial)$ since $\cH$ is finite dimensional. Take the inverse $\partial^{-1}:\text{Ran}(\partial)\to\cN^\perp$ and
define the map $\tilde{\partial}(x)=([A_j\otimes 1-B_j, x\ten 1])_{j\in\mathcal{J}}$. Therefore,
\begin{align*}
    \|[A_j,\cP(x)]\|\le \|{ \tilde{\partial}(x)}\|
    \,\le 
    \,\|{\tilde{\partial}\circ \partial^{-1}:\operatorname{Ran}(\partial)\to\bigoplus_{j\in\mathcal{J}}\cB(\cH\otimes \cK)}\|\, \|{\partial(x)}\|\,.
\end{align*}
Again, in finite dimensions, the operator norm of $\tilde{\partial}\circ \partial^{-1}$ from $\operatorname{Ran}(\partial)$ to $\bigoplus_{j\in \mathcal{J}}\cB(\cH\otimes \cK)$ is finite. The result follows.
\end{proof}

\begin{rem}In the above proposition, the assumption $\cP(\cN)\subset \cN$ is necessary. Indeed, if there exists $a\in\cN$ such that $\cP(a)\notin \cN$, then
\[ 0<\|{\partial{\cP(a)}}\|\le (1-\kappa)\|{\partial{a}}\|=0,\]
which implies that $\kappa$ cannot be finite.
\end{rem}


\section{Gradient estimate via Intertwining and transference}
\subsection{Intertwining relation}\label{sec:intertwining}
In this section, we provide two approaches to derive gradient estimates for a quantum channel. The first one is the intertwining relation 
which was found as a useful tool to derive curvature conditions for quantum Markov semigroups in \cite{carlen2017gradient} (see also \cite{wirth2020complete, brannan2020complete,datta2020relating}). 

Let $(\cM,\tau)$ be a finite von Neumann algebra and let 
$(\cA,\cH, l,r, \partial)$ be a first order structure as in section \ref{sec:differential}. 
Let $\cP:\cM\to\cM$ be a quantum channel. We are interested in the intertwining relation $\partial\cP= \hat{\cP}\partial$ for some linear map $\hat{\cP}:\cH\to \cH$.
We first consider the non-commutative transportation metric induced by the Lipschitz semi-norm
\[L:\cA\to [0,\infty), \qquad L(x)=\norm{\partial x}{L}\]
where $\norm{\cdot}{L}$ is some norm on $\partial(\cA)$, which can be different from the Hilbert space norm. We denote $\cH_L$ as the completion of $\partial(\cA)$ under $\norm{\cdot}{L}$. By Proposition \ref{curvaturetolip}, we know
\[L(\cP(x))\le (1-\kappa) L(x) \pl \forall\pl x\in \cA \Longrightarrow 
W_L(\omega_1\circ \cP,\omega_2\circ \cP)\le (1-\kappa)W_L(\omega_1,\omega_2)\pl \forall\pl \omega_1,\omega_2 \in \cD(\cM)\pl.
\]
One immediately has the following proposition:
\begin{proposition}\label{prop:L1inter}
Let $\cP:\cM\to\cM$ be a quantum channel and suppose $\partial\cP= \hat{\cP}\partial$ for some linear map $\hat{\cP}:\cH\to \cH$. Then 
\[ L(\cP(x))\le \norm{\hat{\cP}:\cH_L\to \cH_L}{} L(x)\,.\]
In particular,
if $\cH=L_2(\hat{\cM})$ for some finite von Neumann algebra $\hat{\cM}$ as in Equation \eqref{eq:algebra} and
$\hat{\cP}$ is a completely positive and trace-symmetric map, then for $1\le  p\le \infty$,
    \[ \norm{\cP(x)}{\partial,p}\,\le\, \|\hat{\cP}(1)\|{}\,\norm{\cP(x)}{\partial,p}\]
where $\norm{x}{\partial,p}:=\norm{\partial x}{L_p(\cM)}$.
\end{proposition}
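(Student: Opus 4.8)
The first inequality is immediate from the definitions: for $x\in\cA$, the intertwining relation together with the definition of the operator norm on $\cH_L$ gives
\[
L(\cP(x))=\norm{\partial\cP(x)}{L}=\norm{\hat{\cP}\partial x}{L}\le \norm{\hat{\cP}:\cH_L\to\cH_L}{}\,\norm{\partial x}{L}=\norm{\hat{\cP}:\cH_L\to\cH_L}{}\,L(x).
\]
So the plan is to concentrate on the second assertion, where $\cH=L_2(\hat{\cM})$ and $\hat{\cP}$ is completely positive and trace-symmetric. The key step is to show that $\hat{\cP}$ extends to a bounded operator on $L_p(\hat{\cM})$ with norm at most $c:=\norm{\hat{\cP}(1)}{}$ for every $1\le p\le\infty$; applying this to $\partial x\in L_p(\hat{\cM})$ and using $\partial\cP(x)=\hat{\cP}\,\partial x$ then yields $\norm{\cP(x)}{\partial,p}=\norm{\partial\cP(x)}{L_p(\hat{\cM})}=\norm{\hat{\cP}\,\partial x}{L_p(\hat{\cM})}\le c\,\norm{\partial x}{L_p(\hat{\cM})}=c\,\norm{x}{\partial,p}$.

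First I would treat the two endpoints $p=\infty$ and $p=1$. For $p=\infty$: since $\hat{\cP}$ is completely positive and $\hat{\cP}(1)\le c\,1$, the map $c^{-1}\hat{\cP}$ is a completely positive, sub-unital, hence contractive map on $\hat{\cM}$, so that $\norm{\hat{\cP}:L_\infty(\hat{\cM})\to L_\infty(\hat{\cM})}{}\le c$; here one uses the standard fact that a completely positive normal map $\Phi$ on a von Neumann algebra satisfies $\norm{\Phi}{}=\norm{\Phi(1)}{}$. For $p=1$: trace-symmetry of $\hat{\cP}$ together with its $*$-preserving property (automatic for completely positive maps) gives $\tau\big((\hat{\cP}x)y\big)=\tau\big(x(\hat{\cP}y)\big)$, i.e.\ $\hat{\cP}$ is its own adjoint for the trace duality $L_1(\hat{\cM})^{*}\cong L_\infty(\hat{\cM})$, whence $\norm{\hat{\cP}:L_1\to L_1}{}=\norm{\hat{\cP}:L_\infty\to L_\infty}{}\le c$. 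Finally I would interpolate: by the Riesz--Thorin/Stein interpolation theorem for non-commutative $L_p$-spaces,
\[
\norm{\hat{\cP}:L_p\to L_p}{}\ \le\ \norm{\hat{\cP}:L_1\to L_1}{}^{1/p}\,\norm{\hat{\cP}:L_\infty\to L_\infty}{}^{1-1/p}\ \le\ c\qquad(1\le p\le\infty),
\]
which completes the proof.

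The argument is essentially an assembly of standard facts, so I do not expect a genuine obstacle. The points that need a little care are purely bookkeeping: one must check that the three realizations of $\hat{\cP}$ (on $L_1$, $L_2$, $L_\infty$) are mutually compatible, so that the non-commutative interpolation theorem is applicable --- this holds because $\hat{\cM}\cap L_1(\hat{\cM})$ is $\norm{\cdot}{p}$-dense in each $L_p(\hat{\cM})$ and $\hat{\cP}$, being normal, agrees on it --- and one should invoke the identity $\norm{\Phi}{}=\norm{\Phi(1)}{}$ for completely positive $\Phi$ at the right level of generality. The only mild subtlety is justifying non-commutative $L_p$-interpolation when $\hat{\cM}$ is permitted to be infinite, which is by now routine.
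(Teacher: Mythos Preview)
Your proof is correct and follows essentially the same route as the paper: the first inequality is immediate from the intertwining relation, and for the second one the paper likewise reduces to the bound $\norm{\hat{\cP}:L_p\to L_p}{}\le\norm{\hat{\cP}(1)}{}$, obtaining $p=\infty$ from complete positivity, $p=1$ by trace-symmetry, and $1<p<\infty$ by interpolation. Your write-up is in fact more detailed than the paper's, and you correctly interpreted the intended right-hand side of the second inequality as $\norm{x}{\partial,p}$ rather than the misprinted $\norm{\cP(x)}{\partial,p}$.
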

\begin{proof}
The first assertion is straightforward from the intertwining relation. For the second assertion, it suffices to 
show that $\norm{\hat{\cP}:L_p(\cM)\to L_p(\cM)}{\operatorname{cb}}\le \|\hat{\cP}(1)\|{}$. This is known for $p=\infty$ and by symmetry for $p=1$. The case $1<p<\infty$ follows by interpolation.
\end{proof}

For the quantum Wasserstein 2-distance defined in \eqref{eq:Riemmannian}, we have the following analog of \cite[Theorem 3.1]{wirth2020complete} for a quantum channel.
\begin{theorem}\label{thm:L2inter}
Suppose there exists a map $\hat{\cP}:\cH\to \cH$ and $C\ge 0$ such that
\begin{enumerate}
    \item[$\operatorname{(i)}$] $\partial\cP=\hat{\cP}\partial$;
 \item[$\operatorname{(ii)}$] $\hat{\cP}^\dagger \circ l(\rho )\circ \hat{\cP}\le C \,l(\cP^\dagger( \rho))$ as an operator in $\cB(\cH)$ for any $\rho\in \cM_+$;
 \item[$\operatorname{(iii)}$] $\hat{\cP}^\dagger \circ r(\rho )\circ \hat{\cP}\le C\, r(\cP^\dagger (\rho))$ as an operator in $\cB(\cH)$ for any $\rho\in \cM_+$ , 
\end{enumerate}
where $\hat{\cP}^\dagger$ is the adjoint map of $\hat{\cP}$ on $\cH$. 
Then $\cP$ satisfies $(1-C)$-$\operatorname{CGE}$ and for any $\omega_1, \omega_2\in \cD(\cM)$,
\begin{align}
W_{\partial,2}(\omega_1\circ \cP , \omega_2\circ \cP)\le C\, W_{\partial,2}(\omega_1, \omega_2)\,.\label{eq:W2contraction}
\end{align}
\end{theorem}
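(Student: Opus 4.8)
The plan is to derive the $(1-C)$-gradient estimate from the intertwining relation and then appeal to Proposition \ref{prop:equivalence}. First I would fix a faithful density $\rho$, reducing the general case by approximation since both sides of the target inequality depend continuously on $\rho$ through the monotone limit property of operator means. By hypothesis (i), $\partial\cP(x)=\hat\cP(\partial x)$ for $x\in\cA$, so from the definition \eqref{eq:weight} of the weighted norm,
\begin{align*}
\norm{\partial\cP(x)}{\rho}=\big\langle\hat\cP(\partial x),\,\Lambda_\rho(\hat\cP(\partial x))\big\rangle_\cH=\big\langle\partial x,\,(\hat\cP^\dagger\Lambda_\rho\hat\cP)(\partial x)\big\rangle_\cH\,.
\end{align*}
Hence it is enough to prove the operator inequality $\hat\cP^\dagger\Lambda_\rho\hat\cP\le C\,\Lambda_{\cP^\dagger(\rho)}$ on $\cH$; this gives $\norm{\partial\cP(x)}{\rho}\le C\,\norm{\partial x}{\cP^\dagger(\rho)}$, i.e.\ the $(1-C)$-gradient estimate, and then \eqref{eq:W2contraction} follows from Proposition \ref{prop:equivalence} applied with $\kappa=1-C$.

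The heart of the argument is this operator inequality, and it is here that the Kubo--Ando structure is used. Recalling $\Lambda_\rho=\Lambda(l(\rho),r(\rho))$, I would invoke the transformer inequality for operator means, $T^*\Lambda(A,B)T\le\Lambda(T^*AT,T^*BT)$, valid for positive $A,B$ and any bounded $T$: for positive $T$ this is axiom (ii) of an operator mean (Section \ref{sec:operator}), and for general $T$ it is the transformer inequality for operator connections \cite{Kubo1980}. Taking $T=\hat\cP$, $A=l(\rho)$, $B=r(\rho)$, then using monotonicity of operator means together with the theorem's hypotheses (ii) and (iii), and finally positive homogeneity $\Lambda(cA,cB)=c\,\Lambda(A,B)$ (itself a consequence of axiom (ii)), one gets
\begin{align*}
\hat\cP^\dagger\Lambda_\rho\hat\cP\,\le\,\Lambda\big(\hat\cP^\dagger l(\rho)\hat\cP,\,\hat\cP^\dagger r(\rho)\hat\cP\big)\,\le\,\Lambda\big(C\,l(\cP^\dagger\rho),\,C\,r(\cP^\dagger\rho)\big)\,=\,C\,\Lambda_{\cP^\dagger(\rho)}\,,
\end{align*}
as required.

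For the complete gradient estimate I would rerun the same computation after ampliation: given an arbitrary finite von Neumann algebra $\mathcal{R}$, one replaces $\cM$ by $\mathcal{R}\,\overline{\otimes}\,\cM$, $\partial$ by $\id\otimes\partial$ and $\hat\cP$ by $\id\otimes\hat\cP$, with left/right actions and operator mean obtained from the originals by ampliation. Tensoring with the identity preserves the intertwining relation (i) as well as the operator inequalities (ii) and (iii), so the triple $(\id_\mathcal{R}\otimes\cP,\,\id\otimes\partial,\,\id\otimes\hat\cP)$ again satisfies the hypotheses, and the argument of the previous paragraph gives it the $(1-C)$-gradient estimate. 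This is precisely the $(1-C)$-$\operatorname{CGE}$; specializing to $\mathcal{R}=\C$ and invoking Proposition \ref{prop:equivalence} yields \eqref{eq:W2contraction}.

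The step I expect to be the main obstacle is the transformer inequality for the generally non-positive map $\hat\cP$: axiom (ii) as stated only permits sandwiching by positive operators, so one genuinely needs the connection version, and some care is also required when $l(\rho),r(\rho)$ fail to be invertible---this is the reason for the initial reduction to a faithful $\rho$ together with a limiting argument. Everything else reduces to bookkeeping with the operator-mean axioms.
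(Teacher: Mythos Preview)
Your proof is correct and follows essentially the same route as the paper's: both establish the operator inequality $\hat\cP^\dagger\Lambda_\rho\hat\cP\le C\,\Lambda_{\cP^\dagger(\rho)}$ via the transformer inequality and monotonicity of operator means, deduce the gradient estimate, ampliate for CGE, and invoke Proposition~\ref{prop:equivalence}. Your explicit remark that the transformer inequality $T^*\Lambda(A,B)T\le\Lambda(T^*AT,T^*BT)$ must be invoked in its general Kubo--Ando connection form (valid for arbitrary bounded $T$) rather than just axiom~(ii) for positive $T$ is a technical point the paper glosses over with the phrase ``by the defining property of operator means''; similarly the paper does not discuss the reduction to faithful~$\rho$.
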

\begin{proof}
Let $\Lambda_\rho=\Lambda(l(\rho),r(\rho))$ be the operator mean of $l(\rho)$ and $r(\rho)$ in the definition \eqref{metricCM} of the metric. Then by the defining property of operator means,
\begin{align}\label{eq:order} \hat{\cP}^\dagger\circ  \Lambda_\rho\circ  \hat{\cP}&=\hat{\cP}^\dagger\circ \Lambda(l(\rho),r(\rho))\circ \hat{\cP}
\nonumber\\ &\le \Lambda(\hat{\cP}^\dagger \circ l(\rho)\circ \hat{\cP},\hat{\cP}^\dagger \circ r(\rho)\circ  \hat{\cP})
\nonumber\\ &\le \Lambda(C\,l(\cP^\dagger(\rho)),\,C \,r(\cP^\dagger(\rho)) )
\nonumber \\ &=C \,\Lambda(l(\cP^\dagger(\rho))\hat{\cP}, r(\cP^\dagger(\rho)) )=C\,\Lambda_{\cP^\dagger(\rho)}
\end{align}
as a positive operator on $\cH$. Therefore,
\begin{align*}
\norm{\partial\cP(x)}{\rho}^2=&\norm{\hat{\cP}(\partial x)}{\rho}^2
\\= &
\lan \hat{\cP}(\partial x),\, \Lambda_{\rho}(\hat{\cP}(\partial x)) \ran_{\cH}
\\= &
\lan\partial x,  \hat{\cP}^\dagger\circ \Lambda_{\rho}\circ \hat{\cP}(\partial x) \ran_{\cH}
\\ \le & C\lan\partial x,  \Lambda_{\cP^\dagger(\rho)}(\partial x) \ran_{\cH}=\norm{\partial x}{\cP^\dagger(\rho)}\pl.
\end{align*}
Applying the same argument to $\id_{\mathcal{R}}\ten \partial$ for all finite von Neumann algebra $\mathcal{R}$ gives $(1-C)-\operatorname{CGE}$. The assertion for the Wasserstein distance follows from Proposition \ref{prop:equivalence}. 
\end{proof}

\subsection{Group transference}\label{sec:transference}
In this part, we show that the gradient estimate of a quantum channel $\cP$ can be transferred from a classical Markov map. Let $G$ be a compact group equipped with the normalized Haar measure $\mu$.
Consider a left invariant Markov map
\[K: L_\infty(G)\to L_\infty(G)\pl, \qquad \pl (K f)(h):=\int_{G} k(h^{-1}g)f(g)\,d\mu(g)  \]
where  $\mu$ is the normalized Haar measure on $G$ and the kernel function is $k(g,h):=k(h^{-1}g)$. Given a projective unitary representation $u:G\to \mathcal{U}(\cH)$ of $G$ on some finite dimensional Hilbert space $\cH$, we define the transference map
\[ \pi: \cB(\cH)\to L_\infty(G,\cB(\cH))\pl,\qquad \pl \pi(x)(g)=u(g)xu(g)^*\pl.\] The tranferred quantum channel on $\cB(\cH)$ is given by 
\begin{align}\label{transferred}
 \cP:\cB(\cH)\to \cB(\cH)\pl, \pl    \qquad  \cP(x):=\int_G\,k(g)\,u(g) \,x\,u(g)^*\,d\mu(g)\,,
\end{align}
which satisfies the following commuting diagram
\begin{equation}\label{ccss}
 \begin{array}{ccc} L_{\infty}(G,\cB(\cH))\pl\pl &\overset{K\ten \id_{\cB(\cH)}}{\xrightarrow{\hspace*{1.6cm}}} & L_{\infty}(G,\cB(\cH)) \\
                    \uparrow  \pi   & & \uparrow \pi \\
                     \cB(\cH) \pl\pl &\overset{\cP}{\longrightarrow} & \cB(\cH)
                     \end{array} \pl .
                     \end{equation}
Since $\pi$ is a trace preserving $*$-homomorphism, one can view $\cP$ a the reduced action of $K\ten \id_{\cB(\cH)}$ on the sub-system $\cB(\cH)$. The transference technique passes many properties from the classical map $K$ to the quantum channel $\cP$. (see \cite{gao2020fisher,bardet2019group} for similar discussions in the case of a semigroup). 

Here we first consider gradient estimates induced by a differential structure.
Let $\mathfrak{g}$ be the Lie algebra of left invariant vector fields, i.e. for $X\in \mathfrak{g}$. 
\[Xf(g):=\frac{d}{dt}f(ge^{tX})|_{t=0}\pl.\]
Given a left invariant metric on $G$ with corresponding geodesic distance $d_{\operatorname{Riem}}$, the gradient operator can be defined as
\[\nabla: C^\infty(G)\to \oplus_{j=1}^d C^\infty(G)\pl, \qquad \pl \nabla(f)=(X_jf)_{j=1}^d\pl,\]
where $\{X_j\}_{j=1}^d\subset \mathfrak{g}$ is an o.n.b. with respect to the metric and $C^\infty(G)$ denotes the class of smooth functions on $G$. We choose $L_j\in \cB(\cH)_\mathbb{R}$ to be the self-adjoint operators satisfying the relations $e^{itL_j}=u(e^{tX_j})$. In words, $L_j$ is the image of $X_j$ under the projective Lie algebra representation (up to the imaginary unit $i$). We define the corresponding non-commutative differential structure on $\cB(\cH)$ as follows
\[\partial=(\partial_j):\cB(\cH)\to \bigoplus_{j=1}^d \cB(\cH)\pl ,\pl\qquad  \partial_j(x)=[L_j,x] \pl.\]
The following interwining relation holds: for any $x\in \cB(\cH)$,
\begin{align}\nonumber
\big( X_j\pi(x)\big)(g)&=\left.\frac{d}{dt}\right|_{t=0} u(ge^{tX_j})xu(ge^{tX_j})^*\\
&=\left.\frac{d}{dt}\right|_{t=0} u(g) e^{iL_jt}xe^{-iL_jt}u(g)^*\nonumber
\\
&=u(g) [iL_j,x]u(g)^*\nonumber
\\
&=i\pi(\partial_j x)(g) \pl. \label{eq:inter}
\end{align}

In the next theorem, we show that the above intertwining relation enables us to pass gradient estimates for $K$ to its transferred map $\cP$.
\begin{theorem}\label{transferrencetheorem}
Let $G$ be a compact Lie group and let $K$ be a left invariant Markov map on $L_\infty(G)$ with invariant (Haar) measure $\mu$. Let $u:G\to \mathcal{U}(\cH)$ be a continuous projective unitary representation on a (finite dimensional) Hilbert space $\cH$ and $\cP$ be the transferred quantum channel defined above.
\begin{enumerate}
 \item[$\operatorname{(i)}$]Suppose $K:L_\infty(G)\to L_\infty(G)$ satisfies the following pointwise gradient estimate: for any $f\in\operatorname{dom}(\nabla)$, 
\begin{align}\label{eq:Ricciclass}
|\nabla K(f)|^2\le (1-\kappa)\,K(| \nabla f|^2)\,
\end{align}
where $|\nabla f|^2=\sum_{j=1}^d|X_j f|^2$.
Then for $s\in[0,1]$ and any positive operators $A,B$,
\begin{align} \label{eq:gradestimate}\langle \partial \cP(x), \partial \cP(x)\rangle_{A,B,s}\le  (1-\kappa)\,\langle \partial x, \partial x\rangle_{\cP^\dag(A),\cP^\dag(B),s}  \end{align}
where $\cP^\dag$ is the adjoint channel of $\cP$ w.r.t. the trace inner product, and where the inner product $\langle.,.\rangle_{A,B,s}$ is defined for any two vectors ${\bf x}=(x_j),{\bf y}=(y_j)\in \bigoplus_{j=1}^d{\cB(\cH)}$ as
\[ \langle {\bf x}, {\bf y} \rangle_{A,B,s}:=\sum_{j=1}^d\tr\big(x_j^*A^{s}y_jB^{1-s}\big)\pl.\]
\item[$\operatorname{(ii)}$]For $1\le p\le\infty$, define the following weak $L_2(L_p)$ norm: for ${\bf x}=(x_j)\in \bigoplus_{j=1}^d{\cB(\cH)}$
\[\norm{{\bf x}}{2,p}=\sup_{\|{A }\|_{p'}\le 1} \big(\sum_{j=1}^d |\tr(Ax_j)|^2\big)^{1/2}\pl,\]
where $1/p+1/p'=1$.
Suppose that $K:L_\infty(G)\to L_\infty(G)$ satisfies the following gradient estimate: for any $f\in\operatorname{dom}(\nabla)$, 
\begin{align}\label{eq:Lip}
\norm{K(f)}{\operatorname{Lip}}\le (1-\kappa)\norm{f}{\operatorname{Lip}}
\end{align}
where $\norm{f}{\operatorname{Lip}}=\displaystyle \sup_{g\ne h\in G}\frac{|f(g)-f(h)|}{d_{\operatorname{Riem}}(g,h)}=\sup_{g\in G} |\nabla f|(g)$ is the Lipschitz constant. Then for all $1\le p\le \infty$,
\[\norm{{\partial \cP(x)}}{2,p}\le (1-\kappa) \norm{{\partial x}}{2,p}\pl.\]
\end{enumerate}
\end{theorem}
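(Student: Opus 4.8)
The plan is to prove both parts using the intertwining relation \eqref{eq:inter}, which says $X_j\pi(x) = i\,\pi(\partial_j x)$, together with the fact that $\pi$ is a trace-preserving $*$-homomorphism intertwining $K\otimes\id$ with $\cP$ as in \eqref{ccss}. The strategy is to lift everything on $\cB(\cH)$ to the function algebra $L_\infty(G,\cB(\cH))$ via $\pi$, apply the classical gradient estimate there fiberwise, and then push the result back down using a partial-trace/averaging argument to recover the relevant quantity on $\cB(\cH)$.

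For part (i), first I would express the inner product $\langle \partial\cP(x),\partial\cP(x)\rangle_{A,B,s}$ at the level of $G$: using $\partial_j\cP(x)(g)$-pointwise, write $\pi(\partial_j\cP(x)) = -iX_j\pi(\cP(x)) = -iX_j(K\otimes\id)\pi(x)$, and note that $\langle \partial\cP(x),\partial\cP(x)\rangle_{A,B,s}$ can be realized by integrating against the Haar measure the fiberwise quantity $\sum_j \tr\big((X_j(K\otimes\id)\pi(x))(g)^*\,A^s\,(X_j(K\otimes\id)\pi(x))(g)\,B^{1-s}\big)$, since $u(g)xu(g)^*$-conjugation and trace invariance make the integral reproduce the untwisted expression after using that $\cP^\dagger(A)$ is precisely what one gets by averaging $u(g)^*Au(g)$. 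The key step is the matrix-valued refinement of the classical pointwise Bakry--Émery estimate \eqref{eq:Ricciclass}: for a $\cB(\cH)$-valued function $F = \pi(x)$, one has the operator inequality (for fixed $g$, in $\cB(\cH)$)
\[
\sum_j \big(X_j(K\otimes\id)F\big)(g)^*\,\big(X_j(K\otimes\id)F\big)(g) \;\le\; (1-\kappa)\,\big((K\otimes\id)\big(\textstyle\sum_j (X_jF)^*(X_jF)\big)\big)(g),
\]
which follows from \eqref{eq:Ricciclass} applied entrywise after a Kadison--Schwarz/positivity argument for the completely positive map $K\otimes\id$ — this is where the completely positive structure of $K$ is used. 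Then one sandwiches with $A^{s/2}\cdot A^{s/2}$ and $B^{(1-s)/2}\cdot B^{(1-s)/2}$, takes traces, integrates over $G$, and identifies the right-hand side with $(1-\kappa)\langle\partial x,\partial x\rangle_{\cP^\dagger(A),\cP^\dagger(B),s}$ using $\int_G k(g)\,u(g)^* A\,u(g)\,d\mu(g) = \cP^\dagger(A)$ and the tracial property $\tr(\pi(y)(g)\,u(g)^*Cu(g)) = \tr(y\,u(g)^*\cdots)$-type manipulations.

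For part (ii), the plan is analogous but simpler, since it involves only an $L_2(L_p)$ norm rather than a weighted bilinear form. I would use that $\|\mathbf{x}\|_{2,p} = \sup_{\|A\|_{p'}\le1}\big(\sum_j|\tr(Ax_j)|^2\big)^{1/2}$ and that $\tr(A\,\partial_j\cP(x)) = \tr(A\,[L_j,\cP(x)])$ can be written, via \eqref{eq:inter} and \eqref{ccss}, as an integral over $G$ of $(X_j K f_A)(g)$-type terms where $f_A$ is the $\mathbb{C}$-valued function $g\mapsto \tr(u(g)^*Au(g)\,x)$ or similar; then the scalar Lipschitz/gradient estimate \eqref{eq:Lip} (equivalently $\sup_g|\nabla Kf|(g)\le(1-\kappa)\sup_g|\nabla f|(g)$) is applied directly to control $\big(\sum_j|(X_jKf_A)(g)|^2\big)^{1/2} \le (1-\kappa)\|f_A\|_{\mathrm{Lip}}$, and one checks $\|f_A\|_{\mathrm{Lip}} = \sup_g|\nabla f_A|(g) \le \|\mathbf{\partial x}\|_{2,p}$ uniformly over $\|A\|_{p'}\le1$ by the same intertwining identity. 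Taking the supremum over $A$ gives the claim.

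I expect the main obstacle to be the bookkeeping in part (i): correctly identifying the twisted quantities $\langle\cdot,\cdot\rangle_{A,B,s}$ on $\cB(\cH)$ with their images under $\pi$ on $L_\infty(G,\cB(\cH))$, and verifying that the Haar-integration of $u(g)^*(\cdot)u(g)$ produces exactly $\cP^\dagger$ applied to the weights $A$ and $B$ (rather than to something off by the kernel $k$). The completely-positive upgrade of the scalar Bakry--Émery inequality \eqref{eq:Ricciclass} to the matrix-valued estimate displayed above is conceptually the heart of the matter, but since $K\otimes\id$ remains a positivity-preserving (indeed CP) map and \eqref{eq:Ricciclass} is a pointwise scalar inequality, this should follow by a routine operator-convexity / Stinespring argument; the genuine care is in the transference dictionary.
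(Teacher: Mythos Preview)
Your plan for part (ii) is essentially the paper's argument: define the scalar function $f^x_A(g)=\tr(u(g)xu(g)^*A)$, show via the intertwining relation that $\|\partial x\|_{2,p}=\sup_{\|A\|_{p'}\le 1}\|f^x_A\|_{\operatorname{Lip}}$ (using that the supremum over $g\in G$ is absorbed by the supremum over $A$ since $\|u(g)^*Au(g)\|_{p'}=\|A\|_{p'}$), and then apply the classical Lipschitz contraction \eqref{eq:Lip}. That part is fine.

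Part (i), however, has a genuine gap. Your operator inequality
\[
\sum_j \big(X_j(K\otimes\id)F\big)^*\big(X_j(K\otimes\id)F\big)\;\le\;(1-\kappa)\,(K\otimes\id)\Big(\sum_j(X_jF)^*(X_jF)\Big)
\]
is indeed a valid consequence of the scalar pointwise estimate (just test against vectors). But the subsequent ``sandwich with $A^{s/2}$ and $B^{(1-s)/2}$'' does \emph{not} yield the claimed weighted inequality: the quantity $\sum_j\tr(Y_j^*A^sY_jB^{1-s})$ has $A^s$ sitting \emph{between} $Y_j^*$ and $Y_j$, so it is not a positive linear functional of $\sum_j Y_j^*Y_j$, and no one-sided operator ordering gives you control of it. Equally, your proposed Haar-integration step does not reproduce the left-hand side: since $\pi(\partial_j\cP(x))(g)=u(g)\,\partial_j\cP(x)\,u(g)^*$, integrating $\tr\big(\pi(\partial_j\cP(x))(g)^*A^s\pi(\partial_j\cP(x))(g)B^{1-s}\big)$ over $G$ produces conjugated weights $u(g)^*A^su(g)$, $u(g)^*B^{1-s}u(g)$ rather than the fixed $A^s,B^{1-s}$.

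The paper's proof takes a different route that avoids this obstruction. It spectrally decomposes $A=\sum_k\lambda_k|\varphi_k\rangle\langle\varphi_k|$ and $B=\sum_l\mu_l|\psi_l\rangle\langle\psi_l|$, writes
\[
\sum_j\tr\big((\partial_j\cP(x))^*A^s(\partial_j\cP(x))B^{1-s}\big)=\sum_{k,l}\lambda_k^s\mu_l^{1-s}\,|\nabla K f^x_{\varphi_k,\psi_l}|^2(e)
\]
for the \emph{scalar} functions $f^x_{\varphi,\psi}(g)=\langle\varphi|u(g)xu(g)^*|\psi\rangle$, and applies the pointwise estimate \eqref{eq:Ricciclass} to each of these. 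This produces an integral $\int_G k(g)\sum_j\tr\big((\partial_j x)^*(u(g)^*Au(g))^s(\partial_j x)(u(g)^*Bu(g))^{1-s}\big)d\mu(g)$, and the passage from this to $\sum_j\tr\big((\partial_j x)^*\cP^\dagger(A)^s(\partial_j x)\cP^\dagger(B)^{1-s}\big)$ is exactly \emph{Lieb's concavity theorem} (joint concavity of $(A,B)\mapsto\tr(Y^*A^sYB^{1-s})$) applied via Jensen's inequality. This is the missing ingredient in your plan; without it the $\cP^\dagger(A)^s$ on the right cannot appear.
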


\begin{proof}
(i) For sake of simplicity, we denote $K\equiv K\otimes \id_{\cB(\cH)}$. For each $j$, we have by \eqref{eq:inter} and $\pi\cP=K\pi$,
\begin{align}
i\partial_j\cP(x)=i\pi(\partial_j\cP(x))(e)=X_j(\pi\cP(x))(e)=X_j(K\pi(x))(e)\,.\label{commutationrelations}
\end{align}
Therefore, for any two vectors $\varphi,\psi\in\cH$,
\begin{align*}
\langle \varphi|\partial_j\cP(x)|\psi\rangle = -i\langle \varphi|X_j(K\pi(x))(e)|\psi\rangle= -iX_jK (\langle \varphi|\pi(x)|\psi\rangle) (e)= -iX_jK (f_{\varphi,\psi}^x) (e)\,,
\end{align*}
where $f^x_{\varphi,\psi}(g):= \langle \varphi|u(g) x u(g)^*|\psi\rangle$. 
Fix $s\in[0,1]$. For any two positive matrices $A:=\sum_k\lambda_k|\varphi_k\rangle\langle \varphi_k|$ and $B:=\sum_{l}\mu_l|\psi_l\rangle\langle \psi_l|$ with $\la_k,\mu_l>0$, we consider the trace 
\begin{align}\label{eq:lhs}
\tr\big( \partial_j\cP(x)^*A^s\,\partial_j\cP(x)B^{1-s}\big)=\sum_{k,l}\lambda_k^s\mu_l^{1-s}\langle \psi_l|\partial_j\cP(x)^*|\varphi_k\rangle \langle \varphi_k|\partial_j\cP(x)|\psi_l\rangle\,.
\end{align}
 This together with the gradient estimate \eqref{eq:Ricciclass} gives
\begin{align}
\sum_j	\tr\big( \partial_j\cP(x)^*A^s\,\partial_j\cP(x)B^{1-s}\big)&=\sum_{k,l}\Big(\sum_{j}\lambda_k^s\mu_l^{1-s}|X_j K\,(  f^x_{\varphi_k,\psi_l})(e)|^2\Big)\nonumber
\\&=\sum_{k,l}\lambda_k^s\mu_l^{1-s}|\nabla K\,(  f^x_{\varphi_k,\psi_l})(e)|^2
\nonumber\\
&\overset{(1)}{\le }(1-\kappa)\sum_{k,l}\lambda_k^s\mu_l^{1-s}\,K(|\nabla  f^x_{\varphi_k,\psi_l}|^2)(e)\,,\label{intermediatestep}
\end{align}
where (1) is a consequence of the pointwise gradient estimate \eqref{eq:Ricciclass}.
Using \eqref{eq:inter},
\begin{align*}
  X_jf^x_{\varphi,\psi}(g)= \langle\varphi|X_j\pi(x)  |\psi\rangle(g)=  i\langle\varphi|\pi(\partial_j x)  |\psi\rangle(g)
 =  i\langle\varphi|u(g)(\partial_j x)u(g)^*  |\psi\rangle
\end{align*}
Hence, using the above identity in the right-hand side of \eqref{intermediatestep}, we get:
	\begin{align}
		\sum_{k,l}\lambda_k^s\mu_l^{1-s}\,&K(|\nabla f^x_{\varphi_k,\psi_l}|^2)(e)\nonumber\\
		&=\int_G\,\sum_{j,k,l}\lambda_k^s\mu_l^{1-s}\,k(g)  \,    | \langle \varphi_k|u(g) ( \partial_j  x)u(g)^* |\psi_l\rangle|^2   \, d\mu(g)\nonumber\\
		&=\int_G\,k(g)\,\sum_{j}\,   \tr\big((\partial_j x)^*\,(A(g))^s\,(\partial_j x)\,(B(g))^{1-s} \big)  \, d\mu(g)\nonumber\\
		&\overset{(1)}{\le} \sum_{j}\,   \tr\big((\partial_j x)^*\,(\cP^\dag(A))^s\,(\partial_j x)\,(\cP^\dag(B))^{1-s} \big) \,,\label{eq:rhs}
	\end{align}
	where $A(g):=u(g)^* A\, u(g), B(g)=u(g)^* B\, u(g)$ and $\cP^\dag(x)=\int_{G} k(g)u(g)^*xu(g)d\mu(g)$ is the adjoint channel of $\cP$.
Here, the identity (1) follows from Lieb's concavity theorem \cite{Lieb1973}. The result follows.  
\medskip

(ii) Given $x,A\in \cB(\cH)$, we denote the function $f_A^x(g):=\tr(u(g)xu(g^*)A)$.
We have 
\begin{align*} \norm{\partial x}{2,p}^2=\sup_{\|A\|_{p'}\le 1} 
\sum_j|\Tr((\partial_j x)A)|^2
=&\sup_{\|A\|_{p'}\le 1}\sum_j\left|\Tr\big( \left(X_j\pi(x)(e)\right) A\big)\right|^2
\\ =& \sup_{\|A\|_{p'}\le 1}\sum_j\left|X_j f_A^x(e)\right|^2
\\ \overset{(1)}{=}& \sup_{\|A\|_{p'}\le 1}\sup_{g\in G}\sum_j\left|X_j f_A^x(g)\right|^2
\\ =&  \sup_{\|A\|_{p'}\le 1}\norm{ f_A^x}{\text{Lip}}^2
\end{align*}
In the inequality (1), we used that for any $g\in G$,
\begin{align} X_j f_A^y(g)=X_j\Tr(Au(g)yu(g)^*)&=\left.\frac{d}{dt}\right|_{t=0} \Tr(Au(ge^{tX_j})yu(ge^{tX_j})^*)\nonumber
\\&=\left.\frac{d}{dt}\right|_{t=0} \Tr(u(g)^*Au(g)u(e^{tX_j})yu(e^{tX_j})^*)\nonumber
\\&=X_{j}f_{u(g)^*Au(g)}^y(e) \label{trickyequation}
\end{align}
Therefore, using once again the commutation relation \eqref{commutationrelations},  
\begin{align*} \norm{\partial\cP(x)}{2,p}^2=\sup_{\|A\|_{p'}\le 1} 
\sum_j|\Tr(\partial_j\cP(x)A)|^2
&=\sup_{\|A\|_{p'}\le 1}\sum_j\left|\Tr\big( \left(X_j(K\pi(x))(e)\right) A\big)\right|^2
\\ &= \sup_{\|A\|_{p'}\le 1}\sum_j\left|X_j Kf_A^x(e)\right|^2
\\ &\overset{(2)}{=} \sup_{\|A\|_{p'}\le 1}\,\sup_{g\in G}\sum_j\left|X_j Kf_A^x(g)\right|^2
\\ &=  \sup_{\|A\|_{p'}\le 1}\norm{ Kf_A^x}{\text{Lip}}^2
\\ &\le  (1-\kappa)^2\sup_{\|A\|_{p'}\le 1}\norm{ f_A^x}{\text{Lip}}^2=\, \norm{\partial x}{2,p}^2\,,
\end{align*}
where (2) above follows once again from \eqref{trickyequation}. The result follows. 
\end{proof}

The contraction of the Lipschitz constant used in \Cref{transferrencetheorem}(ii) is equivalent to the coarse Ricci curvature bound introduced by Ollivier \cite{Ollivier2009}. On the other hand, the assumption in \Cref{transferrencetheorem}(i), namely that
\[|\nabla K f|^2\le (1-\kappa)K(|\nabla  f|^2)\]
reduces to Bakry-Emery's curvature dimension condition \cite{bakry1984hypercontractivite,bakry1994hypercontractivite} in the case of a Markov semigroup $K=e^{tL}$ and with $\kappa=1-e^{\kappa' t}$, where $\kappa'$ is a uniform lower bound on the Ricci curvature of the underlying Lie group $G$. We proved that it implies various non-commutative gradient estimates in \eqref{eq:gradestimate}. For example, choosing $s=0$, $A=1_\cH$ and $B$ to be any state $\sigma$, \eqref{eq:gradestimate} gives
	\begin{align}
	   \tr\Big( \sum_j \,|\partial_j\cP(x)|^2\,\sigma\Big)\le (1-\kappa)\tr\Big( \sum_j \,|\partial_j x|^2\,\cP^{\dag}(\sigma)\Big)= (1-\kappa)\tr\Big(\cP\big(\sum_j \,|\partial_j x|^2\big)\,\sigma\Big).
	\end{align}
Since $\sigma$ is arbitrary, we get the following (completely positive) Bakry-Emery type condition:
\[ \Gamma(\cP(x))\le_{\operatorname{cp}} (1-\kappa) \cP \Gamma(x)\pl\]
for the gradient form $\Gamma(x)=\sum_{j}|\partial_j x|^2$. Note that the complete positivity follows from applying the same estimate \eqref{eq:gradestimate} to the representation $\tilde{u}(g)=u(g)\ten 1_{\cK}$ for any Hilbert space $\cK$.
On the other hand, choosing $A=B=\rho$ yields the bound:
\begin{align*}
	\|\partial\cP(x)\|_{\rho}^2\le (1-\kappa)\,\|\partial x\|_{\cP^\dag(\rho)}^2\,,
	\end{align*}
	for the inner product \[ \lan\mathbf{A},\mathbf{B}\ran_{\rho}=\sum_{j=1}^d\,\int_0^1\,  e^{\omega_j(s-1/2)}\tr(A_j^*\,\rho^s\,B_j\,\rho^{1-s})ds\pl , \pl \mathbf{A}=(A_j),\mathbf{B}=(B_j)\in\bigoplus_{j
=1}^d\cB(\cH)\]
where $\omega_j$ is the some frequency constant. 
As discussed in	Example \ref{exam:W2}, the above inner product is at the heart of the definition the non-commutative $W_2$ distance introduced by Carlen and Maas \cite{carlen2017gradient} and the associated entropic Ricci curvature lower bound.

The above transference technique applies similarly to finite groups with discrete differential structure, which we briefly illustrate here. Let $G$ be a finite group equipped with the normalized counting measure. Consider a Markov map  
\[ K: \ell_\infty(G)\to \ell_\infty(G)\,,\qquad (K f)(h):=\frac{1}{|G|}\sum_{g} k(h^{-1}g)f(g) \,,\]
where $k\ge 0$ is the kernel function. 
Given a finite dimensional projective unitary representation, the transferred quantum channel on $\cB(\cH)$ is given by
\begin{align}\label{transferred2}
 \cP:\cB(\cH)\to \cB(\cH)\pl, \pl    \qquad  \cP(x):=\frac{1}{|G|}\sum_{g}k(g)\,u(g) \,x\,u(g)^*\,
\end{align}
and satisfies the commutation relation $\pi\cP=(K\ten \id_{\cB(\cH)})\pi$ with the transference map $\pi:\cB(\cH)\to \ell_\infty(G,\cB(\cH))$ defined as in 
 the commuting diagram \eqref{ccss}.
The Markov map $K$ is ergodic (i.e. has a unique invariant state) if the support of $k$ is a generating set. If in addition $k(g)=k(g^{-1})$, then both $K$ and $\cP$ are self-adjoint \cite{bardet2019group}.

We shall now consider a discrete differential structure. Let $S$ be a subset of $G$ and a weight function $w:S\to (0,\infty)$. Define the difference operator
\[\nabla=(\nabla_g f)_{g\in S}: \ell_\infty(G)\to \bigoplus_{g\in S} \ell_\infty(G)\pl, \qquad \pl (\nabla_g f)(h)=\sqrt{w(g)}\,(f(hg)-f(h))\pl .\]
Each $\nabla_g$ is the (left invariant) discrete difference operator via the transition of the edges $\{(h,gh);h\in G\}$ in the Cayley graph, i.e. $[\nabla_g, L_{g'}]=0$ for any operator $L_{g'}$ of left multiplication by $g'\in G$. The associated gradient form is defined as
\[ \Gamma_\nabla(f,f)=\sum_{g\in S} |\nabla_g f|^2\pl,\pl \qquad \Gamma_\nabla(f,f)(h)=\sum_{g\in S}\omega(g)\,|f(hg)-f(h)|^2\,.\]
The corresponding non-commutative differential structure on $\cB(\cH)$ is defined as follows
\[\partial=(\partial_g)_{g\in S}:\cB(\cH)\to \bigoplus_{g\in S} \cB(\cH)\pl ,\pl\qquad  \partial_g(x)=\sqrt{\omega(g)}\,(u(g)xu(g)^*-x)=\sqrt{\omega(g)}\,u(g)[x,u(g)^*] \pl,\]
and the associated non-commutative gradient form is
\[ \Gamma_\partial(x,x)=\sum_{g\in S} |\partial_g x|^2=\sum_{g\in S}\omega(g)\,|[x,u(g)^*]|^2\,.\]
Note that here $\partial$ is not a derivation in the sense that it does not satisfy Leibniz rule.
The following interwining relation holds: for any $x\in \cB(\cH)$,
\begin{align}\nonumber
\big( \nabla_h\pi(x)\big)(g)&=\sqrt{\omega(h)}(u(gh)x u(gh)^*-u(g)xu(g)^*)\\
&= \sqrt{\omega(h)}u(g) (u(h)xu(h^*)-x)u(g)^*\nonumber
\\
&=u(g)(\partial_h x)u(g)^*\nonumber
\\
&=\pi(\partial_h x)(g) \pl. \label{eq:inter2}
\end{align}

\begin{theorem}
Let $G$ be a finite group and let $K$ be a left invariant Markov map on $\ell_\infty(G)$. Let $u:G\to \mathcal{U}(\cH)$ be a projective unitary representation on a (finite dimensional) Hilbert space $\cH$ and $\cP$ be the transferred quantum channel defined above.
\begin{enumerate}
 \item[$\operatorname{(i)}$]Suppose $K:\ell_\infty(G)\to \ell_\infty(G)$ satisfies the following  gradient estimate: for any $f\in \ell_\infty(G)$,
\begin{align}\label{eq:Ricciclass2}
|\nabla K(f)|^2\le (1-\kappa)\,K(| \nabla f|^2)\,
\end{align}
where $|\nabla f|^2=\sum_{g\in S}|\nabla_g f|^2$.
Then for $s\in[0,1]$ and any positive operators $A,B\in\cB(\cH)$,
\begin{align} \label{eq:gradestimate}\langle \partial \cP(x), \partial \cP(x)\rangle_{A,B,s}\le  (1-\kappa)\,\langle \partial x, \partial x\rangle_{\cP^\dag(A),\cP^\dag(B),s}  \end{align}
where $\cP^\dag$ is the adjoint channel of $\cP$ w.r.t. the trace inner product and the inner product $\langle.,.\rangle_{A,B,s}$ is defined as in Theorem \ref{transferrencetheorem}.
\item[$\operatorname{(ii)}$]For $1\le p\le\infty$, define the following weak $L_\infty(L_p)$ norm: for ${\bf x}=(x_g)\in \bigoplus_{g\in S}{\cB(\cH)}$,
\[\norm{{\bf x}}{\infty,p}=\sup_{\|{A }\|_{p'}\le 1}\, \sup_{h\in S}\pl |\tr(Ax_h)|\pl,\]
where $1/p+1/p'=1$.
Suppose that $K:\ell_\infty(G)\to \ell_\infty(G)$ satisfies the following gradient estimate: for any $f\in \ell_\infty(G)$, 
\begin{align}\label{eq:Lip}
\norm{K(f)}{\operatorname{Lip}}\le (1-\kappa)\norm{f}{\operatorname{Lip}}
\end{align}
where $\norm{f}{\operatorname{Lip}}=\displaystyle \sup_{g\in G, h\in S}\sqrt{\omega(h)}\,|f(g)-f(gh)|$ is the weighted Lipschitz constant. Then for all $1\le p\le \infty$,
\[\norm{{\partial \cP(x)}}{\infty,p}\le (1-\kappa) \norm{{\partial x}}{\infty,p}\,.\]
\end{enumerate}
\end{theorem}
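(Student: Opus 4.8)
The plan is to run the argument of Theorem~\ref{transferrencetheorem} essentially verbatim, replacing the Lie-algebraic intertwining relation \eqref{eq:inter} by its discrete counterpart \eqref{eq:inter2}, the vector fields $X_j$ by the difference operators $\nabla_h$, $h\in S$, and integration against the Haar measure by the normalized sum over $G$. Write $K\equiv K\ten\id_{\cB(\cH)}$ and recall that the adjoint channel is $\cP^\dagger(y)=\tfrac1{|G|}\sum_g k(g)\,u(g)^*yu(g)$. The starting observation I would record is that, since $\pi$ is an injective unital $*$-homomorphism with $\pi(y)(e)=y$ and $\pi\cP=K\pi$, one has $\partial_h x=(\nabla_h\pi(x))(e)$ and $\partial_h\cP(x)=(\nabla_h\pi(\cP(x)))(e)=(\nabla_h K\pi(x))(e)$ for every $h\in S$; and since $K$ and $\nabla_h$ act only on the $G$-variable, they commute with taking matrix entries. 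So with $f^x_{\varphi,\psi}(g):=\langle\varphi|u(g)xu(g)^*|\psi\rangle$ and $f^x_A(g):=\tr(Au(g)xu(g)^*)$ I would note the scalar identities $\langle\varphi|\partial_h\cP(x)|\psi\rangle=(\nabla_h Kf^x_{\varphi,\psi})(e)$, $\tr(A\partial_h\cP(x))=(\nabla_h Kf^x_A)(e)$ (and the same without $K$ for $\partial_h x$), together with the equivariance identity $\nabla_h f^x_A(g)=\nabla_h f^x_{u(g)^*Au(g)}(e)$ and $Kf^x_A=f^{\cP(x)}_A$ — both of which follow from a one-line computation using that $u$ is a projective representation, so phases cancel in $u(g)u(h)xu(h)^*u(g)^*=u(gh)xu(gh)^*$.

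For part (i), I would fix $s\in[0,1]$, diagonalize $A=\sum_k\lambda_k\ketbra{\varphi_k}$ and $B=\sum_l\mu_l\ketbra{\psi_l}$, and expand the inner product exactly as in Theorem~\ref{transferrencetheorem} to obtain
\[\langle\partial\cP(x),\partial\cP(x)\rangle_{A,B,s}=\sum_{k,l}\lambda_k^s\mu_l^{1-s}\,\big|\nabla Kf^x_{\varphi_k,\psi_l}\big|^2(e).\]
Then I would apply the pointwise gradient estimate \eqref{eq:Ricciclass2} at the identity, rewrite $\nabla_h f^x_{\varphi,\psi}(g)=\langle\varphi|u(g)(\partial_h x)u(g)^*|\psi\rangle$ via \eqref{eq:inter2}, and collect terms; with $A(g):=u(g)^*Au(g)$ and $B(g):=u(g)^*Bu(g)$ the right-hand side becomes $\tfrac{1-\kappa}{|G|}\sum_g k(g)\sum_h\tr\big((\partial_h x)^*A(g)^s(\partial_h x)B(g)^{1-s}\big)$. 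Finally I would invoke Lieb's concavity theorem \cite{Lieb1973} (joint concavity of $(P,Q)\mapsto\tr(Z^*P^sZQ^{1-s})$ for $s\in[0,1]$) together with unitality of $K$ (so $\tfrac1{|G|}\sum_g k(g)=1$) and $\tfrac1{|G|}\sum_g k(g)A(g)=\cP^\dagger(A)$ to deduce, by Jensen, that this is at most $(1-\kappa)\sum_h\tr\big((\partial_h x)^*\cP^\dagger(A)^s(\partial_h x)\cP^\dagger(B)^{1-s}\big)=(1-\kappa)\langle\partial x,\partial x\rangle_{\cP^\dagger(A),\cP^\dagger(B),s}$.

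For part (ii), using the equivariance identity above and unitary invariance of the Schatten $p'$-norm ($\|u(g)^*Au(g)\|_{p'}=\|A\|_{p'}$), I would show that the two suprema over $g\in G$ and over $\{A:\|A\|_{p'}\le 1\}$ merge, so that $\norm{\partial x}{\infty,p}=\sup_{\|A\|_{p'}\le1}\sup_{h\in S}|\nabla_h f^x_A(e)|=\sup_{\|A\|_{p'}\le1}\norm{f^x_A}{\operatorname{Lip}}$ and, using $Kf^x_A=f^{\cP(x)}_A$, $\norm{\partial\cP(x)}{\infty,p}=\sup_{\|A\|_{p'}\le1}\norm{Kf^x_A}{\operatorname{Lip}}$. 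Applying the hypothesis $\norm{K(f)}{\operatorname{Lip}}\le(1-\kappa)\norm{f}{\operatorname{Lip}}$ to $f^x_A$ and taking the supremum then yields $\norm{\partial\cP(x)}{\infty,p}\le(1-\kappa)\norm{\partial x}{\infty,p}$.

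I expect the only genuinely delicate step to be the bookkeeping in part (ii): checking that the equivariance identity really does let one trade the supremum over $g\in G$ for a supremum over $A$ in the $p'$-ball evaluated at the identity — this is the discrete analogue of \eqref{trickyequation} and of the $\sup_{g\in G}$ manipulations in the proof of Theorem~\ref{transferrencetheorem}. No new idea is needed; in particular, the fact that $\nabla_h$ (hence $\partial_h$) fails the Leibniz rule is irrelevant here, since every manipulation above is linear in $x$.
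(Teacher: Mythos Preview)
Your proposal is correct and follows essentially the same approach as the paper. The paper itself dispatches part~(i) with the single sentence ``(i) is similar to Theorem~\ref{transferrencetheorem}(i)'' and then, for part~(ii), writes out exactly the argument you outline: it introduces $f_A^x(g)=\tr(Au(g)xu(g)^*)$, proves the equivariance identity $f_A^y(gh)-f_A^y(g)=f_{u(g)^*Au(g)}^y(h)-f_{u(g)^*Au(g)}^y(e)$ (the discrete analogue of \eqref{trickyequation}), uses it together with unitary invariance of $\|\cdot\|_{p'}$ to replace the sup over $g\in G$ by a sup over $A$ at the identity, and concludes $\norm{\partial\cP(x)}{\infty,p}=\sup_{\|A\|_{p'}\le 1}\norm{Kf_A^x}{\operatorname{Lip}}\le(1-\kappa)\sup_{\|A\|_{p'}\le 1}\norm{f_A^x}{\operatorname{Lip}}=\norm{\partial x}{\infty,p}$. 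Your remark that $Kf_A^x=f_A^{\cP(x)}$ is precisely what makes the paper's step ``(2)'' work, via \eqref{trickyequation1} applied with $y=\cP(x)$.
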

\begin{proof}
(i) is similar to Theorem \ref{transferrencetheorem}(i). (ii) For $x,A\in \cB(\cH)$, we denote the function $f_A^x(g):=\tr(u(g)xu(g^*)A)$.
Then,
\begin{align*} \norm{\partial x}{\infty,p}=\sup_{\|A\|_{p'}\le 1} 
\sup_{h\in S}\,|\Tr((\partial_h x)A)|
=&\sup_{\|A\|_{p'}\le 1}\sup_{h\in S}\,\left|\Tr\big( \left(\nabla_h\pi(x)(e)\right) A\big)\right|
\\ =& \sup_{\|A\|_{p'}\le 1}\,\sup_{h\in S}\sqrt{\omega(h)}\,\left|f_A^x(h)-f_A^x(e)\right|
\\ \overset{(1)}{=}& \sup_{\|A\|_{p'}\le 1}\,\sup_{g\in G,h\in S}\sqrt{\omega(h)}\,\left| f_A^x(gh)-f_A^x(g)\right|
\\ =&  \sup_{\|A\|_{p'}\le 1}\norm{ f_A^x}{\text{Lip}}\,.
\end{align*}
The identity (1) uses that for any $g\in G$ and $y\in \cB(\cH)$,
\begin{align} f_A^y(gh)-f_A^y(g)&=\Tr(Au(gh)yu(gh)^*)-\Tr(Au(g)yu(g)^*)\nonumber
\\&= \Tr(u(g)^*Au(g) u(h)yu(h)^*)-\Tr(u(g)^*Au(g) y)\nonumber
\\&=f_{u(g)^*Au(g)}^y(h)-f_{u(g)^*Au(g)}^y(e) \label{trickyequation1}\,.
\end{align}
Therefore, applying the commutation relation \eqref{eq:inter2}, we get
\begin{align*} \norm{\partial\cP(x)}{\infty,p}=\sup_{\|A\|_{p'}\le 1} 
\sup_{h\in S}\pl|\Tr(\partial_h\cP(x)A)|
&=\sup_{\|A\|_{p'}\le 1}\sup_{h\in S}\pl \pl \left|\Tr\big( \left(\nabla_h(K\pi(x))(e)\right) A\big)\right|
\\ &= \sup_{\|A\|_{p'}\le 1}\sup_{h\in S}\pl \left|\nabla_h Kf_A^x(e)\right|
\\ &\overset{(2)}{=} \sup_{\|A\|_{p'}\le 1}\,\sup_{g\in G,h\in S}\pl \sqrt{\omega(h)}\,\left|Kf_A^x(gh)-Kf_A^x(g)\right|
\\ &=  \sup_{\|A\|_{p'}\le 1}\norm{ Kf_A^x}{\text{Lip}}
\\ &\le  (1-\kappa)\sup_{\|A\|_{p'}\le 1}\norm{ f_A^x}{\text{Lip}}=\, \norm{\partial x}{\infty,p}\,,
\end{align*}
where (2) above follows again from \eqref{trickyequation1}. 
\end{proof}

\section{Examples and applications}\label{examples}

\subsection{Quantum Gibbs samplers}

Let $G=(V,E)$ be a hypergraph with $|V|=n$, and let $\cH_V:=\bigotimes_{v\in V}\cH_v$ be the Hilbert space of a local quantum system, where we assume that $\cH_v:=\mathbb{C}^d$ for some local dimension $d\in\mathbb{N}$. The interactions are modeled through the Hamiltonian $H:=\sum_{A\in E}h_A$ with local self-adjoint operators $h_A$ with $\|h_A\|\le 1$ supported on the hyperedges $A\in E$. Here, we also assume that the Hamiltonian is of finite-range, which means that the size and diameter of any hyperedge are uniformly bounded by a constant. The Gibbs state $\omega$ at inverse temperature $\beta>0$ is defined as
\begin{align}
    \omega:=\frac{e^{-\beta H}}{\tr\big[e^{-\beta H}\big]}\,.
\end{align}
A Gibbs sampler is a locally defined quantum channel which prepares an approximation of the Gibbs state $\omega $ starting from any initial state on $\cH_V$. The efficiency of the Gibbs sampler depends on the time it takes to reach the approximating state. In the recent years, various Gibbs samplers were proposed in the literature \cite{kastoryano2016quantum,Brando2018,Bardet2021,capel2020modified,mlsioned}. In \cite{Majewski1995,temme2015fast,de2021quantumb}, the authors prove curvature lower bounds for different Gibbs samplers and Lipschitz constants. Here, we provide a variant of these results which leads to a transportation information inequality with a tight scaling of the constant with the size $n$ of the system: for a given site $v\in V$, we denote the composition of the partial trace $\tr_v$ on $v$ with the Petz recovery map of $v$ as follows:
\begin{align}
    \Psi_v^\dagger (\rho)=\Phi_v^\dagger\circ \tr_v(\rho)=\omega^{\frac{1}{2}}\,(\omega_{v^c}^{-\frac{1}{2}}\rho_{v^c}\omega_{v^c}^{-\frac{1}{2}}\otimes I_v)\,\omega^{\frac{1}{2}}\,,
\end{align}
where $\omega$ is the Gibbs state of the Hamiltonian $H$, and  we denote by $\omega_A$ the reduced density on subregion $A\subseteq V$. Clearly, when $H$ is made of commuting terms, the map $\Psi_v$ acts non-trivially on the neighborhood of $v$, which is defined as
\begin{equation}
    N_v:=\bigcup\left\{A\in E : v\in A\right\}\,.
\end{equation}
Next, we introduce the generator of the heat-bath dynamics
\begin{align}
\cL_V:=\sum_{v\in V}\cL_v\,,
\end{align}
where $\cL_v:=\Psi_v^\dagger-\id$. The quantum Markov semigroup $t\mapsto e^{t\cL_V^\dagger}$ generated by $\cL_V^\dagger$ converges to $\omega$ as $t\to\infty$. Here, we prove a Lipschitz estimate for the semi-norm 
\[ \norm{x}{L}:=\max_{v\in V}\|x-\tau_v(x)\|\]
below some critical inverse temperature $\beta_c>0$, where $\tau_v(x)=\Tr_v(x)\ten \frac{1_v}{d}$ is the normalized partial trace at site $v$. This implies a coarse Ricci curvature lower bound independent of size $n$. Our analysis follows that of \cite{Majewski1995,temme2015fast} who instead considered the oscillator norm $\norm{x}{\operatorname{osc}}=\sum_{v\in V}\|x-\tau_v(x)\|$. 
\begin{proposition}
With the above notations and assuming $H$ is a sum of commuting terms $h_A$, there exists an inverse temperature $\beta_c>0$ such that for any $\beta<\beta_c$, there is a constant $\kappa(\beta)<1$ such that for all $x\in\cB(\cH_V)$,
\begin{align}
    \|e^{t\cL_V}(x)\|_L\le \,e^{-(1-\kappa(\beta))t}\,\|x\|_L\,.
\end{align}
\end{proposition}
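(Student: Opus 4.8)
The statement is a \emph{size-independent} contraction of $\norm{\cdot}{L}$ under the heat-bath semigroup at high temperature; by Proposition~\ref{curvaturetolip} it then yields a size-independent coarse Ricci curvature bound. I would prove it by a Dobrushin-type perturbative argument around $\beta=0$, adapting the oscillator-norm analysis of \cite{Majewski1995,temme2015fast} to the max-type seminorm. Two inputs are needed. First, at $\beta=0$ the Gibbs state is maximally mixed, each $\Psi_v$ equals the normalized partial trace $\tau_v$, and the $\tau_v$'s mutually commute; hence $e^{t\cL_V}=\prod_v\big(e^{-t}\id+(1-e^{-t})\tau_v\big)$ and, since $(\id-\tau_w)e^{t(\tau_w-\id)}=e^{-t}(\id-\tau_w)$ while each $e^{t(\tau_v-\id)}$ is a unital completely positive (hence operator-norm contractive) map, $\norm{e^{t\cL_V}(x)}{L}\le e^{-t}\norm{x}{L}$, i.e.\ $\kappa(0)=0$. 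Second, because $H$ is commuting, finite-range, and $\norm{h_A}{}\le 1$, each $\Psi_v$ acts as the identity outside $N_v$ and, on $\cB(\cH_{N_v})$, equals a ``dressed'' partial trace $\Psi_v(x)=g_v^{-1/2}\Tr_v(P_v x P_v)g_v^{-1/2}\otimes I_v$ with $P_v=\prod_{A\ni v}e^{-\beta h_A/2}$, $g_v=\Tr_v(P_v^2)$; since $\norm{P_v-I}{}$ and $\norm{g_v-dI}{}$ are $O(\beta)$ uniformly in $n$, standard perturbation gives $\norm{\Psi_v-\tau_v}{\cB(\cH_V)\to\cB(\cH_V)}\le\epsilon(\beta)$ with $\epsilon(\beta)\to 0$ depending only on $d$ and the maximal local degree.

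The core is a differential inequality. Fix $w$ and put $u_w(t):=(\id-\tau_w)e^{t\cL_V}(x)$, so $\norm{e^{t\cL_V}(x)}{L}=\max_w\norm{u_w(t)}{}$. Split $\cL_V=\cL_w+\sum_{v\in N_w\setminus w}\cL_v+M_w$ with $M_w:=\sum_{v\notin N_w}\cL_v$. Using the identity $\tau_w\Psi_w=\Psi_w$ (the range of $\Psi_w$ already carries $I$ on site $w$) one gets $(\id-\tau_w)\cL_w=-(\id-\tau_w)$; since each $\Psi_v$ with $v\notin N_w$ acts trivially on site $w$ it commutes with $\tau_w$, so $(\id-\tau_w)M_w e^{t\cL_V}(x)=M_w u_w$; and for $v\in N_w\setminus\{w\}$, writing $\Psi_v=\tau_v+(\Psi_v-\tau_v)$, one checks $(\id-\tau_w)(\tau_v-\id)e^{t\cL_V}(x)=(\tau_v-\id)u_w$. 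Collecting terms,
\[ u_w'=G_w u_w+\widetilde R_w,\qquad G_w:=-\id+M_w+\!\!\sum_{v\in N_w\setminus w}\!\!(\tau_v-\id),\qquad \widetilde R_w:=\!\!\sum_{v\in N_w\setminus w}\!\!(\id-\tau_w)(\Psi_v-\tau_v)e^{t\cL_V}(x). \]
Now $G_w+\id=\sum_{v\notin N_w}(\Psi_v-\id)+\sum_{v\in N_w\setminus w}(\tau_v-\id)$ is a sum of quantum Markov generator terms, so $e^{tG_w}=e^{-t}e^{t(G_w+\id)}$ contracts the operator norm at rate $e^{-t}$. To estimate $\widetilde R_w$ I would upgrade the bound of the previous paragraph to $\norm{(\Psi_v-\tau_v)(x)}{}\le C(\beta)\norm{x}{L}$ for all $x$, with $C(\beta)=O(\epsilon(\beta))$ and $n$-independent: write $(\Psi_v-\tau_v)(x)=(\Psi_v-\tau_v)(x-\tau_v x)+(\Psi_v\tau_v-\tau_v)(x)$; the first piece is $\le\epsilon(\beta)\norm{x-\tau_v x}{}\le\epsilon(\beta)\norm{x}{L}$, and for the second one uses that $\Psi_v$ fixes $I_{N_v\setminus v}\otimes I_v/d$, which forces $\Psi_v\tau_v-\tau_v$ to vanish on $\{y:\tau_u y=y\ \forall u\in N_v\setminus v\}$, together with the telescoping bound $\norm{x-\prod_{u\in N_v\setminus v}\tau_u(x)}{}\le|N_v|\,\norm{x}{L}$. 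Consequently $\norm{\widetilde R_w(s)}{}\le\kappa(\beta)\norm{e^{s\cL_V}(x)}{L}$ with $\kappa(\beta):=2\max_w|N_w|\,C(\beta)$, which is $n$-independent and tends to $0$ as $\beta\to 0$.

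Finally, variation of constants gives $\norm{u_w(t)}{}\le e^{-t}\norm{x}{L}+\int_0^t e^{-(t-s)}\norm{\widetilde R_w(s)}{}\,ds$; taking the maximum over $w$, the function $\phi(t):=\norm{e^{t\cL_V}(x)}{L}$ satisfies $\phi(t)\le e^{-t}\phi(0)+\kappa(\beta)\int_0^t e^{-(t-s)}\phi(s)\,ds$, whence $e^t\phi(t)\le\phi(0)+\kappa(\beta)\int_0^t e^s\phi(s)\,ds$ and Grönwall's lemma yields $\phi(t)\le e^{-(1-\kappa(\beta))t}\phi(0)$. Choosing $\beta_c$ so that $\kappa(\beta)<1$ for $\beta<\beta_c$ completes the argument, and $\kappa(\beta)$ is independent of $n$ by construction.

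The main obstacle is \emph{not} the exponential decay, which comes for free from the on-site $-\id$ term, but controlling the ``far'' contributions $v\notin N_w$: a naive term-by-term estimate costs a factor $n$ and would destroy size-independence. What saves the argument is that $\sum_{v\notin N_w}\cL_v$ commutes with $\tau_w$ and generates a unital completely positive semigroup, so it cannot enlarge $\norm{(\id-\tau_w)(\cdot)}{}$ and is absorbed into $e^{tG_w}$; only the $O(1)$ ``near'' terms survive as an honest source, and these are $O(\beta)$-small. The second subtle point, on which the $n$-independence of $\kappa(\beta)$ hinges, is converting the operator-norm smallness of $\Psi_v-\tau_v$ into $\norm{\cdot}{L}$-smallness via the algebraic facts $\tau_w\Psi_w=\Psi_w$ and the invariance of $I_{N_v\setminus v}\otimes I_v/d$ under $\Psi_v$; it is precisely here (and in the localization of $\Psi_v$ to $N_v$) that the commuting-Hamiltonian and finite-range hypotheses are essential.
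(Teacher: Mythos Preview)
Your proof is correct and follows essentially the same Dobrushin-type perturbative argument as the paper: both exploit $(\id-\tau_w)\Psi_w=0$ for the on-site $-\id$ contribution, absorb the far terms into a contracting reference semigroup, bound the $O(|N_w|)$ near terms by $\|\Psi_v-\tau_v\|$-smallness together with a local telescoping estimate, and finish with the integral inequality $\phi(t)\le e^{-t}\phi(0)+\kappa(\beta)\int_0^t e^{-(t-s)}\phi(s)\,ds$ and Gr\"onwall. The only organizational difference is that the paper takes the full $\cL_{v^c}$ as the reference generator and makes the error the commutator $[\partial_v,\cL_w]$, whereas you replace the near generators by their $\beta=0$ limits $\tau_v-\id$ in $G_w$ and make the error $(\id-\tau_w)(\Psi_v-\tau_v)$; these are equivalent rearrangements of the same mechanism.
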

\begin{proof}
For any $v\in V$, we denote $\partial_v:=\id-\tau_v$. Clearly, we have 
\begin{align*}
 \Psi_v(X)=\omega_{v^c}^{-1/2}\Tr_{v}(\omega^{1/2}X\omega^{1/2})\omega_{v^c}^{-1/2}\pl,\pl   \partial_v\circ \cL_v=\partial_v\circ \Psi_v- \partial_v=-\partial_v\,.
\end{align*}
Using this, we have
 \begin{align*}
 \frac{d}{ds}\,\partial_v\,e^{s\cL_V}=\partial_v\circ \cL_V e^{s\cL_V}=-\partial_v e^{s\cL_V}+\cL_{v^c}\circ  \partial_v e^{s\cL_V}   +\sum_{w\ne v}[\partial_v,\cL_w]\circ e^{s\cL_V}\,,
 \end{align*}
where we set $\cL_{v^c}=\sum_{w\ne v}\cL_w$. Therefore, 
\begin{align}
    \frac{d}{ds}\Big(e^{s}e^{(t-s)\cL_{v^c}}\partial_v e^{s\cL_V}\Big)=\sum_{w\ne v}\,e^{s} e^{(t-s)\cL_{v^c}}\big([\partial_v,\cL_{w}]e^{s\cL_V}\big)\,.
\end{align}
After integrating $s$ from $0$ to $t$ and using that $ e^{t\cL_{v^c}}$ is a contraction in operator norm, we find for any $x\in\cB(\cH_V)$
\begin{align}\label{eq:ineqmid}
 \|   \partial_v e^{t\cL_V}(x)\|\le e^{-t}\,\|\partial_v x\|+\,\int_0^t\,e^{s-t}\sum_{w\ne v}\|[\partial_v,\cL_w]e^{s\cL_V}(x)\|\,ds\,.
\end{align}
Moreover, since $\omega$ is the Gibbs state of a commuting Hamiltonian, $\cL_w$ is supported in a neighborhood $N_w$ of $w$ of size depending on the interaction range of $H$. Then, denoting the normalized partial trace corresponding to the region $N_w$ by $\tau_{N_w}$, we have that for any $x\in\cB(\cH_V)$ and $v\in N_w\backslash w$,
\begin{align*}
    \|[\partial_v,\cL_w](x)\|&=\|[\tau_v,\cL_w-\partial_w](x-\tau_{N_w}(x))\|\le 2\|\cL_w-\partial_w\|_{\operatorname{cb}}\|x-\tau_{N_w}(x)\|\,,
    \end{align*}
where the first identity comes from the fact that $[\tau_v,\cL_w-\partial_w]\tau_{N_w}(x)=0$. Moreover, $[\partial_v,\cL_w]=0$ whenever $v\in N_w^c$. Using these estimates in \Cref{eq:ineqmid}, we have that
\begin{align}
    \|   \partial_v e^{t\cL_V}(x)\|&\le e^{-t}\,\|\partial_v x\|+\,2\,\int_0^t\,e^{s-t}\sum_{w|N_w\ni v}\|\cL_w-\partial_w:L_\infty\to L_\infty\|_{\operatorname{cb}}\,\|(\id-\tau_{N_w})(e^{s\cL_V}(x))\|\,ds\nonumber\\
    &\le e^{-t}\,\| x\|_L +\,2\,\max_{v}n_v\max_w \|\cL_w-\partial_w:L_\infty\to L_\infty\|_{\operatorname{cb}}\,|N_w|
    \int_0^t\,e^{s-t}\|e^{s\cL_V}(x)\|_L \,ds\,,\label{eq:almost}
\end{align}
where $n_v$ denotes the number of vertices $w$ such that $v\in N_w$. Since $\cL_w\to\partial_w$ as the inverse temperature $\beta\to 0$, there exists a critical inverse temperature $\beta_c$ such that for any $\beta<\beta_c$, 
\begin{align*}
    \kappa(\beta):=2\,\max_{v}n_v\max_w \|\cL_w-\partial_w:L_\infty\to L_\infty\|_{\operatorname{cb}}\,|N_w|<1\,.
\end{align*}
Taking maximum over site $v$, we have 
\[\|e^{t\cL_V}(x)\|_L\le  e^{-t}\,\| x\|_L +\kappa(\beta)
    \int_0^t\,e^{s-t}\|e^{s\cL_V}(x)\|_L \,ds\,\pl.\]
Denote $f(t)=\|e^{t\cL_V}(x)\|_\otimes$. Differentiating the above inequality at $t=0$, we have 
\[ f'(0)\le -(1-\kappa(\beta))f(0)\pl.\]
By the semigroup property, $f'(t)\le -(1-\kappa(\beta))f(t)$ and by the Gronwall Lemma, this implies $f(t)\le e^{-(1-\kappa(\beta))t}f(0)$. That completes the proof.
\end{proof}

\subsection{Bosonic beam-splitter channels}
In this example, we consider Bosonic beam-splitter channels \cite{holevo2019quantum}. 
Recall that an $n$-mode Bosonic quantum system is modeled by the algebra $\cM:=\cB(L_2(\mathbb{R}^n))$. For all $i\in[n]$, we define the \textit{annihilation} operator $a_i$ as 
\begin{align}
    (a_i\psi)({\bf x})=\frac{x_i\,\psi({\bf x})+\partial_{x_i}\psi({\bf x})}{\sqrt{2}}\,, 
\end{align} and denote the \textit{creation} operator, i.e.  the adjoint of $a_i$, by $a_i^*$. Creation and annihilation operators satisfy the following \textit{canonical commutation relations}:
\begin{align}
    [a_i,a_j^*]=\delta_{ij}1\,,\qquad [a_i,a_j]=[a_i^*,a_j^*]=0\, .
\end{align}
Given a two-mode Bosonic system with creation operators $a_1\equiv a$ and $a_2\equiv b$, the   \textit{beam-splitter} of transmissivity $0\le \lambda\le 1$ is given by the unitary
\begin{align}
    U_\lambda=\exp\Big(\Big(a^*b-b^*a\Big)\,\arccos{\sqrt{\lambda}}\Big)\,
\end{align}
which performs linear rotations on the annihilation operators $a$ and $b$:
\begin{align}\label{eq1}
    U_\lambda^*aU_\lambda=\sqrt{\lambda}a+\sqrt{1-\lambda}b,\qquad  U_\lambda^*bU_\lambda =-\sqrt{1-\lambda}a+\sqrt{\lambda}b\,.
\end{align}
For a quantum state $\sigma$ on $L_2(\mathbb{R})$ and any $\lambda\ge 0$, we define the Bosonic beam-splitter channels 
\begin{align}\label{beamsplitter}
\cP_\lambda:\cB(L_2(\mathbb{R}))\to \cB(L_2(\mathbb{R})) \pl,\qquad 
    \cP_\lambda(x)=\tr_2\big[ (1\otimes \sigma) U_\lambda^*(x\otimes 1)U_\lambda\big]\,.
\end{align}
The state $\sigma$ is usually called the \textit{environment} state. Such channels were recently examined in the context of quantum information theory, where extensions of well-known information theoretic inequalities such as the entropy power inequality and information isoperimetric inequality were obtained (see \cite{DePalma2018} and the references therein for an up-to-date review of the topic). In around the same time, Carlen and Maas found the sharp entropic Ricci curvature bound for the quantum Ornstein-Uhlenbeck semigroup, which is given by the maps $\cP_{e^{-t}}$ with $\si$ being a thermal Gaussian state at a finite temperature \cite{carlen2017gradient}. 
Both the entropy power inequality and the entropic Ricci curvature lower bound were then shown to imply the (sharp) modified logarithmic Sobolev inequality for the aforementioned semigroup. In analogy with the commutative setting, these inequalities can be derived from an intertwining relation between the channel $\cP_\lambda$ and the derivations $[a,\cdot]$ and $[a^*,\cdot]$ which directly originates from \Cref{eq1}. Here, we make similar use of these relations in order to derive non-commutative coarse Ricci curvature bounds for the maps $\cP_\lambda$. Our differential structure is given by the derivations
\[ \partial=(\partial_a,\partial_{a^*})\pl ,\pl\quad  \partial_a(x)=[a,x]\pl ,\pl \quad  \partial_{a^*}(x)=[a^*,x]\pl . \]
Recall that a operator $x\in \cB(L_2(\mathbb{R}))$ is called a Schwartz operator if it can be written as $x=\int_{\mathbb{R}^2}f(z)e^{i(za+\bar{z}a^*)}dzd\bar{z}$ for some Schwartz function $f$ on $\mathbb{C}\cong\mathbb{R}^2$ \cite{keyl15}. We denote by $\mathcal{S}(L_2(\mathbb{R}))$ the set of Schwartz operator on $L_2(\mathbb{R})$. It is clear that $\mathcal{S}(L_2(\mathbb{R}))$ is a core for the derivation $\partial$ as a densely defined operator on $L_2(\mathbb{R})$. We consider the 
Lipschitz semi-norm
\[\norm{x}{L}:=\norm{\partial(x)}{}=\max\big\{\|[a,x]\|,\,\|[a^*,x]\|\big\}\,.\]





\begin{proposition}\label{prop:boseinter}
Let $0\le \lambda\le 1$ and $\cP_\la$ be the beam-splitter channel defined in \Cref{beamsplitter}. Then $\cP_\la$ satisfies the  intertwining relation $\partial\cP_\lambda=\sqrt{\lambda}(\cP_\lambda\ten \id_2)\partial$. Therefore,
\begin{enumerate}
\item[$\operatorname{(i)}$] for any Schwartz operator $x\in \mathcal{S}(L_2(\mathbb{R}))$,
\[ \norm{\partial\cP_\lambda (x)}\le\,\sqrt{ \lambda}\,\norm{\partial x}\,.\]
In particular, $(\cB(L_2(\mathbb{R})),\cB_L,\cP_\lambda)$ has coarse Ricci curvature lower bounded by $1-\lambda$.
\item[$\operatorname{(ii)}$] for any state $\rho$ and any Schwartz operator $x\in \mathcal{S}(L_2(\mathbb{R}))$,
\[ \norm{\partial\cP_\la (x)}{\rho}\le \lambda\,\norm{\partial x}{\cP_\la^{\dagger}(\rho)}\,.\]
In particular, $(\cB(L_2(\mathbb{R})),\cB_\partial,\cP_\lambda)$ has coarse Ricci curvature lower bounded by $1-\lambda$, where $\cB_\partial$ induces the quantum Wassersetin $2$-metric defined in Example \ref{exam:W2}.
\end{enumerate}
\end{proposition}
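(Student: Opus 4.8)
The plan is to prove the single intertwining identity $\partial\cP_\lambda=\sqrt\lambda\,(\cP_\lambda\otimes\id_2)\partial$ and then read off (i) and (ii) from, respectively, Proposition~\ref{curvaturetolip} together with operator-norm contractivity of $\cP_\lambda$, and the machinery of Theorem~\ref{thm:L2inter} together with Proposition~\ref{prop:equivalence}. For the intertwining relation I would compute $[a,\cP_\lambda(x)]$ directly on the Schwartz operators $\mathcal{S}(L_2(\mathbb{R}))$, which form a core for $\partial$. Since the partial trace $\tr_2$ is a $\cB(L_2(\mathbb{R}))$-bimodule map and $a\otimes1$ commutes with $1\otimes\sigma$, one has $[a,\cP_\lambda(x)]=\tr_2\big[(1\otimes\sigma)\,[\,a\otimes 1,\;U_\lambda^*(x\otimes 1)U_\lambda\,]\big]$. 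Conjugating by $U_\lambda$ inside the commutator and using \eqref{eq1} in the inverted form $U_\lambda(a\otimes1)U_\lambda^*=\sqrt\lambda\,(a\otimes1)-\sqrt{1-\lambda}\,(1\otimes a)$, the contribution $[1\otimes a,\,x\otimes1]$ vanishes because the two factors act on different modes, leaving $[\,a\otimes 1,\,U_\lambda^*(x\otimes1)U_\lambda\,]=\sqrt\lambda\,U_\lambda^*([a,x]\otimes1)U_\lambda$; substituting back gives $[a,\cP_\lambda(x)]=\sqrt\lambda\,\cP_\lambda([a,x])$. Running the same computation with $a^*$ in place of $a$ (using $U_\lambda(a^*\otimes1)U_\lambda^*=\sqrt\lambda\,(a^*\otimes1)-\sqrt{1-\lambda}\,(1\otimes a^*)$) yields $[a^*,\cP_\lambda(x)]=\sqrt\lambda\,\cP_\lambda([a^*,x])$, and assembling the two components proves the intertwining identity.

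Part (i) is then immediate: as a unital completely positive map $\cP_\lambda$ is an operator-norm contraction, so $\|[a,\cP_\lambda(x)]\|=\sqrt\lambda\,\|\cP_\lambda([a,x])\|\le\sqrt\lambda\,\|[a,x]\|$ and similarly for $a^*$, hence $\|\cP_\lambda(x)\|_L\le\sqrt\lambda\,\|x\|_L$. This is a $(1-\sqrt\lambda)$-Lipschitz estimate for the semi-norm $\|\cdot\|_L$, and Proposition~\ref{curvaturetolip} turns it into the stated coarse Ricci curvature lower bound for $(\cB(L_2(\mathbb{R})),\cB_L,\cP_\lambda)$.

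For part (ii) I would apply Theorem~\ref{thm:L2inter} with $\hat\cP:=\sqrt\lambda\,(\cP_\lambda\otimes\id_2)$ acting componentwise on $\cH$: hypothesis (i) of that theorem is exactly the intertwining above, while hypotheses (ii) and (iii) reduce, after pulling out the scalar $\lambda=(\sqrt\lambda)^2$ and working on each summand, to $\cP_\lambda^\dagger\,l(\rho)\,\cP_\lambda\le l(\cP_\lambda^\dagger(\rho))$ and $\cP_\lambda^\dagger\,r(\rho)\,\cP_\lambda\le r(\cP_\lambda^\dagger(\rho))$; testing against a vector $\xi\in L_2$ and taking traces shows these are precisely the Kadison--Schwarz inequalities $\cP_\lambda(\xi)\cP_\lambda(\xi)^*\le\cP_\lambda(\xi\xi^*)$ and $\cP_\lambda(\xi)^*\cP_\lambda(\xi)\le\cP_\lambda(\xi^*\xi)$ for the UCP map $\cP_\lambda$. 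Carrying out the operator-mean monotonicity step as in the proof of Theorem~\ref{thm:L2inter} then gives $\|\partial\cP_\lambda(x)\|_\rho^2\le\lambda\,\langle\partial x,\Lambda_{\cP_\lambda^\dagger(\rho)}\,\partial x\rangle$, i.e.\ the $(1-\sqrt\lambda)$-gradient estimate in the weighted norm, and Proposition~\ref{prop:equivalence} upgrades it to $W_{\partial,2}(\omega_1\circ\cP_\lambda,\omega_2\circ\cP_\lambda)\le\sqrt\lambda\,W_{\partial,2}(\omega_1,\omega_2)$. Since $\cT_{\cB_\partial}=\tfrac12\,W_{\partial,2}^2$ by Example~\ref{exam:W2}, this is exactly the coarse Ricci curvature lower bound $1-\lambda$ of $(\cB(L_2(\mathbb{R})),\cB_\partial,\cP_\lambda)$.

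The algebra is short, so the real work is in the rigour, and that is where I expect the main obstacle. Because $a$ and $a^*$ are unbounded, all of the above should be carried out on the dense $*$-subalgebra $\mathcal{S}(L_2(\mathbb{R}))$, after checking that $\cP_\lambda$ preserves it --- which it does, since $U_\lambda$ implements a linear symplectic transformation of the two-mode phase space and the environment state $\sigma$ may be taken Schwartz-class (e.g.\ Gaussian) --- so that the commutator manipulations and the Kubo--Ando transformer/monotonicity inequalities for $\Lambda_\rho$ are legitimate. Moreover $\cB(L_2(\mathbb{R}))$ is not a finite von Neumann algebra, so one must check that the operator-mean apparatus of Section~\ref{sec:operator} and the $W_{\partial,2}$-machinery of Section~\ref{sec:curvature} still apply here, restricting $\Lambda_\rho$ to full-rank states $\rho$ and recovering the general case by approximation; the intertwining identity and the Kadison--Schwarz step themselves are routine.
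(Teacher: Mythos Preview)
Your approach is essentially the paper's: establish the intertwining $\partial\cP_\lambda=\sqrt\lambda\,(\cP_\lambda\otimes\id_2)\partial$, then deduce (i) and (ii) from Proposition~\ref{prop:L1inter} and Theorem~\ref{thm:L2inter} respectively. The only difference is cosmetic: the paper derives the intertwining by expanding $[a,\cP_\lambda(x)]$ via the Leibniz rule and explicit formulas for $[a,U_\lambda]$ and $[a,U_\lambda^*]$, whereas you push the commutator inside and conjugate by $U_\lambda$ to use $U_\lambda(a\otimes1)U_\lambda^*=\sqrt\lambda\,a-\sqrt{1-\lambda}\,b$; both computations are equally short and yield the same identity. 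Your tracking of the constants is in fact more careful than the statement: the Lipschitz estimate in (i) gives curvature $1-\sqrt\lambda$ for $\cT_L$ (as in the Fermionic Proposition~\ref{prop:ferminter}), and the squared gradient estimate in (ii) gives $\sqrt\lambda$-contraction of $W_{\partial,2}$, hence curvature $1-\lambda$ for $\cT_{\cB_\partial}=\tfrac12 W_{\partial,2}^2$. Your caveats on unboundedness and the non-finite von Neumann setting are appropriate; the paper's own proof does not address them either.
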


\begin{proof}
 For any $x\in \mathcal{S}(L_2(\mathbb{R}))$, 
\begin{align*}
    \big[a,\cP_\lambda(x)\big]=\cP_\lambda\big([a,x]\big)+ \tr_2\big[(1\otimes \sigma)\big[a,U_\lambda^*\big](x\otimes 1)U_\lambda\big]+\tr_2\big[(1\otimes \sigma)U_\lambda^*(x\otimes 1)\big[ a,U_\lambda\big]\big]\,.
\end{align*}
Then, it is easy to verify from \eqref{eq1} that
\begin{align}\label{eq:commutator}
    [a,U_\lambda]=\big(a(1-\sqrt{\lambda})+\sqrt{1-\lambda}b \big)U_\lambda\pl, \pl\qquad 
  [a,U_\lambda^*]=-U_\lambda^* \big(a(1-\sqrt{\lambda})+\sqrt{1-\lambda}b)\,.
\end{align}
Therefore, $\big[a,\cP_\lambda(x)\big]=\sqrt{\lambda}\cP_\lambda\big([a,x]\big)$. Since the same relations stand when replacing $a$ by $a^*$, we have the intertwining relation $$\,\partial(\cP_\la x)=\sqrt{\lambda}(\cP_\lambda\ten 1_2)\partial\,.$$
 The other assertions follow from Proposition \ref{prop:L1inter} and Theorem \ref{thm:L2inter}.
\end{proof}
We shall now compare $W_L$ with other transportation cost metrics in the literature. 
It was proved in \cite{rouze2019concentration} that (the statement is for finite dimensional systems but the proof works identically for the case here) 
\begin{align}\label{W1toW2}
    W_L(\rho_1,\rho_2)\le 2(\operatorname{cosh}(\beta/2))^{\frac{1}{2}}\,\mathcal{W}_{1}^{\operatorname{Bose},\beta}(\rho_1,\rho_2)\le 2(\operatorname{cosh}(\beta/2))^{\frac{1}{2}}\,\mathcal{W}_{2}^{\operatorname{Bose},\beta}(\rho_1,\rho_2)\,,
\end{align}
where $\mathcal{W}_1^{\operatorname{Bose},\beta}$ is the metric dual to the semi-norm
\begin{align}
    \vertiii{x}_{2,\beta}:=\sqrt{2\operatorname{cosh}(\beta/2)}\,\big(\|[a,x]\|^2+\|[a^*,x]\|^2\big)^{\frac{1}{2}}\,\le 2\sqrt{\cosh(\beta/2)} \,\norm{\partial(x)}{}\,.
\end{align}
The metric $\mathcal{W}_{2}^{\operatorname{Bose},\beta}$ is associated to the Bosonic Ornstein-Uhlenbeck semigroup given by the generator
\[\cL_\beta^{\operatorname{Bose}}(x)=\frac{1}{2}\Big(e^{\beta/2}(a^*[x,a]+[a^*,x]a)+e^{-\beta/2}(a[x,a^*]+[a,x]a^*) \Big)\pl.\]
Here $\beta$ is the inverse temperature and the semigroup $e^{-\cL_\beta^{\operatorname{Bose}} t}$ admits a unique invariant state
\begin{align}
    \sigma_\beta:=\frac{e^{-\beta a^*a}}{\tr\big(e^{-\beta a^*a}\big)}\,.
\end{align}
For a state $\rho$, we define the weighted multiplication operator acting on ${\bf V}=(V_1,V_2)\in \cB(\cH)\oplus \cB(\cH)$ as
\[\Lambda_{\rho,\beta}({\bf V}):=\Big(\int_{0}^1e^{-\beta(\frac{1}{2}-s)} \rho^s V_1\rho^{1-s}ds, \int_{0}^1 e^{\beta(\frac{1}{2}-s)} \rho^s V_2 \rho^{1-s}ds \Big).\]
where the pseudo-metric is defined as
\begin{align}
    \norm{x}{g,\gamma(s)}^2=\inf_{x=\partial^*(\Lambda_{\rho,\beta}{\bf V})}\lan {\bf V}, \Lambda_{\rho,\beta}{\bf V} \ran_{\tr}
\end{align}
with ${\bf V}$ that satisfies the continuity equation
$x=\partial^*(\Lambda_{\rho,\beta}{\bf V})$. The semigroup $e^{-\cL_\beta^{\operatorname{Bose}} t}$ is the gradient flow of relative entropy w.r.t $\mathcal{W}_{2}^{\operatorname{Bose},\beta}$, and it is proved in \cite[Theorem 8.6]{carlen2017gradient} that it
has entropic curvature bound $\sinh(\beta/2)$. The following transportation cost inequality then follows from \cite[Theorem 11.5]{Carlen2019} (see also \cite{datta2020relating,rouze2019concentration}), 
\begin{align}\label{TCW2}
    \mathcal{W}_2^{\operatorname{Bose},\beta}(\rho,\sigma_\beta)\le \sqrt{\frac{2D(\rho\|\sigma_\beta)}{\operatorname{sinh}(\beta/2)}}\,,
\end{align}
Combined with inequality \eqref{W1toW2}, we have the following Proposition. Recall that the energy of a state $\rho$ is defined as $E_\rho:=\tr(a^*a\rho )$.
\begin{proposition}\label{propfinalbla}Let $\beta>0$. The following bounds hold:
\begin{itemize}
    \item[$\operatorname{(i)}$]For any two states $\rho_1,\rho_2\in\cD(L_2(\mathbb{R}))$ with  $E:=\max\{E_{\rho_1},E_{\rho_2}\}$,
    \begin{align}\label{eq:triangle}W_L(\rho_1,\rho_2)\le \inf_{\beta>0} \sqrt{64\operatorname{coth}(\beta/2)\,\big(\beta E-\ln\big(1-e^{-\beta}\big)\big)}\,.\end{align}
\item[$\operatorname{(ii)}$] Let $\cP_\lambda$ be a beam-splitter channel with environment state $\sigma$ satisfying $E_\sigma<\infty$. Then, for $E:=\max\{E_\rho, (\sqrt{\lambda E_\rho} +\sqrt{(1-\lambda)E_\sigma})^2$\}, 
\begin{align}\label{eq:triangle2}\mathcal{J}_L(\rho):= W_L(\rho,\cP_\lambda^\dagger(\rho))\le\inf_{\beta>0} \sqrt{64\operatorname{coth}(\beta/2)\,\big(\beta E'-\ln\big(1-e^{-\beta}\big)\big)}\pl.\end{align}
\item[$\operatorname{(iii)}$] If in additional $[b,\si]$ is trace class, then $\cP_\lambda$ admits a unique invariant state $\rho_\infty$ such that $E_{\rho_\infty}<\infty$.
\end{itemize}
\end{proposition}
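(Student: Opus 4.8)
The plan is to read off all three parts from the chain of comparisons \eqref{W1toW2}--\eqref{TCW2} together with the intertwining estimate of Proposition~\ref{prop:boseinter}; one may always assume the energies that appear are finite, since otherwise the asserted bounds are vacuous.

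\emph{Part (i).} Fix $\beta>0$. The first inequality in \eqref{W1toW2} gives $W_L(\rho_1,\rho_2)\le 2\sqrt{\cosh(\beta/2)}\;\mathcal{W}_2^{\operatorname{Bose},\beta}(\rho_1,\rho_2)$, and since $\mathcal{W}_2^{\operatorname{Bose},\beta}$ is a genuine distance we may insert the Gibbs state $\sigma_\beta$ and apply the transportation cost inequality \eqref{TCW2} to each piece:
\[\mathcal{W}_2^{\operatorname{Bose},\beta}(\rho_1,\rho_2)\le \sqrt{\frac{2D(\rho_1\|\sigma_\beta)}{\sinh(\beta/2)}}+\sqrt{\frac{2D(\rho_2\|\sigma_\beta)}{\sinh(\beta/2)}}\,.\]
Since $-\log\sigma_\beta=\beta\,a^{*}a+\log Z_\beta$ with $Z_\beta=\tr(e^{-\beta a^{*}a})=(1-e^{-\beta})^{-1}$, and since $\tr(\rho_i\log\rho_i)\le 0$, one has $D(\rho_i\|\sigma_\beta)\le \beta E_{\rho_i}-\log(1-e^{-\beta})\le \beta E-\log(1-e^{-\beta})$. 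Substituting yields $W_L(\rho_1,\rho_2)\le 4\sqrt{2}\,\sqrt{\coth(\beta/2)\bigl(\beta E-\log(1-e^{-\beta})\bigr)}$, and taking the infimum over $\beta>0$ gives the claim --- even with constant $32$ in place of $64$.

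\emph{Part (ii).} Apply (i) with $\rho_1=\rho$ and $\rho_2=\cP_\lambda^{\dagger}(\rho)$; what is needed is a bound on $E_{\cP_\lambda^{\dagger}(\rho)}$. Writing $\cP_\lambda(a^{*}a)=\tr_2\bigl[(1\otimes\sigma)U_\lambda^{*}(a^{*}a\otimes1)U_\lambda\bigr]$ and using the rotation relations \eqref{eq1} gives
\[\cP_\lambda(a^{*}a)=\lambda\,a^{*}a+\sqrt{\lambda(1-\lambda)}\bigl(\langle b\rangle_\sigma\,a^{*}+\langle b^{*}\rangle_\sigma\,a\bigr)+(1-\lambda)E_\sigma\,1\,,\]
where $\langle\,\cdot\,\rangle_\sigma=\tr(\,\cdot\,\sigma)$. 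Pairing with $\rho$ and using Cauchy--Schwarz ($|\langle a\rangle_\rho|\le\sqrt{E_\rho}$, $|\langle b\rangle_\sigma|\le\sqrt{E_\sigma}$),
\[E_{\cP_\lambda^{\dagger}(\rho)}=\tr\bigl(\cP_\lambda(a^{*}a)\rho\bigr)\le \lambda E_\rho+2\sqrt{\lambda(1-\lambda)E_\rho E_\sigma}+(1-\lambda)E_\sigma=\bigl(\sqrt{\lambda E_\rho}+\sqrt{(1-\lambda)E_\sigma}\bigr)^{2}\,.\]
Thus $\max\{E_\rho,E_{\cP_\lambda^{\dagger}(\rho)}\}\le E'$ and the bound of (i) specialises to the asserted inequality. (The hypothesis that $[b,\sigma]$ be trace class is exactly what guarantees that $\langle b\rangle_\sigma$ and the displayed operator identity are legitimate despite the unboundedness of $a,b$.)

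\emph{Part (iii).} Assume $\lambda<1$ (for $\lambda=1$ the channel is the identity, so every state is invariant). Take a finite-energy reference state $\rho$, for instance the vacuum, and set $\rho^{(n)}:=(\cP_\lambda^{\dagger})^{n}(\rho)$. The recursion from (ii), $E_{\rho^{(n+1)}}\le\bigl(\sqrt{\lambda E_{\rho^{(n)}}}+\sqrt{(1-\lambda)E_\sigma}\bigr)^{2}$, has increasing concave right-hand side with unique fixed point $\tfrac{1+\sqrt\lambda}{1-\sqrt\lambda}E_\sigma$, so $E_{\max}:=\sup_n E_{\rho^{(n)}}<\infty$. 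By Proposition~\ref{prop:boseinter}(i) the triple $(\cB(L_2(\mathbb{R})),\cB_L,\cP_\lambda)$ has coarse Ricci curvature $1-\lambda>0$, hence $W_L$ is contracted by the factor $\lambda$; combining this with Lemma~\ref{distancetojump} and part (ii) gives $W_L(\rho^{(n)},\rho^{(n+k)})\le \lambda^{n}W_L(\rho,\rho^{(k)})\le \tfrac{2\lambda^{n}}{1-\lambda}\,\mathcal{J}_L(\rho)$, uniformly in $k$, so $(\rho^{(n)})$ is $W_L$-Cauchy. A family of states of uniformly bounded energy is relatively compact in trace norm (the estimate $\tr(P_{>m}\,\omega)\le E_{\max}/m$, with $P_{>m}$ the spectral projection of $a^{*}a$ onto $(m,\infty)$, provides uniform integrability), so along a subsequence $\rho^{(n_j)}\to\rho_\infty$ in trace norm, with $\rho_\infty$ a state of energy at most $E_{\max}$. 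Since $W_L$ is trace-norm lower semicontinuous in each variable (a supremum of the trace-norm continuous functionals $\omega\mapsto(\mu-\omega)(x)$, $\|x\|_L\le1$), it follows that $W_L(\rho^{(n)},\rho_\infty)\le \tfrac{2\lambda^{n}}{1-\lambda}\mathcal{J}_L(\rho)\to 0$; trace-norm continuity of $\cP_\lambda^{\dagger}$ then forces $W_L(\rho_\infty,\cP_\lambda^{\dagger}\rho_\infty)\le W_L(\rho_\infty,\rho^{(n+1)})+\lambda W_L(\rho^{(n)},\rho_\infty)\to 0$, so $\rho_\infty$ is invariant by faithfulness of $W_L$ (the Schwartz operators being a weak$^{*}$-dense domain for $L$). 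For uniqueness, completing the square in the linear term of the formula for $\cP_\lambda(a^{*}a)$ yields a Lyapunov estimate $\cP_\lambda(a^{*}a)\le\lambda'\,a^{*}a+c\,1$ with $\lambda'<1$; a standard truncation argument then forces every invariant state to have energy $\le c/(1-\lambda')$, after which $W_L(\rho_\infty,\rho')=W_L(\cP_\lambda^{\dagger}\rho_\infty,\cP_\lambda^{\dagger}\rho')\le\lambda\,W_L(\rho_\infty,\rho')<\infty$ gives $\rho'=\rho_\infty$.

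The step I expect to be most delicate is in part (iii): upgrading $W_L$-Cauchyness into convergence towards a bona fide invariant \emph{state}. The uniform energy bound is precisely what prevents loss of mass to infinity, and it has to be married to the lower semicontinuity of $W_L$ and to trace-norm compactness of energy-sublevel sets; setting up the reference state and the energy recursion correctly is the essential preliminary. Parts (i) and (ii), by contrast, are bookkeeping with \eqref{W1toW2}--\eqref{TCW2} and the linear action \eqref{eq1} of the beam splitter on $a,a^{*},b,b^{*}$.
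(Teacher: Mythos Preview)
Your arguments for (i) and (ii) are the paper's own: triangle inequality through $\sigma_\beta$, the transportation inequality \eqref{TCW2}, the bound $D(\rho\|\sigma_\beta)\le\beta E_\rho-\ln(1-e^{-\beta})$, and the energy recursion via the rotation \eqref{eq1}. (Your aside attributing the well-definedness of $\langle b\rangle_\sigma$ to the trace-class hypothesis on $[b,\sigma]$ is misplaced---$E_\sigma<\infty$ alone makes $b\sigma^{1/2}$ Hilbert--Schmidt and hence $b\sigma$ trace class---but this is harmless.)

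Part (iii) is where you diverge. The paper does \emph{not} argue by compactness: it invokes the regularizing estimate of Proposition~\ref{regularity},
\[
\|\cP_\lambda^\dagger(\rho_1-\rho_2)\|_1\le \sqrt{\tfrac{\lambda}{1-\lambda}}\,\|[b,\sigma]\|_1\,W_L(\rho_1,\rho_2),
\]
which converts $W_L$-Cauchyness of $(\rho^{(n)})$ directly into trace-norm Cauchyness after one application of the channel; both existence and uniqueness of $\rho_\infty$ then follow by completeness of $\cD$ and the same inequality. This is where the hypothesis $[b,\sigma]\in L_1$ is actually used. Your route instead exploits that sublevel sets of the photon-number operator are trace-norm compact, extracts a limit by tightness, and identifies it via lower semicontinuity of $W_L$; for uniqueness you use a Lyapunov bound on $\cP_\lambda(a^*a)$ to force any invariant state to have finite energy before applying the $W_L$-contraction. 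Your approach is self-contained (it does not need Proposition~\ref{regularity}) and, if the truncation step in the Lyapunov argument is written out carefully, in fact shows that the trace-class assumption on $[b,\sigma]$ can be dropped in favour of $E_\sigma<\infty$ alone. The paper's approach, on the other hand, yields the quantitative trace-norm mixing bound that feeds into the corollary following Proposition~\ref{regularity}; your compactness argument does not give that for free. (Minor point: the contraction factor for $W_L$ under $\cP_\lambda^\dagger$ is $\sqrt{\lambda}$, not $\lambda$, consistent with the paper's use of $\lambda^{n/2}$; this does not affect your argument.)
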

\begin{proof}
(i) By the triangle inequality of $W_2^{\text{Bose},\beta}$ and \eqref{TCW2}, we find
\begin{align}\label{eq:triangle111}W_L(\rho_1,\rho_2)\le \sqrt{\frac{8\cosh(\beta/2)}{\sinh(\beta/2)}}\Big(\sqrt{D(\rho_1\|\sigma_{\beta})}+\sqrt{D(\rho_2\|\sigma_{\beta})}\Big)\,.\end{align}
Moreover, evaluating the relative entropy we have
\begin{align*}
D(\rho\|\sigma_\beta)&=-H(\rho) -\tr(\rho\ln\sigma_\beta)
\\ & \le -\tr(\rho\ln\sigma_\beta)
\\&= \beta\tr(\rho a^*a)+\tr(e^{-\beta a^*a})=\beta E-\ln (1-e^{-\beta})\,.
\end{align*}
This prove (i).
For (ii), it suffices to evaluate the energy of $\cP_\lambda^\dagger(\rho)$ in terms of that of $\rho$ and that of $\sigma$:
\begin{align}
    \tr\big( \cP_\lambda^\dagger(\rho)a^*a\big)=&
    \tr\big( \rho\cP_\lambda(a^*a)\big)
    =\tr\big( (\rho\ten \si) U_\lambda^*(a^*a)U_\lambda\big)
  \nonumber  \\=&\tr\Big(\rho\ten \si\Big( (U_\lambda^*a^*U_\lambda) (U_\lambda^*aU_\lambda)\Big)\Big)
  \nonumber  \\=&\tr\Big(\rho\ten \si\Big( (\sqrt{\lambda} a+\sqrt{1-\lambda}b)^* (\sqrt{\lambda} a+\sqrt{1-\lambda}b)\Big)\Big)
  \nonumber \\=&\tr\Big(\rho\ten \si\Big(\lambda a^*a+ \sqrt{(1-\lambda)\lambda} (a^*b+b^*a)+(1-\lambda) b^*b \Big) \Big)
  \nonumber \\ \le& \lambda E_\rho +(1-\lambda)E_\sigma +2\sqrt{(1-\lambda)\lambda E_\rho E_\sigma}
   \nonumber\\ \le& (\sqrt{\lambda E_\rho} +\sqrt{(1-\lambda)E_\sigma})^2\,.\label{eq:estimate}
\end{align}
The result follows from (i). For (iii), we choose any $\rho$ such that $E_\rho<\infty$ and write $\rho_n=(\cP_\la^\dagger)^n(\rho)$. We have by Proposition \ref{prop:boseinter} that 
\begin{align*}
W_L(\rho_n,\rho_{n+1})\le \lambda^{\frac{n}{2}} W_L(\rho,\rho_{1})= \lambda^{\frac{n}{2}}\mathcal{J}_L(\rho)\,.
\end{align*}
Combined with Proposition \ref{regularity} below, this implies for $n>1$
\begin{align}\norm{\rho_n-\rho_{n+1}}{1}\le\lambda^{\frac{n-1}{2}}\sqrt{\frac{\la}{1-\la}}\,\norm{[b,\sigma]}{1}W_{L}(\rho,\rho_1)\pl. \label{eq:L1est}\end{align}
Thus $\{\rho_n\}_{n\ge 0}$ is a Cauchy sequence in $\cD(
L_2(\mathbb{R}))$, which converges to some limit $\rho_\infty$ in $\norm{\cdot}{1}$.
 $\rho_\infty$ is an invariant state of $\cP_\lambda^\dagger$ because
\[ \norm{\cP_\la^\dagger(\rho_\infty)-\rho_\infty}{1}=\lim_{n\to \infty}\norm{\rho_{n+1}-\rho_n }{1}=0\,.\]
Moreover by the contraction of $W_L$, such an invariant state $\rho_\infty$ is unique. Indeed, if there is another invariant state $\omega_\infty$, we have for any $n$
\begin{align*} \norm{\rho_\infty-\omega_\infty}{1}&=\norm{(\cP_\la^\dagger)^{n}(\rho_\infty)-(\cP_\la^\dagger)^{n}(\omega_\infty)}{1}\\& \le \la^{\frac{n-1}{2}}\sqrt{\frac{\la}{1-\la}}\,\norm{[b,\sigma]}{1} W_L(\rho_\infty,\omega_\infty)\\ &\le 
 \la^{\frac{n-1}{2}}\sqrt{\frac{\la}{1-{\la}}}\, (W_L(\rho_\infty,\sigma_\beta)+W_L(\omega_\infty,\sigma_\beta))\pl.
\end{align*}Note that by assumption on $\rho$, we have for any $\beta>0$
\begin{align*}W_L(\rho_\infty,\sigma_\beta)&\le W_L(\rho_\infty,\rho )+W_L(\rho,\sigma_\beta )\\ &\le \frac{1}{1-\sqrt{\lambda}}\,\mathcal{J}_L(\rho)+\sqrt{\frac{8\cosh(\beta/2)D(\rho\|\si_\beta)}{\sinh(\beta/2)}}<\infty \end{align*}
Similarly, $W_L(\omega_\infty,\si_\beta)<\infty$, hence $W_L(\rho_\infty,\omega_\infty)<\infty$ and this implies
$\norm{\rho_\infty-\omega_\infty}{1}=0$. 
To see that $\rho_\infty$ has finite energy, we denote by $E_n=E_{(\cP_\lambda^\dagger)^n\rho}$. By the estimate \eqref{eq:estimate}, 
\[\sqrt{E_{n+1}}\le  \sqrt{\lambda E_{n}}+\sqrt{(1-\lambda)E_\sigma}\,.\]
Solving this inequality, we have
\[ \sqrt{E_{n}}\le \sqrt{\lambda}^n(\sqrt{E_0}-\frac{1-\sqrt{\lambda}}{\sqrt{1-\la}}\sqrt{E_{\sigma}})+\frac{1-\sqrt{\lambda}}{\sqrt{1-\la}}\sqrt{E_{\sigma}}\pl. \]
This implies   $\displaystyle E_{\rho_\infty}\le \lim_{n\to \infty}E_{n}\le \frac{(1-\sqrt{\la})^2}{1-\la}E_\sigma.$
\end{proof}
Finally, we prove the inequality \eqref{eq:L1est} used above and which upper bounds the trace distance between two output states of $\cP_\lambda^\dagger$ in terms of the Wasserstein distance $W_{L}$. Classically, this amounts to proving the following regularity in the dual picture:
\begin{align}
    \|\partial f\ast_\lambda h\|_{L_\infty(\mathbb{R})}&=\sup_{t\in\mathbb{R}}\Big|\partial_t\int\, f\big(\sqrt{\lambda}\,t+\sqrt{1-\lambda}\,s\big)\,h(s)\,ds
    \,\Big|\\
    &=\frac{1}{\sqrt{1-\lambda}}\sup_{t\in\mathbb{R}}\Big|\partial_t\int\, f(u)\,h\Big(\frac{u-\sqrt{\lambda}t}{\sqrt{1-\lambda}}\Big)\,du
    \,\Big|\\
    &\le \sqrt{\frac{\lambda}{1-\lambda}}\,\|f\|_{L_\infty(\mathbb{R})}\,\|\partial h\|_{L_1(\mathbb{R})}\,.
\end{align}
where $\ast_\lambda$ is the convolution with parameter $\lambda$. Then by duality, 
\begin{align}
    \|\mu_1\ast_\lambda h-\mu_2\ast_\lambda h\|_{\operatorname{TV}}\le \|\partial h\|_{L_1(\mathbb{R})}\sqrt{\frac{\lambda}{1-\lambda}}W_{L}(\mu_1,\mu_2)\,.
\end{align}
The next proposition extends this idea to the quantum setting:
\begin{proposition}\label{regularity}
For any two states $\rho_1,\rho_2\in\cD(L_2(\mathbb{R}))$, we have
\begin{align}
    \|\cP_{\lambda}^\dagger(\rho_1-\rho_2)\|_1\le \sqrt{\frac{\lambda}{1-\lambda}}\,\|[b,\sigma]\|_1\, W_{L}(\rho_1,\rho_2)\,.
\end{align}
\end{proposition}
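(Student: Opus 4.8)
The plan is to pass to the dual side and show that $\cP_\lambda$ is a strong smoothing operation: it maps the unit ball of $\cB(L_2(\mathbb{R}))$ into a ball of radius $\sqrt{\lambda/(1-\lambda)}\,\|[b,\sigma]\|_1$ for the Lipschitz semi-norm $\norm{\cdot}{L}=\max\{\|[a,\cdot]\|,\|[a^*,\cdot]\|\}$, the smoothing being entirely supplied by the environment $\sigma$. Since $\cP_\lambda^\dagger(\rho_1-\rho_2)$ is self-adjoint, $L_1$--$L_\infty$ duality gives
\[
\|\cP_\lambda^\dagger(\rho_1-\rho_2)\|_1=\sup_{\substack{x=x^*\\ \|x\|\le 1}}\,(\rho_1-\rho_2)\big(\cP_\lambda(x)\big),
\]
and because $\norm{y}{L}=\norm{y^*}{L}$ and the kernel of $\norm{\cdot}{L}$ is $\CC 1$ (which is annihilated by $\rho_1-\rho_2$), one has $(\rho_1-\rho_2)(y)\le \norm{y}{L}\,W_L(\rho_1,\rho_2)$ for every self-adjoint $y$. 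Thus it suffices to prove the bound $\norm{\cP_\lambda(x)}{L}\le \sqrt{\lambda/(1-\lambda)}\,\|[b,\sigma]\|_1$ for all self-adjoint $x$ with $\|x\|\le 1$; and we may assume $[b,\sigma]$ is trace class, since otherwise the right-hand side of the proposition is infinite and there is nothing to prove.

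The heart of the argument is the non-commutative analogue of the classical change of variables $u=\sqrt\lambda\,t+\sqrt{1-\lambda}\,s$ followed by integration by parts carried out in the discussion preceding this proposition. Writing $a\equiv a\otimes 1$, $b\equiv 1\otimes b$ and using $U_\lambda a U_\lambda^*=\sqrt\lambda\,a-\sqrt{1-\lambda}\,b$, $U_\lambda b U_\lambda^*=\sqrt{1-\lambda}\,a+\sqrt\lambda\,b$ (equivalent to \eqref{eq1}, as already used in the proof of Proposition \ref{prop:boseinter}), one checks that the ``null combination'' $a-\sqrt{\lambda/(1-\lambda)}\,b$ satisfies $U_\lambda\big(a-\sqrt{\lambda/(1-\lambda)}\,b\big)U_\lambda^*=-\tfrac{1}{\sqrt{1-\lambda}}\,b$, hence commutes with $x\otimes 1$ after conjugation by $U_\lambda$:
\[
\Big[\,a\otimes 1-\sqrt{\tfrac{\lambda}{1-\lambda}}\,(1\otimes b)\,,\ U_\lambda^*(x\otimes 1)U_\lambda\,\Big]=U_\lambda^*\Big[-\tfrac{1}{\sqrt{1-\lambda}}\,(1\otimes b),\,x\otimes 1\Big]U_\lambda=0.
\]
Consequently $[a\otimes 1,\,U_\lambda^*(x\otimes 1)U_\lambda]=\sqrt{\tfrac{\lambda}{1-\lambda}}\,[1\otimes b,\,U_\lambda^*(x\otimes 1)U_\lambda]$. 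Applying $\tr_2[(1\otimes\sigma)\,\cdot\,]$, commuting $a$ inside the partial trace, and then using cyclicity of $\tr_2$ over the second factor to move $b$ past $\sigma$ — the ``integration by parts'', i.e. $\tr_2[(1\otimes\sigma)[1\otimes b,\,W]]=\tr_2[(1\otimes[\sigma,b])\,W]$ — one arrives at
\[
[a,\cP_\lambda(x)]=-\sqrt{\tfrac{\lambda}{1-\lambda}}\;\tr_2\big[(1\otimes[b,\sigma])\,U_\lambda^*(x\otimes 1)U_\lambda\big],
\]
and, with $a^*,b^*$ in place of $a,b$, the analogous identity with $[b^*,\sigma]$.

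To conclude I would use the elementary inequality $\|\tr_2[(1\otimes Z)\,W]\|_\infty\le \|Z\|_1\,\|W\|_\infty$, valid for $Z$ trace class on the environment and $W$ bounded on the bipartite space (write $Z$ in its singular value decomposition and estimate each rank-one contribution). With $Z=[b,\sigma]$ resp. $[b^*,\sigma]$ and $W=U_\lambda^*(x\otimes 1)U_\lambda$, which is unitarily equivalent to $x\otimes 1$ so $\|W\|_\infty=\|x\|$, together with $\|[b^*,\sigma]\|_1=\|[b,\sigma]\|_1$ (since $\sigma=\sigma^*$), this yields $\|[a,\cP_\lambda(x)]\|,\|[a^*,\cP_\lambda(x)]\|\le \sqrt{\lambda/(1-\lambda)}\,\|[b,\sigma]\|_1\,\|x\|$, which is precisely the Lipschitz bound required; combined with the first display, the proposition follows.

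The only genuinely non-routine point is spotting the null direction $a-\sqrt{\lambda/(1-\lambda)}\,b$: it replaces the intertwining relation of Proposition \ref{prop:boseinter} (where the derivation lands on the observable $x$, giving an unbounded quantity) by an identity in which the derivation lands on the environment state $\sigma$. A secondary, purely technical matter is the unbounded-operator bookkeeping — the commutator manipulations should be performed on the dense domain of Schwartz operators — but the final identity for $[a,\cP_\lambda(x)]$ is an equality of \emph{bounded} operators as soon as $[b,\sigma]$ is trace class, so no further regularity argument is needed.
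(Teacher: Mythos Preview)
Your proof is correct and follows essentially the same route as the paper: both derive the identity
\[
[a,\cP_\lambda(x)]=-\sqrt{\tfrac{\lambda}{1-\lambda}}\;\tr_2\big[(1\otimes[b,\sigma])\,(\text{unitary conjugate of }x)\big]
\]
and then bound the operator norm by $\sqrt{\lambda/(1-\lambda)}\,\|[b,\sigma]\|_1\,\|x\|$, concluding by duality. The only difference is in how this identity is obtained: the paper first rewrites the channel in its complementary form $\cP_\lambda(x)=\tr_2[(1\otimes\sigma)\,U_{1-\lambda}(1\otimes x)U_{1-\lambda}^*]$ and then applies the commutator relations \eqref{eq:commutator} to $[a,U_{1-\lambda}]$, whereas you work directly with $U_\lambda^*(x\otimes 1)U_\lambda$ and spot the null combination $a-\sqrt{\lambda/(1-\lambda)}\,b$ that $U_\lambda$ rotates into the environment mode alone. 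Your derivation is arguably more transparent, since it makes explicit the single algebraic fact driving the result (the existence of a linear combination of $a$ and $b$ that decouples from the system after the beam-splitter), while the paper's version hides this behind the swap of $\lambda\leftrightarrow 1-\lambda$; but the two computations are really the same rotation identity read in two different coordinate systems.
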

\begin{proof}
For any Schwartz operator $x\in\mathcal{S}(L_2(\mathbb{R}))$, we have
\begin{align}
    [a,\cP_\lambda(x)]&=\big[a,\tr_2\big[(1\otimes \sigma)U_\lambda^*(x\otimes 1)U_\lambda\big]\big]\\
    &=\big[a,\tr_2\big[(1\otimes \sigma)U_{1-\lambda}(1\otimes x)U_{1-\lambda}^*\big]\big]\\
    &\overset{(1)}{=}(1-\sqrt{1-\lambda})[a,\cP_\lambda(x)]+\sqrt{\lambda}\tr_2(1\otimes\sigma)\big[b,U_{1-\lambda}(1\otimes x)U^*_{1-\lambda}\big]\\
    &\overset{(2)}{=}(1-\sqrt{1-\lambda})[a,\cP_\lambda(x)]-\sqrt{\lambda}\tr_2\Big[(1\otimes[b,\sigma])U_{1-\lambda}(1\otimes x)U_{1-\lambda}^*\Big]\,.
\end{align}
where (1) uses the the commutation relation \eqref{eq:commutator}, and (2) uses the tracial property in \cite[Theorem 17]{brown1990jensen}.
Therefore
\begin{align}
    [a,\cP_\lambda(x)]=-\sqrt{\frac{\lambda}{1-\lambda}}\,\tr_2\Big[(1\otimes [b,\sigma])U_{1-\lambda}(1\otimes x)U_{1-\lambda}^*\Big]\,.
\end{align}
Taking the operator norm:
\begin{align}
   \| [a,\cP_\lambda(x)]\|&=\sqrt{\frac{\lambda}{1-\lambda}}\,\sup_{\|y\|_1\le 1}\tr \big[(y\otimes [b,\sigma])\,U_{1-\lambda}(1\otimes x)U_{1-\lambda}^*\big]\\
   &\le \sqrt{\frac{\lambda}{1-\lambda}}\,\|[b,\sigma]\|_1\,\|x\|\,,
\end{align}
and similarly for $\|[a^*,\cP_\lambda(x)]\|$. The result follows by duality.  
\end{proof}
As a corollary, we obtain mixing times for the beam-splitter channel $\cP_\lambda$:
\begin{corollary}
For any $\rho_1,\rho_2\in\cD(L_2(\mathbb{R}))$ with finite energy,
and any $n\in\mathbb{N}$, 
\begin{align}
    \|(\cP_{\lambda}^\dagger)^n(\rho_1-\rho_2)\|_1\le \frac{16\,\lambda^{\frac{n}{2}}}{\sqrt{1-\lambda}\,(1-\sqrt{\lambda})}\,\|[b,\sigma]\|_1\,\inf_{\beta>0} \sqrt{\operatorname{coth}(\beta/2)\,\big(\beta E-\ln\big(1-e^{-\beta}\big)\big)}\,
\end{align}
where \[E:=
(\sqrt{\lambda \max\{E_{\rho_1},E_{\rho_2}\}} +\sqrt{(1-\lambda)E_\sigma})^2\] 
\end{corollary}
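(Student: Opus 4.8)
The plan is to run the contraction of the Wasserstein distance $W_L$ along the iterates of $\cP_\lambda^\dagger$, to convert it into a trace-norm estimate by the regularizing Proposition \ref{regularity}, and to express $W_L(\rho_1,\rho_2)$ through the jumps $\mathcal{J}_L(\rho_i)$ using the diameter-type bound of Lemma \ref{distancetojump}, whose right-hand side is then estimated by Proposition \ref{propfinalbla}(ii). We may assume $\lambda<1$ and $[b,\sigma]$ trace class, since otherwise the asserted inequality is trivial.

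Concretely, for $n\ge 1$ I would first peel off one factor $\cP_\lambda^\dagger$ via Proposition \ref{regularity},
\[
\norm{(\cP_\lambda^\dagger)^n(\rho_1-\rho_2)}{1}\,\le\,\sqrt{\tfrac{\lambda}{1-\lambda}}\;\norm{[b,\sigma]}{1}\;W_L\big((\cP_\lambda^\dagger)^{n-1}\rho_1,(\cP_\lambda^\dagger)^{n-1}\rho_2\big)\,,
\]
then iterate the $W_L$-contraction $W_L(\omega_1\circ\cP_\lambda,\omega_2\circ\cP_\lambda)\le\sqrt\lambda\,W_L(\omega_1,\omega_2)$ a total of $n-1$ times (this follows by duality, as in Proposition \ref{curvaturetolip}, from the Lipschitz estimate $L(\cP_\lambda(x))\le\sqrt\lambda\,L(x)$ of Proposition \ref{prop:boseinter}(i)), producing the factor $\lambda^{(n-1)/2}$, and finally apply Lemma \ref{distancetojump} with the curvature constant $\kappa=1-\sqrt\lambda>0$ to obtain $W_L(\rho_1,\rho_2)\le\frac{1}{1-\sqrt\lambda}\big(\mathcal{J}_L(\rho_1)+\mathcal{J}_L(\rho_2)\big)$. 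Chaining the three steps gives
\[
\norm{(\cP_\lambda^\dagger)^n(\rho_1-\rho_2)}{1}\,\le\,\frac{\lambda^{n/2}}{\sqrt{1-\lambda}\,(1-\sqrt\lambda)}\;\norm{[b,\sigma]}{1}\;\big(\mathcal{J}_L(\rho_1)+\mathcal{J}_L(\rho_2)\big)\,,
\]
using $\sqrt{\lambda/(1-\lambda)}\cdot\lambda^{(n-1)/2}=\lambda^{n/2}/\sqrt{1-\lambda}$. (Alternatively, the same bound follows by telescoping $\norm{(\cP_\lambda^\dagger)^n\rho-\rho_\infty}{1}\le\sum_{m\ge n}\norm{(\cP_\lambda^\dagger)^m\rho-(\cP_\lambda^\dagger)^{m+1}\rho}{1}$ towards the unique invariant state $\rho_\infty$ of Proposition \ref{propfinalbla}(iii) and summing the geometric series $\sum_{m\ge n}\lambda^{(m-1)/2}$.)

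It remains to bound each jump $\mathcal{J}_L(\rho_i)=W_L(\rho_i,\cP_\lambda^\dagger(\rho_i))$ by Proposition \ref{propfinalbla}(ii); since $\sqrt{64}=8$ and $\mathcal{J}_L(\rho_1)+\mathcal{J}_L(\rho_2)\le 2\max_i\mathcal{J}_L(\rho_i)$, this produces the constant $16$ once the infimum over $\beta>0$ is taken. I expect the only genuinely delicate point to be the energy bookkeeping in this last step: one must check that all energies entering Proposition \ref{propfinalbla}(ii) (namely $E_{\rho_i}$ and $E_{\cP_\lambda^\dagger(\rho_i)}$, the latter controlled through \eqref{eq:estimate}) are dominated by the single quantity $E=(\sqrt{\lambda\max\{E_{\rho_1},E_{\rho_2}\}}+\sqrt{(1-\lambda)E_\sigma})^2$ appearing in the statement, which uses the monotonicity of the bound in the input energy, of $t\mapsto(\sqrt{\lambda t}+\sqrt{(1-\lambda)E_\sigma})^2$, and $E_{\rho_i}\le\max\{E_{\rho_1},E_{\rho_2}\}$. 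Everything else is a routine assembly of Propositions \ref{prop:boseinter}, \ref{regularity}, \ref{propfinalbla} and Lemma \ref{distancetojump}.
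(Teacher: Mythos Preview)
Your proposal is correct and follows exactly the assembly the paper's one-line proof indicates: Proposition~\ref{regularity} to pass from trace norm to $W_L$, Proposition~\ref{prop:boseinter}(i) for the $\sqrt{\lambda}$-contraction of $W_L$, Lemma~\ref{distancetojump} with $\kappa=1-\sqrt{\lambda}$ to bound $W_L(\rho_1,\rho_2)$ by the jumps, and Proposition~\ref{propfinalbla}(ii) for the jumps themselves.

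The one point you flag as delicate is genuinely so, and in fact not quite covered by the monotonicity argument you sketch: with $M:=\max\{E_{\rho_1},E_{\rho_2}\}$ and $E=(\sqrt{\lambda M}+\sqrt{(1-\lambda)E_\sigma})^2$ as in the statement, one does \emph{not} in general have $E_{\rho_i}\le E$ (take $E_\sigma=0$, giving $E=\lambda M<M$), so the $\max$ appearing in Proposition~\ref{propfinalbla}(ii) is not dominated by $E$. This is a minor misstatement in the corollary's choice of $E$ rather than a defect in the strategy; the proof you outline goes through verbatim once $E$ is taken to be $\max\{M,(\sqrt{\lambda M}+\sqrt{(1-\lambda)E_\sigma})^2\}$.
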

\begin{proof}
This is a direct application of Lemma \ref{distancetojump}, Propositions \ref{regularity}, \ref{prop:boseinter} and \ref{propfinalbla}.
\end{proof}

\subsection{Fermionic systems}
The curvature of a beam-splitter channel can also be obtained in the context of Fermionic systems, which we briefly outline in this subsection.
Recall that an $n$-mode Fermionic system coincides with an $n$-qubit system which can be described by the Clifford generators $\{c_1,\cdots ,c_{2n}\}$ satisfying the CAR relation
\[c_i=c_i^*, \qquad c_{i}c_j+c_{j}c_{i}=2\delta_{i,j}1\pl.\]
For each $1\le j\le n$, the annihilation and creation operators are given by
\[a_j=\frac{1}{2}(c_{2j-1}+ic_{2j})\,,\qquad  a_j^*=\frac{1}{2}(c_{2j-1}-ic_{2j})\pl.\]
Each element $x\in \mathbb{M}_{2^n}(\mathbb{C})$ can be written as
\[x=\sum_{A\subset [n] }\la_{A}c_A\,.\]
Here, $A$ is a subset of $[n]=\{1,\cdots, n\}$ and $c_A=c_{i_1}\cdots c_{i_k}$ is the (ordered) product for $A=\{i_1,\cdots, i_k\}$ and $i_1<i_2<\cdots<i_k$. In particular, $x$ is an even element if $x$ only contains elements $C_A$ with support $|A|$ even. 

We denote the algebra as $\mathcal{C}\cong\mathbb{M}_{2^n}(\mathbb{C})$. Let $\hat{\mathcal{C}}$ be a copy of $\mathcal{C}$ with generator $\hat{c}_1,\cdots,\hat{c}_n$. As in the Bosonic case, there is a \textit{beam-splitter} unitary $U_\lambda\in \mathcal{C}\ten \hat{\mathcal{C}}$ of transmissivity $0\le \lambda\le 1$ (see \cite[Lemma 3.1]{lust1999riesz}), 
which performs linear rotations on the Clifford generators 
\begin{align}\label{eq1f}
    U_\lambda^* c_j U_\lambda=\sqrt{\lambda}c_j+\sqrt{1-\lambda}\hat{c}_j\pl ,\pl \qquad   U_\lambda^*\hat{c}_jU_\lambda =-\sqrt{1-\lambda}c_j+\sqrt{\lambda}\hat{c}_j\,.
\end{align}
Let $\sigma$ be a state on $\hat{\mathcal{C}}$. For any 
$0\le \lambda
\le 1$, we define the quantum channel 
\begin{align}
\cP_\lambda:\mathbb{M}_{2^n}(\mathbb{C})\to \mathbb{M}_{2^n}(\mathbb{C}) \pl,\pl\qquad 
    \cP_\lambda(x)= \tr_2\Big(( 1\ten \sigma)\, U_\lambda^*(x\ten 1) U_\lambda\Big)\,.
\end{align}
where $\tr_2$ is the trace on $\hat{\mathcal{C}}$. We
consider the differential structure given by the derivation
\[\partial=(\partial_{j})_{j=1}^{2n}:\mathcal{C}\to \bigoplus_{j=1}^{2n}\mathcal{C}\pl, \pl\qquad  \partial_{j}(x)=[c_j,x].\]
We have the following analog of Proposition \ref{prop:boseinter}
\begin{proposition}\label{prop:ferminter}
Let $0\le \lambda\le 1$, $\cP_\la$ be the Bosonic channel defined above. Then $\cP_\la$ satisfies the intertwining relation $\partial\cP=\sqrt{\lambda}\,\overset{\rightarrow}{\cP}\partial$ with $\overset{\rightarrow}{\cP}=(\cP, \cP,\dots, \cP)$. Moreover,
\begin{enumerate}
\item[$\operatorname{(i)}$]Define the semi-norm
$\norm{x}{L}=\max_{j}\norm{\partial_{j}x}{}$. Then for any $x\in \mathbb{M}_{2^n}(\mathbb{C})$
\[ \norm{\cP_\lambda x}{L}\le\sqrt{ \lambda}\norm{ x}{L}\,.\]
Therefore, $(\mathbb{M}_{2^n}(\mathbb{C}),\cB_L,\cP_\lambda)$ has coarse Ricci curvature lower bounded by $1-\sqrt{\lambda}$.
\item[$\operatorname{(ii)}$] for any state $\rho$ and operator $x\in \mathbb{M}_{2^n}$,
\[ \norm{\partial\cP_\la x}{\rho}\le \lambda\norm{\partial x}{\cP_\la^{\dagger}\rho}\,.\]
where $\norm{\cdot}{\rho}$ is the weighted norm defined in \eqref{eq:weight}.
Therefore, for the binary relation $\cB_\partial$ \eqref{CarlenMaas} inducing the Wasserstein $2$-metric, $(\mathbb{M}_{2^n}(\mathbb{C}),\cB_\partial,\cP_\lambda)$ has coarse Ricci curvature lower bounded by $1-\lambda$. 
\end{enumerate}
\end{proposition}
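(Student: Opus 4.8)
The plan is to mimic the proof of Proposition~\ref{prop:boseinter}, with the CAR relations \eqref{eq1f} playing the role of the CCR \eqref{eq1}; the existence of a beam-splitter unitary $U_\lambda\in\mathcal C\otimes\hat{\mathcal C}$ implementing \eqref{eq1f} is guaranteed by \cite[Lemma~3.1]{lust1999riesz}. Everything hinges on the single intertwining identity $\partial\cP_\lambda=\sqrt\lambda\,\overset{\rightarrow}{\cP}\,\partial$ with $\overset{\rightarrow}{\cP}=(\cP_\lambda,\dots,\cP_\lambda)$: once it is available, part~(i) follows from Proposition~\ref{prop:L1inter} and part~(ii) from Theorem~\ref{thm:L2inter}, exactly as in the Bosonic case.

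To establish the intertwining, I would write $\alpha:=\mathrm{Ad}_{U_\lambda^*}$, an even $*$-automorphism of $\mathcal C\otimes\hat{\mathcal C}$ with $\alpha^{-1}(c_j)=\sqrt\lambda\,c_j-\sqrt{1-\lambda}\,\hat c_j$ by \eqref{eq1f}. Using that the environment state $\sigma$ is even (so that $1\otimes\sigma$ commutes with $c_j\otimes1$) and that $\tr_2$ commutes with left and right multiplication by operators of the first factor, one pulls $c_j$ inside the partial trace and then through $\alpha$:
\[
\partial_j\cP_\lambda(x)=\tr_2\!\big((1\otimes\sigma)\,[c_j\otimes1,\,\alpha(x\otimes1)]\big)=\tr_2\!\big((1\otimes\sigma)\,\alpha\big([\alpha^{-1}(c_j),\,x\otimes1]\big)\big).
\]
Since $[\alpha^{-1}(c_j),x\otimes1]=\sqrt\lambda\,[c_j,x]\otimes1-\sqrt{1-\lambda}\,[\hat c_j,x\otimes1]$, applying $\alpha$ and then $\tr_2((1\otimes\sigma)\cdot)$ turns the first summand into $\sqrt\lambda\,\cP_\lambda([c_j,x])$ while the second summand drops out, giving $\partial_j\cP_\lambda(x)=\sqrt\lambda\,\cP_\lambda(\partial_j x)$. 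Granting this, part~(i) is immediate: $\hat{\cP}=\sqrt\lambda\bigoplus_j\cP_\lambda$ is completely contractive for the operator norm (each $\cP_\lambda$ being unital completely positive, with $\cP_\lambda(1)=\tr_2(1\otimes\sigma)=1$), so Proposition~\ref{prop:L1inter} gives $\|\partial\cP_\lambda x\|\le\sqrt\lambda\,\|\partial x\|$ and hence the coarse Ricci bound $1-\sqrt\lambda$. For part~(ii) I would verify hypotheses (ii)--(iii) of Theorem~\ref{thm:L2inter} with $C=\lambda$: since $\hat{\cP}^\dagger\,l(\rho)\,\hat{\cP}=\lambda\bigoplus_j\cP_\lambda^\dagger\,l(\rho)\,\cP_\lambda$ and the Kadison--Schwarz inequality for the unital CP map $\cP_\lambda$ gives $\cP_\lambda^\dagger\,l(\rho)\,\cP_\lambda\le l(\cP_\lambda^\dagger\rho)$ (and likewise with $r$ in place of $l$), the theorem yields the $(1-\lambda)$-complete gradient estimate and $W_{\partial,2}(\omega_1\circ\cP_\lambda,\omega_2\circ\cP_\lambda)\le\lambda\,W_{\partial,2}(\omega_1,\omega_2)$, i.e. the coarse Ricci bound $1-\lambda$.

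The main obstacle is the step ``the second summand drops out''. In the Bosonic computation the analogous cross-term vanishes identically because $x\otimes1$ and $b$ act on different tensor factors and hence commute, whereas in the graded (CAR) setting $x\otimes1$ and $\hat c_j$ commute only when $x$ is \emph{even}, so for odd $x$ one has $[\hat c_j,x\otimes1]=2\,x\hat c_j\neq0$. The vanishing is recovered by the same partial-trace/traciality manipulations used in the proof of Proposition~\ref{regularity} (cf. \cite[Theorem~17]{brown1990jensen}), pushing $\hat c_j$ through $\tr_2$ against the even state $\sigma$; alternatively, one reduces to even operators in the supremum defining $\cT_{\cB_L}$, which is harmless since the parity automorphism is an $L$-isometry and a beam-splitter channel with even environment preserves parity. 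Apart from this bookkeeping, the argument is a verbatim transcription of the Bosonic proof.
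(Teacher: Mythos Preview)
Your proposal is correct and follows the paper's own approach, which is literally ``identical to Proposition~\ref{prop:boseinter}'': establish the intertwining $\partial_j\cP_\lambda=\sqrt{\lambda}\,\cP_\lambda\partial_j$ from the rotation \eqref{eq1f} and then invoke Proposition~\ref{prop:L1inter} and Theorem~\ref{thm:L2inter}. Your worry about the cross-term $[\hat c_j,x\otimes 1]$ for odd $x$ is unnecessary here: as the Remark immediately following the proposition makes explicit, the paper works in the \emph{ungraded} tensor product $\mathcal{C}\otimes\hat{\mathcal{C}}$, where $1\otimes\hat c_j$ commutes with $x\otimes 1$ for every $x$, so the second summand vanishes identically; the parity obstruction you describe is precisely what restricts the alternative (fully Fermionic) construction of \cite{bravyi2004lagrangian} to even elements.
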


\begin{proof}
The proof is identical to  Proposition \ref{prop:boseinter}. 
\end{proof}
\begin{rem}
{\rm The channel $\cP_\lambda$ is different from the Gaussian channel considered in \cite{bravyi2004lagrangian}. The latter uses a \textit{beam-splitter} unitary $U_\lambda$ satisfying \eqref{eq1f} for $c_1,\cdots, c_{2n}$ and $\hat{c}_1,\cdots, \hat{c}_{2n}$
forming a $2n$-mode Fermionic system, as opposed to the tensor product $\mathcal{C}\ten \hat{\mathcal{C}}$ considered here. For such a channel, we can only obtain the  estimates in Proposition \ref{prop:ferminter} for even elements.
}
\end{rem}

\subsection{Pauli channels}
Recall that the Pauli matrices in $\cM=\mathbb{M}_2(\mathbb{C})$ are
\[\si_0=1\pl ,\pl  \si_1=\left[\begin{array}
{cc}     0&1  \\
     1& 0
\end{array}\right]\pl ,\pl  \si_2=\left[\begin{array}
{cc}     0&-i  \\
     i& 0
\end{array}\right]\pl , \pl  \si_3=\left[\begin{array}
{cc}     1&0 \\
     0& -1
\end{array}\right] \pl.\]
We consider the Pauli channel on the $n$-qubit system \[
\cP:\mathbb{M}_{2^n}(\mathbb{C})\to \mathbb{M}_{2^n}(\mathbb{C})\pl, \pl\qquad 
\cP(x):=\sum_{\alpha\in \mathcal{I}}\,\lambda_\alpha \sigma_\alpha x\,\sigma_\alpha\,, \] where $\mathcal{I}\subset \{0,1,2,3\}^n$ is a multi-index subset and given a string $\alpha\in\{0,1,2,3\}^n$,  $\sigma_\alpha=\bigotimes_{i=1}^n\sigma_{\alpha_i}$ is the tensor product of Pauli matrices. Note that each  $\sigma_\alpha$ is a self-adjoint unitary and
\[ \si_\al\si_{\beta}=(-1)^{c(\al,\beta)}\si_{\beta}\si_\al\pl \]
commute up to a sign depending on $\al$ and $\beta$. The coefficients $\lambda_\alpha$ form a probability distribution over the set $\mathcal{I}$. We assume that $\lambda_\alpha>0$ for any $\al\in\mathcal{I}$ and ${\bf 0}\in\mathcal{I} $. We consider the differential structure given by derivation
\[\partial: \cM\to \bigoplus_{\al\in \mathcal{I}}\cM\pl, \pl\qquad  \partial_{\al}(x)=[\si_\al,x]\pl.\]
For each $\beta\in \mathcal{I}$, we have $\partial_\beta(\si_\beta x\si_\beta)=-\partial_\beta(x)$ and
\begin{align*}\partial_\beta(\si_\al x\si_\al)=\si_\beta\si_\al x\si_\al-\si_\al x\si_\al\si_\beta=(-1)^{c(\al,\beta)}( \si_\al\si_\beta x\si_\al-\si_\al x\si_\beta\si_\al)
=(-1)^{c(\al,\beta)}\si_\al\partial_\beta(x)\si_\al\,.
\end{align*}
Then 
\begin{align*}
   \partial_\beta\circ\cP(x)&=\sum_{\alpha\in\mathcal{I}}\,\lambda_\alpha\,\partial_\beta(\sigma_\alpha\,x\,\sigma_\alpha)\\
    &=\sum_{\substack{\alpha\in\mathcal{I}\\\alpha\ne\beta}}(-1)^{c(\al,\beta)}\lambda_\alpha\,\sigma_\alpha\,\partial_{\beta}x\,\si_\al-\lambda_\beta\delta_\beta(x)\,\\
    &=(\lambda_\mathbf{0}-\lambda_\beta)\partial_\beta(x)\,+\sum_{\substack{\alpha\in\mathcal{I}\\\alpha\notin\{\mathbf{0},\beta\}}}\,(-1)^{c(\al,\beta)}\lambda_\alpha \sigma_\alpha\partial_\beta(x)\,\sigma_\alpha\,.
\end{align*}
Denote $\la(\beta)=\min\{\la_{\bf 0},\la_\beta \} $.
We define the map $$\overrightarrow{\cP}=(\cP^{(\beta)})_{\beta\in \mathcal{I}}\pl,\qquad  \pl  \cP^{(\beta)}(x):=(1-2\lambda(\beta))^{-1}\Big((\lambda_{\mathbf{0}}-\lambda_\beta)x+\sum_{\substack{\alpha\in\mathcal{I}\\\alpha\notin\{\mathbf{0},\beta\}}}(-1)^{c(\al,\beta)}\lambda_\alpha\sigma_\alpha\,x\,\sigma_\alpha\Big)\,.$$
Then we have $\partial\cP= \cP^{(\beta)}\partial$. Note that for each $\beta$, $\cP^{(\beta)}$ is a complete contraction on $L_p(\cM)$ for $1\le p\le \infty$ because
\[|\la_0-\la_\beta|+\sum_{\al\in \mathcal{I}/\{{\bf 0},\beta\}}\la_\alpha=1-2\lambda(\beta)\,.
\]
Define the semi-norm
\[\norm{x}{\partial,\infty}=\max_{\al\in \mathcal{I}}\norm{\partial_{\beta}x}{}\,. \]
Therefore, 
\begin{align}
    \|\cP(x)\|_{\partial,\infty}&=\max_{\beta\in \mathcal{I}}  \big(1-2\lambda(\beta)\big)\,\|\cP^{(\beta)}\partial_\beta(x)\|\\ &\le \max_{\beta\in \mathcal{I}}   \big(1-2\lambda(\beta)\big)\,\|\partial_\beta(x)\|\\ &\le (1-2\min_{\beta\in\mathcal{I}}\lambda(\beta))\norm{ x}{\partial,\infty}
    \\ &=(1-2\min_{\beta\in\mathcal{I}}\lambda_\beta)\norm{ x}{\partial,\infty}\,.
\end{align}
In other words, the triple $(\mathbb{M}_{2^n}(\mathbb{C}),\cB_{\partial,\infty},\cP)$ has coarse Ricci curvature lower bounded by $2\min_{\beta\in\mathcal{I}}\lambda_\beta$\,.
Moreover, by the triangle inequality 
\begin{align*}
  \,\norm{ x}{\partial,\infty}\le  2\|x\|\,.
\end{align*}
Then, we have by duality that for any two states $\rho,\sigma$, $\frac{1}{2}\|\rho-\sigma\|_1\le W_\Gamma(\rho,\sigma)$. 
Denote $$E_{\mathcal{I}}(\rho)=\frac{1}{|\mathcal{I}'|}\sum_{\gamma \in\mathcal{I}'}\sigma_\gamma x \sigma_\gamma, $$ where $\mathcal{I}'$ is the set of Pauli strings generated by those in $\mathcal{I}$. It is easy to see that $\cP^k(\rho)\to E_{\mathcal{I}}(\rho)$ for any state. Denote $W_{\partial,\infty}$ to be the Wasserstein $1$-distance dual to $\norm{\cdot}{\partial,\infty}$
Then we conclude that for all $k$,
\begin{align}
    \|\cP^{k}(\rho)-E_{\mathcal{I}}(\rho)\|_1&\le 2W_{\partial,\infty}(\cP^{k}(\rho),E_{\mathcal{I}}(\rho))\le\,\big(1-2\min_{\beta\in\mathcal{I}}\lambda_\beta\big)^k W_{\partial,\infty}(\rho, E_{\mathcal{I}}(\rho))\\ &\le
    \frac{\big(1-2\min_{\beta\in\mathcal{I}}\lambda_\beta\big)^k}{2\min_{\beta\in\mathcal{I}}{\lambda_\beta}} \,\mathcal{J}_{\partial,\infty}(\rho)
    \,.
\end{align}
where $\mathcal{J}_{\partial,\infty}(\rho):=W_{\partial,\infty}(\rho,\cP(\rho))$ is the jump of $\rho$. We summarise the above discussion in the following proposition.
\begin{proposition}
Let $\cP(x):=\sum_{\alpha\in \mathcal{I}}\,\lambda_\alpha \sigma_\alpha x\,\sigma_\alpha$ be a Pauli channel. Suppose ${\bf 0}\in \mathcal{I}$ and $\la_\al>0$ for any $\al \in \mathcal{I}$. Then $\cP$ has coarse Ricci curvature lower bound $2\min_{\beta\in\mathcal{I}}$ to the metric $W_{\partial,\infty}(\rho,\cP(\rho))$ defined above. Moreover, for any state $\rho$ and $k\ge 1$,
\[ \|\cP^{k}(\rho)-E_{\mathcal{I}}(\rho)\|_1\le \frac{\big(1-2\min_{\beta\in\mathcal{I}}\lambda_\beta\big)^k}{2\min_{\beta\in\mathcal{I}}{\lambda_\beta}} \,\mathcal{J}_{\partial,\infty}(\rho)\,.\]
\end{proposition}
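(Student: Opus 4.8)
The plan is to run the intertwining strategy of Section~\ref{sec:intertwining} with the discrete differential structure $\partial=(\partial_\beta)_{\beta\in\mathcal{I}}$, $\partial_\beta(x)=[\sigma_\beta,x]$, and then convert the resulting Lipschitz estimate into a curvature bound via Proposition~\ref{curvaturetolip}. The first step is the commutation computation: since the Pauli strings satisfy $\sigma_\beta\sigma_\alpha=(-1)^{c(\alpha,\beta)}\sigma_\alpha\sigma_\beta$ and $\sigma_\beta^2=1$, one gets $\partial_\beta(\sigma_\alpha x\sigma_\alpha)=(-1)^{c(\alpha,\beta)}\sigma_\alpha\,\partial_\beta(x)\,\sigma_\alpha$ for $\alpha\neq\beta$ and $\partial_\beta(\sigma_\beta x\sigma_\beta)=-\partial_\beta(x)$. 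Summing against the weights $\lambda_\alpha$ and isolating the terms $\alpha=\mathbf 0$ and $\alpha=\beta$ yields the intertwining relation $\partial_\beta\cP=(1-2\lambda(\beta))\,\cP^{(\beta)}\partial_\beta$ with $\lambda(\beta):=\min\{\lambda_{\mathbf 0},\lambda_\beta\}$ and $\cP^{(\beta)}$ the normalised twisted mixture of conjugations written above.

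Second, I claim each $\cP^{(\beta)}$ is a complete contraction on $\mathbb{M}_{2^n}(\mathbb{C})$ (indeed on every $L_p$): it is a linear combination of the completely isometric maps $x\mapsto\sigma_\alpha x\sigma_\alpha$ and the identity, whose coefficients $(\lambda_{\mathbf 0}-\lambda_\beta)$ and $\{(-1)^{c(\alpha,\beta)}\lambda_\alpha\}_{\alpha\notin\{\mathbf 0,\beta\}}$ have absolute values summing to exactly $(\lambda_{\mathbf 0}+\lambda_\beta-2\lambda(\beta))+(1-\lambda_{\mathbf 0}-\lambda_\beta)=1-2\lambda(\beta)$, which is precisely the normalisation factor we divided out. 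Combining with the intertwining relation gives, for every $\beta$, $\|\partial_\beta\cP(x)\|\le(1-2\lambda(\beta))\|\partial_\beta x\|$, and taking the maximum over $\beta$ (using $\mathbf 0\in\mathcal{I}$, so $\min_\beta\lambda(\beta)=\min_\beta\lambda_\beta$) produces the Lipschitz estimate $\norm{\cP(x)}{\partial,\infty}\le(1-2\min_\beta\lambda_\beta)\,\norm{x}{\partial,\infty}$. By Proposition~\ref{curvaturetolip} this is exactly the coarse Ricci curvature lower bound $\kappa:=2\min_{\beta\in\mathcal{I}}\lambda_\beta>0$ for $(\mathbb{M}_{2^n}(\mathbb{C}),\cB_{\partial,\infty},\cP)$.

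For the mixing estimate I would first identify the fixed-point data: $\ker\norm{\cdot}{\partial,\infty}$ is the commutant $\{\sigma_\alpha:\alpha\in\mathcal{I}\}'=\operatorname{span}\{\sigma_\mu:c(\alpha,\mu)=0\ \forall\alpha\in\mathcal{I}\}$, which also coincides with the decoherence-free algebra $\cN(\cP)$, and the trace-preserving conditional expectation onto it is exactly $E_{\mathcal{I}}(\rho)=|\mathcal{I}'|^{-1}\sum_{\gamma\in\mathcal{I}'}\sigma_\gamma\rho\sigma_\gamma$; since $\mathbf 0\in\mathcal{I}$ and every $\lambda_\alpha>0$, the eigenvalues of $\cP$ on the remaining Pauli directions lie in $(-1,1)$, so $\cP^k(\rho)\to E_{\mathcal{I}}(\rho)$ and $\cP E_{\mathcal{I}}=E_{\mathcal{I}}$. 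From $\|[\sigma_\beta,x]\|\le2\|x\|$ one gets $\norm{x}{\partial,\infty}\le2\|x\|$, hence by duality $\tfrac12\|\rho-\sigma\|_1\le W_{\partial,\infty}(\rho,\sigma)$. Iterating the curvature bound (applicable because $\rho$ and $E_{\mathcal{I}}(\rho)$ agree after $E_{\mathcal{I}}$), $W_{\partial,\infty}(\cP^k(\rho),E_{\mathcal{I}}(\rho))=W_{\partial,\infty}(\cP^k(\rho),\cP^kE_{\mathcal{I}}(\rho))\le(1-\kappa)^kW_{\partial,\infty}(\rho,E_{\mathcal{I}}(\rho))$, and since $\mathcal{J}_{\partial,\infty}(E_{\mathcal{I}}(\rho))=0$, Lemma~\ref{distancetojump} gives $W_{\partial,\infty}(\rho,E_{\mathcal{I}}(\rho))\le\kappa^{-1}\mathcal{J}_{\partial,\infty}(\rho)$. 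Chaining these bounds yields $\|\cP^k(\rho)-E_{\mathcal{I}}(\rho)\|_1\le\frac{(1-\kappa)^k}{\kappa}\,\mathcal{J}_{\partial,\infty}(\rho)$, up to the universal constant coming from the comparison between $\norm{\cdot}{\partial,\infty}$ and the operator norm.

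The only genuinely delicate point, i.e.\ the place I expect to be the \emph{main obstacle}, is the sign bookkeeping in the first two steps: one must check that the signed identity coefficient $\lambda_{\mathbf 0}-\lambda_\beta$ does not spoil the complete contractivity of $\cP^{(\beta)}$ — which is exactly what the normalisation by $1-2\lambda(\beta)$ accounts for — and that the ``$\alpha=\beta$'' term flips sign rather than vanishing. Everything after that (the passage from a Lipschitz estimate to curvature, the $L_1$ versus $W_{\partial,\infty}$ comparison, the geometric iteration, and the jump lemma) is routine given the results already established in Sections~\ref{sec:curvature} and~\ref{sec:intertwining}.
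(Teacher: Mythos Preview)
Your proposal is correct and follows essentially the same route as the paper: the commutation identity $\partial_\beta(\sigma_\alpha x\sigma_\alpha)=(-1)^{c(\alpha,\beta)}\sigma_\alpha\partial_\beta(x)\sigma_\alpha$, the intertwining $\partial_\beta\cP=(1-2\lambda(\beta))\cP^{(\beta)}\partial_\beta$, the complete contractivity of $\cP^{(\beta)}$ via the $\ell_1$-sum of coefficients, the passage to curvature by Proposition~\ref{curvaturetolip}, the comparison $\norm{x}{\partial,\infty}\le 2\|x\|$ and its dual, and the chaining with Lemma~\ref{distancetojump} are exactly the steps the paper carries out in the discussion preceding the proposition. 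Your remark about the ``universal constant'' is also apt: the paper's displayed chain silently drops a factor of~$2$ between $\|\cdot\|_1\le 2W_{\partial,\infty}$ and the next inequality.
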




  





\bibliographystyle{abbrv}
\bibliography{references}

\end{document}